\documentclass[11pt]{article}

\usepackage[a4paper]{geometry} 

\usepackage[utf8]{inputenc}
\usepackage{amssymb}
\usepackage{makeidx}
\usepackage[english]{babel}
\usepackage{graphicx}
\usepackage{amsfonts,amsmath,amssymb,amsthm}

\usepackage{mathrsfs}
\usepackage[active]{srcltx}
\usepackage{verbatim}
\usepackage{enumitem} 
\usepackage[backend=biber, style=alphabetic,doi=false,isbn=false,url=false]{biblatex}
\addbibresource{../Bibliography/bibliography.bib} 
\usepackage[toc,page]{appendix}
\usepackage{aliascnt,bbm}
\usepackage{array}
\usepackage[colorlinks = true, citecolor = blue]{hyperref}
\usepackage[textwidth=3cm,textsize=footnotesize]{todonotes}
\usepackage{xargs}
\usepackage{cellspace}
\usepackage{color}

\usepackage[Symbolsmallscale]{upgreek}


\usepackage{array,multirow,makecell}
\setcellgapes{4pt}
\makegapedcells
\newcolumntype{R}[1]{>{\raggedleft\arraybackslash }b{#1}}
\newcolumntype{L}[1]{>{\raggedright\arraybackslash }b{#1}}
\newcolumntype{C}[1]{>{\centering\arraybackslash }b{#1}}

\usepackage{accents}
\usepackage{dsfont}
\usepackage{aliascnt}
\usepackage{cleveref}
\makeatletter
\newtheorem{theorem}{Theorem}
\crefname{theorem}{theorem}{Theorems}
\Crefname{Theorem}{Theorem}{Theorems}

\newaliascnt{lemma}{theorem}
\newtheorem{lemma}[lemma]{Lemma}
\aliascntresetthe{lemma}
\crefname{lemma}{lemma}{lemmas}
\Crefname{Lemma}{Lemma}{Lemmas}

\newaliascnt{corollary}{theorem}

\aliascntresetthe{corollary}
\crefname{corollary}{corollary}{corollaries}
\Crefname{Corollary}{Corollary}{Corollaries}

\newaliascnt{proposition}{theorem}
\newtheorem{proposition}[proposition]{Proposition}
\aliascntresetthe{proposition}
\crefname{proposition}{proposition}{propositions}
\Crefname{Proposition}{Proposition}{Propositions}

\newaliascnt{definition}{theorem}

\aliascntresetthe{definition}
\crefname{definition}{definition}{definitions}
\Crefname{Definition}{Definition}{Definitions}

\newaliascnt{definition-proposition}{theorem}

\aliascntresetthe{definition-proposition}
\crefname{definition-proposition}{definition-proposition}{definitions-propositions}
\Crefname{Definition-Proposition}{Definition-Proposition}{Definitions-Propositions}

\newaliascnt{remark}{theorem}
\newtheorem{remark}[remark]{Remark}
\aliascntresetthe{remark}
\crefname{remark}{remark}{remarks}
\Crefname{Remark}{Remark}{Remarks}

\crefname{example}{example}{examples}
\Crefname{Example}{Example}{Examples}

\crefname{figure}{figure}{figures}
\Crefname{Figure}{Figure}{Figures}

\newtheorem{assumption}{\textbf{H}\hspace{-3pt}}
\Crefname{assumption}{\textbf{H}\hspace{-3pt}}{\textbf{H}\hspace{-3pt}}
\crefname{assumption}{\textbf{H}}{\textbf{H}}

\Crefname{probleme}{\textbf{Problem}\hspace{-3pt}}{\textbf{Problem}\hspace{-3pt}}
\crefname{probleme}{\textbf{Problem}}{\textbf{Problem}}

\newtheorem{assumptionA}{\textbf{A}\hspace{-3pt}}
\Crefname{assumptionA}{\textbf{A}\hspace{-3pt}}{\textbf{A}\hspace{-3pt}}
\crefname{assumptionA}{\textbf{A}}{\textbf{A}}

\Crefname{assumptionG}{\textbf{G}\hspace{-3pt}}{\textbf{G}\hspace{-3pt}}
\crefname{assumptionG}{\textbf{G}}{\textbf{G}}

\makeatother




\def\Lspace{\operatorname{L}}

\newcommandx{\Uga}[1][1=]{\ifthenelse{\equal{#1}{}}{U^{\gaStep}}{ U^{#1}} }
\newcommand{\Ugad}{\dr}
\def\pU{U}
\def\nablaU{\nabla \pU}
\def\invpi{\pi}
\def\invpig{\pi_\gastep}
\def\LG{L}
\def\mU{m}
\def\kappaU{\kappa}

\def\LH{L_H}
\def\nue{\nu}
\def\betah{\beta}
\def\CH{C_{H}}

\newcommandx{\dr}[1][1=]{\ifthenelse{\equal{#1}{}}{G_{\gaStep}}{ G_{\gastep,#1}} }
\def\Cal{C_\alpha}
\def\alphag{\alpha}

\def\TU{H_{\gastep}}
\def\TUc{H_{\gastep,c}}

\def\rker{R_{\gastep}}
\def\sgP{P}

\def\lyapa{V_{\a}}
\def\lyape{V_{\ae}}

\def\rhoa{\rho_{a}}
\def\Ca{C_{a}}
\def\rhoae{\rho_{\ae/2}}
\def\Cae{C_{\ae/2}}
\def\kga{k_\gastep}
\def\rga{r_\gastep}
\def\qga{q_\gastep}

\def\XE{X}
\def\XEL{\Ybar}
\def\XL{Y}
\def\ZE{Z}

\def\Ybar{\overline{Y}}

\def\mubar{\overline{\mu}}

\def\be{b}
\def\a{a}
\def\ce{c}
\def\MM{M}

\def\diag{\operatorname{diag}}
\def\psig{x}
\def\nablat{\tilde{\nabla}}
\def\pginz{p}

\newcommandx{\bL}[1][1=]{\ifthenelse{\equal{#1}{}}{b}{b_{#1}} }

\def\sPoi{\phi}

\def\sPoihat{\hat{\phi}}
\def\sPoiN{\sPoi_N}
\def\inc{\delta}
\def\tf{f}
\def\tfb{\bar{f}}
\def\DD{\operatorname{D}}
\def\rr{r}
\def\nZ{Z}
\def\SS{S}
\def\MM{M}

\newcommandx{\bp}[1][1=]{\ifthenelse{\equal{#1}{}}{b_{p}}{b_{#1}}}

\def\Cpoly{\mathrm{C}_{\mathrm{poly}}}
\newcommandx{\polyga}[1][1=]{\ifthenelse{\equal{#1}{}}{\mathsf{P}_{\gastep}}{\mathsf{P}_{\gastep, #1}}}
\newcommandx{\polyp}[1][1=]{\ifthenelse{\equal{#1}{}}{\mathsf{P}}{\mathsf{P}_{#1}}}
\newcommandx{\polygaq}[1][1=]{\ifthenelse{\equal{#1}{}}{\mathsf{Q}_\gastep}{\mathsf{Q}_{\gastep, #1}}}

\def\gastep{\gamma}
\def\gaStep{\gamma}

\def\tildeVx{\tilde{V}_x}

\def\nablaS{\nabla^2}

\def\N{N}

\def\eventA{\mathsf{A}}
\def\varx{x}
\def\testfunh{h}


\def\xstar{x^\star}
\def\Tr{\operatorname{T}}

\newcommandx{\functionspace}[2][1=+]{\mathbb{F}_{#1}(#2)}

\newcommandx{\VarDeux}[3][3=]{\operatorname{Var}^{#3}_{#1}\left\{#2 \right\}}

\newcommand{\1}{\mathbbm{1}}

\newcommand{\Borel}{\mathcal{B}}

\newcommand{\LeftEqNo}{\let\veqno\@@leqno}


\newcommand{\floor}[1]{\left\lfloor #1 \right\rfloor}
\newcommand{\ceil}[1]{\left\lceil #1 \right\rceil}




\newcommand{\PE}{\mathbb{E}}
\newcommand{\PP}{\mathbb{P}}

\newcommand{\absolute}[1]{\left\vert #1 \right\vert}
\newcommand{\abs}[1]{\left\vert #1 \right\vert}
\newcommand{\tvnorm}[1]{\| #1 \|_{\mathrm{TV}}}

\newcommandx{\Vnorm}[2][1=V]{\| #2 \|_{#1}}
\newcommandx{\VnormEq}[2][1=V]{\left\| #2 \right\|_{#1}}
\newcommandx{\norm}[2][1=]{\ifthenelse{\equal{#1}{}}{\left\Vert #2 \right\Vert}{\left\Vert #2 \right\Vert^{#1}}}

\newcommandx{\normLigne}[2][1=]{\ifthenelse{\equal{#1}{}}{\Vert #2 \Vert}{\Vert #2\Vert^{#1}}}

\newcommand{\parenthese}[1]{\left(#1 \right)}

\newcommand{\defEns}[1]{\left\lbrace #1 \right\rbrace }
\newcommand{\defEnsLigne}[1]{\lbrace #1 \rbrace }

\newcommand{\ps}[2]{\left\langle#1,#2 \right\rangle}



\newcommandx\probaMarkovTilde[2][2=]
{\ifthenelse{\equal{#2}{}}{{\widetilde{\mathbb{P}}_{#1}}}{\widetilde{\mathbb{P}}_{#1}\left[ #2\right]}}

\newcommand{\expe}[1]{\PE \left[ #1 \right]}

\newcommand{\expeMarkov}[2]{\PE_{#1} \left[ #2 \right]}





\newcommand{\couplage}[2]{\Pi(#1,#2)}

\newcommand{\filtration}{\mathcal{F}}



\newcommand{\plusinfty}{+\infty}



\newcounter{hypoconbis}
\newcounter{saveconbis}
\newcommand\debutH{\begin{list}
{\textbf{H\arabic{hypoconbis}}}{\usecounter{hypoconbis}}\setcounter{hypoconbis}{\value{saveconbis}}}
\newcommand\finH{\end{list}\setcounter{saveconbis}{\value{hypoconbis}}}

\def\ie{\textit{i.e.}}
\def\eqsp{\;}
\newcommand{\coint}[1]{\left[#1\right)}
\newcommand{\ocint}[1]{\left(#1\right]}
\newcommand{\ooint}[1]{\left(#1\right)}
\newcommand{\ccint}[1]{\left[#1\right]}

\newcommandx{\weight}[2][2=n]{\omega_{#1,#2}^N}

\newcommand{\boule}[2]{\operatorname{B}(#1,#2)}
\newcommand{\boulefermee}[2]{\overline{\operatorname{B}}(#1,#2)}

\def\rmd{\mathrm{d}}
\newcommandx\sequence[3][2=,3=]
{\ifthenelse{\equal{#3}{}}{\ensuremath{\{ #1_{#2}\}}}{\ensuremath{\{ #1_{#2}, \eqsp #2 \in #3 \}}}}
\newcommandx{\sequencen}[2][2=n\in\N]{\ensuremath{\{ #1, \eqsp #2 \}}}
\newcommandx\sequenceDouble[4][3=,4=]
{\ifthenelse{\equal{#3}{}}{\ensuremath{\{ (#1_{#3},#2_{#3}) \}}}{\ensuremath{\{  (#1_{#3},#2_{#3}), \eqsp #3 \in #4 \}}}}
\newcommandx{\sequencenDouble}[3][3=n\in\N]{\ensuremath{\{ (#1_{n},#2_{n}), \eqsp #3 \}}}
\newcommand{\wrt}{w.r.t.}

\def\iid{i.i.d.}
\def\rme{\mathrm{e}}

\def\eg{e.g.}

\def\rset{\mathbb{R}}

\def\nset{\mathbb{N}}

\newcommandx{\CPE}[3][1=]{{\mathbb E}^{#3}_{#1}\left[#2 \right]} 
\newcommandx{\CPVar}[3][1=]{\mathrm{Var}^{#3}_{#1}\left\{ #2 \right\}}
\newcommand{\CPP}[3][]
{\ifthenelse{\equal{#1}{}}{{\mathbb P}\left(\left. #2 \, \right| #3 \right)}{{\mathbb P}_{#1}\left(\left. #2 \, \right | #3 \right)}}

\def\generator{\mathscr{A}}

\newcommandx{\osc}[2][1=]{\mathrm{osc}_{#1}(#2)}

\newcommand{\chunk}[4][]%
{\ifthenelse{\equal{#1}{}}{\ensuremath{{#2}_{#3:#4}}}{\ensuremath{#2^#1}_{#3:#4}}
}

\def\KL{\operatorname{KL}}
\def\Id{\operatorname{Id}}

\newcommand\density[2]{\frac{\rmd #2}{\rmd #1}}
\newcommand\densityLigne[2]{\rmd #2/\rmd #1}

\def\Csetfunction{\mathrm{C}}

\def\VStar{V}

\def\boreleanA{\mathrm{A}}

\title{The Tamed Unadjusted Langevin Algorithm}

\author{Nicolas Brosse \textsuperscript{1}, Alain Durmus \textsuperscript{2}, \'Eric Moulines \textsuperscript{1} and Sotirios Sabanis \textsuperscript{3}}
\date{\today}

\begin{document}

\footnotetext[1]{Centre de Math\'ematiques Appliqu\'ees, UMR 7641, Ecole Polytechnique, France. \\
Emails: nicolas.brosse@polytechnique.edu, eric.moulines@polytechnique.edu }
\footnotetext[2]{Ecole Normale Sup\'erieure
CMLA
61, Av. du Pr\'esident Wilson
94235 Cachan Cedex, France \\
Email: alain.durmus@cmla.ens-cachan.fr}
\footnotetext[3]{
University of Edinburgh, Scotland, UK.
Email: s.sabanis@ed.ac.uk }

\maketitle

\begin{abstract}
In this article, we consider the problem of sampling from a probability measure $\pi$ having a density on $\mathbb{R}^d$ proportional to $x\mapsto \mathrm{e}^{-U(x)}$.
The Euler discretization of the Langevin stochastic differential equation (SDE) is known to be unstable,
when the potential $U$ is superlinear.
Based on previous works on the taming of superlinear drift coefficients for SDEs, we introduce the Tamed Unadjusted Langevin Algorithm (TULA) and obtain non-asymptotic bounds in $V$-total variation norm and Wasserstein distance of order $2$ between the iterates of TULA and $\pi$, as well as weak error bounds.
Numerical experiments are presented which support our findings.
\end{abstract}


\section{Introduction}
\label{sec:introduction}

The Unadjusted Langevin Algorithm (ULA) first introduced in the physics literature by \cite{parisi:1981} and popularized in the computational statistics community by \cite{grenander:1983} and \cite{grenander:miller:1994} is a technique to sample complex and high-dimensional probability distributions. This issue has far-reaching consequences in Bayesian statistics and machine learning \cite{andrieu:defreitas:doucet:jordan:2003}, \cite{cotter:roberts:stuart:white:2013}, aggregation of estimators \cite{dalalyan:tsybakov:2012} and molecular dynamics \cite{lelievre_stoltz_2016}.
More precisely, let $\invpi$ be a probability distribution on $\rset^d$
which has density (also denoted by $\invpi$) with respect to the
Lebesgue measure given for all $x\in\rset^d$ by,
\begin{equation*}
  \pi(x) = \rme^{-U(x)} \big/ \int_{\rset^d} \rme^{-U(y)} \rmd y \eqsp, \quad \text{with} \quad \int_{\rset^d} \rme^{-U(y)} \rmd y < \plusinfty \eqsp.
\end{equation*}
Assuming that $\pU:\rset^d\to\rset$ is continuously differentiable,
the overdamped Langevin stochastic differential equation (SDE)
associated with $\pi$ is given by
\begin{equation}
\label{eq:langevin}
  \rmd \XL_t = -\nabla U(\XL_t) \rmd t + \sqrt{2} \rmd B_t \eqsp,
\end{equation}
where $(B_t)_{t\geq 0}$ is a $d$-dimensional Brownian motion.
The discrete time Markov chain associated with the ULA algorithm is obtained by the Euler-Maruyama discretization scheme of the Langevin SDE
defined for $k\in\nset$ by,
\begin{equation}
\label{eq:def-euler}
  \XE_{k+1} = \XE_k - \gaStep \nablaU(\XE_k) + \sqrt{2\gaStep} \ZE_{k+1} \eqsp, \eqsp \XE_0 = x_0 \eqsp,
\end{equation}
where $x_0\in\rset^d$, $\gastep>0$ and $(\ZE_k)_{k\in\nset}$ are \iid~standard $d$-dimensional Gaussian variables.
Under adequate assumptions on a globally Lipschitz $\nablaU$, non-asymptotic bounds in total variation and Wasserstein distances between the distribution of $(\XE_k)_{k\in\nset}$ and $\invpi$ can be found in \cite{dalalyan:2017}, \cite{durmus2017}, \cite{2016arXiv160501559D}.
However, the ULA algorithm is unstable if $\nablaU$ is superlinear \ie~$\liminf_{\norm{x}\to\plusinfty} \norm{\nablaU(x)}/\norm{x} = \plusinfty$, see \cite[Theorem 3.2]{roberts:tweedie:1996}, \cite{mattingly:stuart:higham:2002} and \cite{Hutzenthaler1563}.
This is illustrated with a particular example in \cite[Lemma 6.3]{mattingly:stuart:higham:2002} where, the SDE \eqref{eq:langevin} is considered in one dimension with $\pU(x)=x^4/4$ along with the associated Euler discretization \eqref{eq:def-euler} and it is shown that for all $\gastep>0$, if $\expe{\XE_0^2} \geq 2 / \gastep$, one obtains $\lim_{n\to\plusinfty}\expe{X_n^2} = \plusinfty$. Moreover, the sample path $(X_n)_{n\in\nset}$ diverges to infinity with positive probability.

Until recently, either implicit numerical schemes, e.g. see \cite{mattingly:stuart:higham:2002} and \cite{HMS}, or adaptive stepsize schemes, e.g. see \cite{lamba:mattingly:stuart:2007}, were used to address this problem. However, in the last few years, a new generation of explicit numerical schemes, which are computationally efficient, has been introduced by ``taming'' appropriately the superlinearly growing drift, see \cite{hutzenthaler2012} and \cite{sabanis2013} for more details.


Nonetheless, with the exception of \cite{mattingly:stuart:higham:2002}, these works focus on the discretization of SDEs with superlinear coefficients in finite time.
We aim at extending these techniques to sample from $\invpi$, the invariant measure of \eqref{eq:langevin}.
To deal with the superlinear nature of $\nablaU$,
we introduce a family of drift functions $(\dr)_{\gastep> 0}$ with $\dr:\rset^d\to\rset^d$ indexed by the step size $\gastep$ which are close approximations of $\nablaU$ in a sense made precise below.
Consider then the following Markov chain $(\XE_k)_{k\in\nset}$ defined for all $k \in\nset$ by
\begin{equation}
\label{eq:def_tamed_euler_1}
  \XE_{k+1} = \XE_k - \gaStep \Ugad(\XE_k) + \sqrt{2\gaStep} \ZE_{k+1} \eqsp, \eqsp \XE_0 = x_0 \eqsp.
\end{equation}
We suggest two different explicit choices for the family $(\dr)_{\gastep> 0}$ based on previous studies on the tamed Euler scheme \cite{hutzenthaler2012}, \cite{sabanis2013}, \cite{hutzenthaler2015numerical}. Define for all $\gastep>0$, $\TU, \TUc:\rset^d\to\rset^d$ for all $x\in\rset^d$ by
\begin{equation}
\label{eq:def-TU-TUc}
\TU (x) = \frac{\nabla U(x)}{1+ \gaStep \norm{\nabla U(x)}} \quad \text{and} \quad
\TUc (x) = \parenthese{\frac{\partial_i \pU(x)}{1+\gastep \absolute{\partial_i \pU(x)}}}_{i\in\defEns{1,\ldots,d}} \eqsp,
\end{equation}
where $\partial_i \pU$ is the $i^{\text{th}}$-coordinate of $\nablaU$.
The Euler scheme \eqref{eq:def_tamed_euler_1} with $\dr=\TU$, respectively $\dr=\TUc$, is referred to as the Tamed Unadjusted Langevin Algorithm (TULA), respectively the coordinate-wise Tamed Unadjusted Langevin Algorithm (TULAc).

Another line of work has focused on
the Metropolis Adjusted Langevin Algorithm (MALA) that consists in adding a Metropolis-Hastings step to the ULA algorithm. \cite{rabee:hairer:2013} provides a detailed analysis of MALA in the case where the drift coefficient is superlinear. Note also that a normalization of the gradient was suggested in \cite[Section 1.4.3]{roberts:tweedie:1996} calling it MALTA (Metropolis Adjusted Langevin Truncated Algorithm) and analyzed in \cite{Atchade2006} and \cite{CPA:CPA20306}.

The article is organized as follows. In \Cref{sec:conv-general},
the Markov chain $(X_k)_{k\in\nset}$ defined by \eqref{eq:def_tamed_euler_1} is shown to be $V$-geometrically ergodic \wrt~an invariant measure $\invpig$.
Non-asymptotic bounds between the distribution of $(X_k)_{k\in\nset}$ and $\invpi$ in total variation and Wasserstein distances are provided, as well as weak error bounds. In \Cref{sec:examples},
the methodology is illustrated through numerical examples. Finally, proofs of the main results appear in \Cref{sec:proofs}.

\section*{Notations}



Let $\mathcal{B}(\rset^d)$ denote the Borel $\sigma$-field of $\rset^d$. Moreover, let $\Lspace^1(\mu)$ be the set of $\mu$-integrable functions for $\mu$ a probability measure on $(\rset^d, \Borel(\rset^d))$. Further, $\mu(\tf)=\int_{\rset^d} \tf(x) \rmd \mu(x)$ for an $\tf\in\Lspace^1(\mu)$.
Given a Markov kernel $R$ on $\rset^d$, for all $x\in\rset^d$ and $\tf$ integrable under $R(x,\cdot)$, denote by $R \tf(x) = \int_{\rset^d} \tf(y) R(x, \rmd y)$.
Let $V: \rset^d \to \coint{1,\infty}$ be a measurable function.
The $V$-total variation distance between $\mu$ and $\nu$ is defined as
$\Vnorm[V]{\mu-\nu} = \sup_{\absolute{f} \leq V} \abs{\mu(f) - \nu(f)}$.
If $V = 1$, then $\Vnorm[V]{\cdot}$ is the total variation  denoted by $\tvnorm{\cdot}$.
Let $\mu$ and $\nu$ be two probability measures on a state space $\Omega$ with a given $\sigma$-algebra. If $\mu \ll \nu$, we denote by $\densityLigne{\nu}{\mu}$ the Radon-Nikodym derivative of $\mu$ \wrt~$\nu$. In that case, the Kullback-Leibler divergence of $\mu$ \wrt~to $\nu$ is defined as
\begin{equation*}
\KL(\mu | \nu) = \int_{\Omega} \density{\nu}{\mu} \log\parenthese{\density{\nu}{\mu}} \rmd \nu \eqsp.
\end{equation*}

We say that $\zeta$ is a transference plan of $\mu$ and $\nu$ if it is a probability measure on $(\rset^d \times \rset^d, \mathcal{B}(\rset^d \times \rset^d) )$  such that for any Borel set $\boreleanA$ of $\rset^d$, $\zeta(\boreleanA \times \rset^d) = \mu(\boreleanA)$ and $\zeta(\rset^d \times \boreleanA) = \nu(\boreleanA)$. We denote by $\couplage{\mu}{\nu}$ the set of transference plans of $\mu$ and $\nu$. Furthermore, we say that a couple of $\rset^d$-random variables $(X,Y)$ is a coupling of $\mu$ and $\nu$ if there exists $\zeta \in \couplage{\mu}{\nu}$ such that $(X,Y)$ are distributed according to $\zeta$.  For two probability measures $\mu$ and $\nu$, we define the Wasserstein distance of order $p \geq 1$ as
\begin{equation*}
W_p(\mu,\nu) = \left( \inf_{\zeta \in \couplage{\mu}{\nu}} \int_{\rset^d \times \rset^d} \norm[p]{x-y}\rmd \zeta (x,y)\right)^{1/p} \eqsp.
\end{equation*}
By \cite[Theorem 4.1]{VillaniTransport}, for all $\mu,\nu$ probability measure on $\rset^d$, there exists a transference plan $\zeta^\star \in \couplage{\mu}{\nu}$ such that for any coupling $(X,Y)$ distributed according to $\zeta^\star$, $W_p(\mu,\nu) = \PE[\norm[p]{X-Y}]^{1/p}$.


For $u,v\in\rset^d$, define the scalar product $\ps{u}{v} = \sum_{i=1}^d u_i v_i$ and the Euclidian norm $\norm{u} = \ps{u}{u}^{1/2}$.
Denote by $\mathbb{S}^{d-1} = \defEns{u\in\rset^d : \norm{u} = 1}$.
For $k \in\nset$, $m,m' \in\nset^*$ and $\Omega,\Omega'$ two open sets of $\rset^m, \rset^{m'}$ respectively, denote by $\Csetfunction^k(\Omega, \Omega')$, the set of
$k$-times continuously differentiable functions. For $f \in
\Csetfunction^2(\rset^d, \rset)$, denote by $\nabla f$ the gradient of $f$, $\partial_i f$ the $i^{\text{th}}$-coordinate of $\nabla f$, $\Delta f$ the Laplacian of $f$ and $\nablaS f$ the Hessian of $\tf$. Define then for $x\in\rset^d$, $\norm{\nablaS \tf(x)} = \sup_{u\in \mathbb{S}^{d-1}} \norm{\nablaS \tf(x) u}$.
For $k\in\nset$ and $\tf \in\Csetfunction^k(\rset^d, \rset)$, denote by $\DD^i \tf$ the $i$-th derivative of $\tf$ for $i\in\defEns{0,\ldots,k}$, \ie~$\DD^i \tf$ is a symmetric $i$-linear map defined for all $x\in\rset^d$ and $j_1,\ldots,j_i\in\defEns{1,\ldots,d}$ by $\DD^i \tf(x) [e_{j_1}, \ldots, e_{j_i}] = \partial_{j_1\ldots j_i} \tf(x)$ where $e_1,\ldots,e_d$ is the canonical basis of $\rset^d$.
For $x\in\rset^d$ and $i\in\defEns{1,\ldots,k}$, define $\norm{\DD^0 \tf(x)} = \absolute{\tf(x)}$, $\norm{ \DD^i \tf (x)} = \sup_{u_1,\ldots,u_i \in\mathbb{S}^{d-1}} \DD^i \tf(x)[u_1, \ldots, u_i]$. Note that $\norm{\DD^1 \tf(x)} = \norm{\nabla \tf(x)}$ and $\norm{\DD^2 \tf(x)} = \norm{\nablaS \tf(x)}$. For $m,m'\in\nset^*$, define
\begin{multline*}
\Cpoly(\rset^m, \rset^{m'}) = \Big\{f \in \Csetfunction(\rset^m, \rset^{m'}) | \exists C_q, q \geq 0, \forall x \in \rset^m, \\ \norm{f(x)} \leq C_q (1+\norm[q]{x}) \Big\} \eqsp.
\end{multline*}

For all $x \in \rset^d$ and $M >0$, we denote by $\boule{x}{M}$ (respectively $\boulefermee{x}{M}$), the open (respectively close) ball centered at $x$ of radius $M$.
In the sequel, we take the convention that for $n,p \in \nset$, $n <p$ then $\sum_{p}^n =0$ and $\prod_p ^n = 1$. 
\section{Ergodicity and convergence analysis}
\label{sec:conv-general}
\begin{table}
  \centering
  \begin{tabular}{|c|c|c|}
    \hline
    distance & order of the upper bound & assumptions \\
    \hline
    $\VnormEq[V^{1/2}]{\delta_x \rker^n - \invpi}$ & $n\gastep \lambda^{n\gastep} V(x) + \sqrt{\gastep}$ & \Cref{assum:dr-close-nablaU}, \Cref{assum:ergodicity-drift}, \Cref{assum:grad-loc-lipschitz} and \Cref{assum:ergodicity-U} \\
    \hline
    $W_2^2 (\delta_x \rker^n, \invpi)$ & $n\gastep \lambda^{n\gastep} V(x) + \gastep$ & \Cref{assum:dr-close-nablaU}, \Cref{assum:ergodicity-drift},  \Cref{assum:grad-loc-lipschitz}, \Cref{assum:ergodicity-U} and \Cref{assum:convex} \\
    \hline
    $W_2^2 (\delta_x \rker^n, \invpi)$ & $n\gastep^{1+\betah} \lambda^{n\gastep} V(x) + \gastep^{1+\betah}$ & \Cref{assum:dr-close-nablaU}, \Cref{assum:ergodicity-drift},   \Cref{assum:ergodicity-U}, \Cref{assum:convex} and \Cref{assum:hessian} \\
    \hline
  \end{tabular}
  \caption{\label{table:rkergn-invpi-sum-up} Summary of the upper bounds on the distances between the distribution of the $n^{\text{th}}$ iteration of the Markov chain defined by \eqref{eq:def_tamed_euler_1} and $\invpi$.}
\end{table}
In this Section, under appropriate assumptions on $\nablaU$ and $\dr$,
we show that the diffusion process $(\XL_t)_{t\geq 0}$ defined by \eqref{eq:langevin} and its discretization $(X_k)_{k\in\nset}$ defined by \eqref{eq:def_tamed_euler_1} satisfy a Foster-Lyapunov drift condition and are $V$-geometrically ergodic, see \Cref{prop:existence_ergodicity} and \Cref{propo:drift_tamed_euler}.
Second, for all $k \in \nset^*$, non-asymptotic bounds in $V$-norm  between the distribution of $X_k$
and $\invpi$ are established.
Our next results give non-asymptotic bounds in  Wasserstein
distance of order $2$, under the additional assumption that $\pU$ is
strongly convex. A summary
of our main contributions  is given in \Cref{table:rkergn-invpi-sum-up},
where $\lambda\in\coint{0,1}$.  We conclude this part by non-asymptotic
bounds on the bias and the variance of the ergodic average $n^{-1} \sum_{k=0}^{n-1} f(X_k)$, $n \in \nset^*$, used as an estimator of $\pi(f)$, for $f :\rset^d\to \rset$ sufficiently smooth.

Henceforth, it is assumed that $\pU$ is continuously differentiable.
Consider the following assumptions on $\pU$.

\begin{assumption}
\label{assum:grad-loc-lipschitz}
There exist $\ell, \LG\in\rset_+$ such that for all $x,y \in \rset^d$,
\begin{equation*}
  \norm{  \nabla U(x) - \nabla U(y)}  \leq \LG \defEns{ 1 + \norm[\ell]{x} + \norm[\ell]{y}} \norm{x-y}\eqsp.
\end{equation*}
\end{assumption}

\begin{assumption}
\label{assum:ergodicity-U}
\begin{enumerate}[label=\roman*)]
\item
\label{item:ergodicity-U-1}
$\liminf_{\norm{x}\to \plusinfty} \norm{\nablaU(x)} = \plusinfty$.
\item
\label{item:ergodicity-U-2}
$\liminf_{\norm{x}\to \plusinfty} \ps{\frac{x}{\norm{x}}}{\frac{\nablaU(x)}{\norm{\nablaU(x)}}} > 0$.
\end{enumerate}
\end{assumption}
%
%
%
Note that under \Cref{assum:ergodicity-U}, $\liminf_{\norm{x}\to\plusinfty} \pU(x) = \plusinfty$, $\pU$ has a minimum $\xstar$ and $\nablaU (\xstar) = 0$. Without loss of generality, it is assumed that $\xstar = 0$. It implies under \Cref{assum:grad-loc-lipschitz} that for all $x\in\rset^d$,
\begin{equation}
\label{eq:estimate_nabla_u_x}
  \norm{\nabla U(x)} \leq 2 \LG \defEns{1+\norm[\ell+1]{x}}\eqsp.
\end{equation}
Besides, under \Cref{assum:ergodicity-U}-\ref{item:ergodicity-U-2}, there exists $C\in\rset$ such that for all $x\in\rset^d$, $  \ps{-\nabla U(x)}{x} \leq C$. By \cite[Theorem 2.1]{meyn:tweedie:1993:III}, \cite[Chapter IV, Theorems 2.3, 3.1]{ikeda:watanabe:1989} and \cite[Theorem 2.1]{roberts:tweedie:1996}, \eqref{eq:langevin} has a unique strong solution denoted $(\XL_t)_{t\geq 0}$. By \cite[Section 5.4.C, Theorem 4.20]{karatzas:shreve:1991}, one constructs the associated  strongly Markovian semigroup $(\sgP_t)_{t\geq 0}$  given for all $t\geq 0$, $x\in\rset^d$ and $\eventA \in \Borel(\rset^d)$ by $\sgP_t(x, \eventA) = \expe{\1_{\eventA}(\XL_t) | \XL_0=x}$.
Consider the infinitesimal generator $\generator$  associated with \eqref{eq:langevin} defined for all $h
\in \Csetfunction^2(\rset^d)$ and $x\in\rset^d$ by
\begin{equation}
\label{eq:generator-langevin}
  \generator h(x) = -\ps{\nabla U(\varx)}{\nabla \testfunh(\varx)} + \Delta \testfunh(\varx) \eqsp,
\end{equation}
and for any $\a \in \rset_+^*$, define the Lyapunov function $\lyapa : \rset^d \to \coint{1,\plusinfty}$ for all $x \in \rset^d$ by
\begin{equation}
  \label{eq:lyap_tamed_euler}
  \lyapa(x) = \exp\parenthese{\a (1+\norm[2]{x})^{1/2}} \eqsp.
\end{equation}
Foster-Lyapunov conditions enable to control the moments of the diffusion process $(\XL_t)_{t\geq 0}$, see \eg~\cite[Section 6]{meyn:tweedie:1993:III} or \cite[Theorem 2.2]{roberts:tweedie:1996}.

\begin{proposition}
\label{prop:existence_ergodicity}
Assume \Cref{assum:grad-loc-lipschitz}, \Cref{assum:ergodicity-U} and let $a\in\rset^*_+$. There exists $\bL[\a]\in\rset_+$ (given explicitly in the proof) such that for all $x \in \rset^d$
\begin{equation}
\label{eq:drift-langevin}
  \generator \lyapa(x) \leq - \a \lyapa(x) + \a\bL[\a] \eqsp
\end{equation}
and
\begin{equation*}
\sup_{t\geq 0} \sgP_t \lyapa(x) \leq \lyapa(x) + \bL[\a] \eqsp.
\end{equation*}
Moreover, there exist $\Ca\in\rset_+$ and $\rhoa\in\coint{0,1}$ such that for all $t\in\rset_+$ and probability measures $\mu_0, \nu_0$ on $(\rset^d, \Borel(\rset^d))$ satisfying $\mu_0(\lyapa) + \nu_0(\lyapa) < \plusinfty$,
\begin{equation}
\label{eq:V-geom-ergo}
\VnormEq[\lyapa]{\mu_0 \sgP_t - \nu_0 \sgP_t} \leq \Ca \rhoa^t \VnormEq[\lyapa]{\mu_0 - \nu_0} \eqsp, \eqsp
\VnormEq[\lyapa]{\mu_0 \sgP_t - \invpi} \leq \Ca \rhoa^t \mu_0(\lyapa) \eqsp.
\end{equation}
\end{proposition}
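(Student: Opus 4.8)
The proof breaks into three parts, matching the three displayed claims. First I would establish the Foster–Lyapunov drift inequality \eqref{eq:drift-langevin} for $\generator \lyapa$ by direct computation. Writing $w(x) = (1+\norm[2]{x})^{1/2}$ so that $\lyapa = \rme^{\a w}$, one computes $\nabla \lyapa(x) = \a \lyapa(x) x / w(x)$ and, from \eqref{eq:generator-langevin},
\begin{equation*}
\generator \lyapa(x) = \lyapa(x) \left\{ -\a \frac{\ps{\nabla U(x)}{x}}{w(x)} + \a \frac{\Delta w(x)}{1} + \a^2 \norm[2]{\nabla w(x)} \right\}\eqsp,
\end{equation*}
where $\norm{\nabla w(x)} = \norm{x}/w(x) \leq 1$ and $\Delta w(x) = (d + (d-1)\norm[2]{x}/w(x)^2)/w(x) \leq d/w(x)$, so both the Laplacian term and the quadratic term are bounded (the quadratic term by $\a^2$, the Laplacian term tending to $0$). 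The key point is that by \Cref{assum:ergodicity-U}-\ref{item:ergodicity-U-2} together with \Cref{assum:ergodicity-U}-\ref{item:ergodicity-U-1} (giving $\norm{\nabla U(x)}\to\infty$), there is $R_0$ and $c>0$ with $\ps{\nabla U(x)}{x} \geq c\norm{x}\norm{\nabla U(x)} \to \infty$ for $\norm{x}\geq R_0$; hence $-\a \ps{\nabla U(x)}{x}/w(x) \to -\infty$, and for $\norm{x}$ large enough the bracket is $\leq -2\a$, say. On the compact ball $\boulefermee{0}{R}$ the bracket is bounded by continuity, so one gets $\generator\lyapa(x) \leq -2\a\lyapa(x) + K\indi{\boulefermee{0}{R}}(x)$ for suitable $R, K$; absorbing half the first term then yields \eqref{eq:drift-langevin} with $\bL[\a] = \sup_{\boulefermee{0}{R}} \lyapa / \a + (\text{const})$, after adjusting constants to get the clean form $-\a\lyapa + \a\bL[\a]$.

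Second, the uniform-in-time moment bound $\sup_{t\geq 0} \sgP_t\lyapa(x) \leq \lyapa(x) + \bL[\a]$ follows from \eqref{eq:drift-langevin} by a standard argument: applying Dynkin's formula (after a localization/stopping-time argument to justify it, using that $\lyapa$ and $\generator\lyapa$ have at most exponential growth and that moments of the SDE are finite by the references already cited, \cite[Chapter IV]{ikeda:watanabe:1989}), the function $t \mapsto \sgP_t\lyapa(x)$ satisfies $\frac{\rmd}{\rmd t}\sgP_t\lyapa(x) = \sgP_t(\generator\lyapa)(x) \leq -\a\sgP_t\lyapa(x) + \a\bL[\a]$, and Grönwall gives $\sgP_t\lyapa(x) \leq \rme^{-\a t}\lyapa(x) + (1-\rme^{-\a t})\bL[\a] \leq \lyapa(x) + \bL[\a]$.

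Third, for the $V$-geometric ergodicity \eqref{eq:V-geom-ergo} I would invoke the standard Meyn–Tweedie / Down–Meyn–Tweedie machinery for continuous-time Markov processes: the drift condition \eqref{eq:drift-langevin} is a Foster–Lyapunov condition with small (in fact compact) level sets, and it remains to check a local minorization/irreducibility condition on the sublevel sets of $\lyapa$ — equivalently on closed balls. This follows because $(\sgP_t)$ has a smooth, everywhere-positive transition density (the SDE \eqref{eq:langevin} is uniformly elliptic with locally Lipschitz drift, so standard parabolic PDE or Malliavin-calculus arguments give a positive $C^\infty$ density), whence every compact set is small and the process is aperiodic. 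Then \cite[Theorem 6.1]{meyn:tweedie:1993:III} (or the equivalent cited results) yields a spectral gap in $\lyapa$-norm: there exist $\Ca, \rhoa$ with $\VnormEq[\lyapa]{\delta_x\sgP_t - \invpi} \leq \Ca\rhoa^t\lyapa(x)$; the two-measure versions in \eqref{eq:V-geom-ergo} follow by integrating against $\mu_0,\nu_0$ and using linearity, noting $\invpi(\lyapa)<\infty$ (itself a consequence of the drift condition) so that $\invpi\sgP_t = \invpi$ can be subtracted freely. The main obstacle is the first part — getting the sign of the bracket right requires carefully combining both parts of \Cref{assum:ergodicity-U} to control $\ps{\nabla U(x)}{x}/w(x)$ against the fixed $O(\a^2)$ contribution, and being careful that the constant $\bL[\a]$ comes out in the stated normalized form; the ergodicity part is essentially a citation once the density positivity is noted.
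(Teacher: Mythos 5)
Your proposal is correct and follows essentially the same route as the paper: compute $\generator\lyapa$ explicitly, use both parts of \Cref{assum:ergodicity-U} to make the inner-product term dominate for large $\norm{x}$ while the remaining terms stay bounded, obtain the uniform-in-time moment bound from the drift via Dynkin/Gr\"onwall (which is the content of the cited Meyn--Tweedie theorem), and then conclude $\lyapa$-geometric ergodicity from the Foster--Lyapunov condition together with minorization on compact sets, exactly as the paper does by citing \cite{roberts:tweedie:1996} and \cite{meyn:tweedie:1993:III}. One small computational slip: with $w(x)=(1+\norm[2]{x})^{1/2}$ the correct Laplacian is $\Delta w(x) = \{d - \norm[2]{x}/w(x)^2\}/w(x)$, not $\{d + (d-1)\norm[2]{x}/w(x)^2\}/w(x)$; the correct formula does give $\Delta w \leq d/w$, whereas your written one would not for $d\geq 2$, though this does not affect the rest of the argument.
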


\begin{proof}
The proof is postponed to \Cref{sec:proof-Lyapunov-langevin}.
\end{proof}

The Markov chain $(X_k)_{k\in\nset}$ defined in \eqref{eq:def_tamed_euler_1} is a discrete-time approximation of the diffusion $(\XL_t)_{t\geq 0}$. To control the total variation and Wasserstein distances of the marginal distributions of $(X_k)_{k\in\nset}$ and $(\XL_t)_{t\geq 0}$, it is necessary to assume that for $\gastep>0$ small enough, $\dr$ and $\nablaU$ are close. This is formalized by \Cref{assum:dr-close-nablaU}.
Under the additional assumption \Cref{assum:ergodicity-drift}, we obtain the stability and ergodicity of $(\XE_k)_{k\in\nset}$.

\begin{assumptionA}\label{assum:dr-close-nablaU}
For all $\gastep> 0$, $\dr$ is continuous.
There exist $\alphag \geq 0$, $\Cal<\plusinfty$ such that for all $\gastep> 0$ and $x\in\rset^d$,
\begin{equation*}
\norm{\dr(x) - \nablaU(x)} \leq \gastep \Cal \parenthese{1+\norm[\alphag]{x}} \eqsp.
\end{equation*}
\end{assumptionA}

Note that under \Cref{assum:grad-loc-lipschitz}, \Cref{assum:dr-close-nablaU} and by \eqref{eq:estimate_nabla_u_x}, we have for all $x\in\rset^d$
\begin{equation}
\label{eq:estimate-drift-step}
\norm{\dr(x)} \leq 2 \LG \defEns{1+\norm[\ell+1]{x}} + \gastep \Cal \parenthese{1+\norm[\alphag]{x}} \eqsp.
\end{equation}

\begin{assumptionA}\label{assum:ergodicity-drift}
For all $\gastep>0$,
$\liminf_{\norm{x}\to\plusinfty} \ps{\frac{x}{\norm{x}}}{\dr(x)} - \frac{\gastep}{2\norm{x}}\norm[2]{\dr(x)} >0$.
\end{assumptionA}

\begin{lemma}
\label{lemma:checkA1A2}
Assume \Cref{assum:grad-loc-lipschitz} and \Cref{assum:ergodicity-U}. Let $\gastep>0$ and $\dr$ be equal to $\TU$ or $\TUc$ defined in \eqref{eq:def-TU-TUc}. Then \Cref{assum:dr-close-nablaU} and \Cref{assum:ergodicity-drift} are satisfied.
\end{lemma}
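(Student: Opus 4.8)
The plan is to verify the two assumptions separately for each of the two taming schemes $\TU$ and $\TUc$, exploiting the elementary inequality $\abs{a/(1+\gastep\abs{a}) - a} = \gastep a^2/(1+\gastep\abs{a}) \leq \gastep a^2$ valid for any $a\in\rset$, and its vector analogue $\norm{v/(1+\gastep\norm{v}) - v} \leq \gastep\norm[2]{v}$ for $v\in\rset^d$. First I would treat \Cref{assum:dr-close-nablaU}. Continuity of $\TU$ and $\TUc$ is immediate since $\nablaU$ is continuous (indeed $C^1$) and $x\mapsto x/(1+\gastep\abs{x})$ is continuous. For the quantitative bound: in the $\TU$ case, $\norm{\TU(x) - \nablaU(x)} = \gastep\norm[2]{\nablaU(x)}/(1+\gastep\norm{\nablaU(x)}) \leq \gastep\norm[2]{\nablaU(x)}$; plugging in the growth bound \eqref{eq:estimate_nabla_u_x}, $\norm{\nablaU(x)}\leq 2\LG(1+\norm[\ell+1]{x})$, gives $\norm{\TU(x)-\nablaU(x)} \leq \gastep\, 4\LG^2 (1+\norm[\ell+1]{x})^2 \leq \gastep\, C(1+\norm[2\ell+2]{x})$, so \Cref{assum:dr-close-nablaU} holds with $\alphag = 2\ell+2$ and $\Cal$ a suitable multiple of $\LG^2$. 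For $\TUc$, work coordinatewise: $\abs{(\TUc(x))_i - \partial_i\pU(x)} \leq \gastep (\partial_i\pU(x))^2$, so summing the squares, $\norm{\TUc(x) - \nablaU(x)}^2 \leq \gastep^2 \sum_{i=1}^d (\partial_i\pU(x))^4 \leq \gastep^2 \norm[4]{\nablaU(x)}$ (since $\ell^4$-norm $\leq\ell^2$-norm), hence $\norm{\TUc(x)-\nablaU(x)} \leq \gastep\norm[2]{\nablaU(x)}$ and the same conclusion follows with the same $\alphag$ and $\Cal$.

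Next I would verify \Cref{assum:ergodicity-drift}, which is the more delicate part and the one I expect to be the main obstacle, since it requires controlling the sign of $\ps{x/\norm{x}}{\dr(x)} - (\gastep/2\norm{x})\norm[2]{\dr(x)}$ as $\norm{x}\to\plusinfty$, i.e.\ understanding the interplay between the taming and the coercivity assumption \Cref{assum:ergodicity-U}\ref{item:ergodicity-U-2}. For the $\TU$ scheme this is clean: with $g = \nablaU(x)$ and $\mu = \gastep\norm{g}/(1+\gastep\norm{g}) \in [0,1)$, one has $\TU(x) = (1-\mu) g$ where I am abusing notation — more precisely $\TU(x) = g/(1+\gastep\norm{g})$, so $\ps{x}{\TU(x)} = \ps{x}{g}/(1+\gastep\norm{g})$ and $\norm[2]{\TU(x)} = \norm[2]{g}/(1+\gastep\norm{g})^2$. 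Thus
\begin{equation*}
\ps{\frac{x}{\norm{x}}}{\TU(x)} - \frac{\gastep}{2\norm{x}}\norm[2]{\TU(x)}
= \frac{1}{\norm{x}(1+\gastep\norm{g})}\parenthese{\ps{x}{g} - \frac{\gastep\norm[2]{g}}{2(1+\gastep\norm{g})}} \eqsp.
\end{equation*}
By \Cref{assum:ergodicity-U}\ref{item:ergodicity-U-2} there are $c>0$, $R>0$ with $\ps{x}{g} \geq c\norm{x}\norm{g}$ for $\norm{x}\geq R$, and by \ref{item:ergodicity-U-1}, $\norm{g}\to\plusinfty$; since $\gastep\norm[2]{g}/(2(1+\gastep\norm{g})) \leq \norm{g}/2$, the bracket is at least $c\norm{x}\norm{g} - \norm{g}/2 = \norm{g}(c\norm{x} - 1/2) \geq (c/2)\norm{x}\norm{g}$ for $\norm{x}$ large, giving a lower bound of order $\norm{g}/(1+\gastep\norm{g})$, which stays bounded below by a positive constant as $\norm{x}\to\plusinfty$ because $\norm{g}\to\plusinfty$. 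Hence the $\liminf$ is strictly positive.

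For $\TUc$ the argument is similar but requires slightly more care because the taming acts on each coordinate with a different factor. Writing $g_i = \partial_i\pU(x)$ and $h_i = g_i/(1+\gastep\abs{g_i})$, one has $\ps{x}{\TUc(x)} = \sum_i x_i h_i$ and $\norm[2]{\TUc(x)} = \sum_i h_i^2 \leq \sum_i \abs{g_i}\abs{h_i}$. The key observation is that $\abs{g_i - h_i} = \gastep g_i^2/(1+\gastep\abs{g_i}) \leq \abs{g_i}$, so $\norm{\nablaU(x) - \TUc(x)} \leq \norm{\nablaU(x)}$, and more usefully $\ps{x}{\TUc(x)} = \ps{x}{\nablaU(x)} - \ps{x}{\nablaU(x) - \TUc(x)} \geq \ps{x}{\nablaU(x)} - \norm{x}\cdot\gastep\norm[2]{\nablaU(x)}/\ldots$; however this crude bound is not enough by itself. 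Instead I would argue: $\abs{h_i} \geq \abs{g_i}/(1+\gastep\norm{g})$ (using $\abs{g_i}\leq\norm{g}$ in the denominator), so $\sum_i x_i h_i$ and $\sum_i h_i^2$ can both be compared to the corresponding quantities for $g$ up to a factor $(1+\gastep\norm{g})^{-1}$, and then the same coercivity argument as above closes it: the leading term $\ps{x}{g}$ grows like $\norm{x}\norm{g}$ while the subtracted term is $O(\norm{g})$, so after dividing by $\norm{x}$ the expression is bounded below by a positive multiple of $\norm{g}/(1+\gastep\norm{g})^2$-type quantity, and one checks this does not vanish as $\norm{x}\to\plusinfty$ by splitting into the cases $\gastep\norm{g}$ bounded versus unbounded. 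The bookkeeping of which coordinates contribute is the only genuinely fiddly point; everything else is a direct consequence of \Cref{assum:grad-loc-lipschitz}, \Cref{assum:ergodicity-U} and the algebraic identities for the taming maps. I would then conclude that both \Cref{assum:dr-close-nablaU} and \Cref{assum:ergodicity-drift} hold.
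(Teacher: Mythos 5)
Your verification of \Cref{assum:dr-close-nablaU} and the $\TU$ half of \Cref{assum:ergodicity-drift} follows the same route as the paper (the elementary pointwise identities, $\ell^4\leq\ell^2$ for $\TUc$, the estimate \eqref{eq:estimate_nabla_u_x}, and the explicit computation of $\ps{x/\norm{x}}{\TU(x)} - (\gastep/2\norm{x})\norm[2]{\TU(x)}$), and is correct.

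The $\TUc$ half of \Cref{assum:ergodicity-drift}, which you yourself flag as ``genuinely fiddly'', is where your plan has a real gap. Writing $g=\nablaU(x)$ and $h_i = g_i/(1+\gastep|g_i|)$, you propose to control the quadratic term $(\gastep/2\norm{x})\norm[2]{\TUc(x)}$ by $\norm[2]{\TUc(x)} \leq \sum_i |g_i||h_i|$ together with the growth of $\nablaU$. That is not enough: in the superlinear regime $\norm{g}/\norm{x}\to\plusinfty$, the resulting bound of order $\gastep\norm[2]{g}/\norm{x}$ is \emph{not} dominated by the linear term $\ps{x/\norm{x}}{\TUc(x)}$, which is the whole point of the lemma. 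The ingredient the paper uses — and which your sketch is missing — is the \emph{uniform} bound on each coordinate of the tamed gradient: $|g_i|/(1+\gastep|g_i|) \leq 1/\gastep$ for every $g_i\in\rset$, hence $\gastep\norm{\TUc(x)} \leq \sqrt{d}$ for all $x$. This converts the quadratic term into $(\gastep/2\norm{x})\norm[2]{\TUc(x)} \leq (\sqrt{d}/(2\norm{x}))\norm{\TUc(x)}$, which is of order $\norm{\TUc(x)}/\norm{x}$ rather than $\gastep\norm[2]{\TUc(x)}/\norm{x}$, and is therefore negligible against the inner-product term once $\norm{x}\geq \sqrt{d}/\kappaU$. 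Note that the same kind of uniform bound (here $\gastep\norm{\TU(x)} \leq 1$) is exactly what makes your $\TU$ estimate $\gastep\norm[2]{g}/(2(1+\gastep\norm{g})) \leq \norm{g}/2$ work; you used it there without singling it out. Making the coordinatewise analogue $\gastep\norm{\TUc(x)}\leq\sqrt{d}$ explicit is what turns your sketch into a proof, and it also removes the need for the sign-bookkeeping you were worried about.
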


\begin{proof}
The proof is postponed to \Cref{sec:proof-lemma-checkA1A2}.
\end{proof}

The Markov kernel $  \rker$ associated with \eqref{eq:def_tamed_euler_1} is given for all $\gamma >0$ , $x \in \rset^d$ and $\eventA \in \Borel(\rset^d)$ by
\begin{equation}
\label{eq:definition_rker}
  \rker(x,\eventA) = (2 \uppi)^{-d/2}\int_{\rset^d} \1_{\eventA}\parenthese{x-\gaStep \Ugad(x) + \sqrt{2\gaStep} z} \rme^{-\norm[2]{z}/2} \rmd z \eqsp.
\end{equation}
We then obtain the counterpart of \Cref{prop:existence_ergodicity} for the Markov chain $(X_k)_{k\in\nset}$.


\begin{proposition}
  \label{propo:drift_tamed_euler}
Assume \Cref{assum:grad-loc-lipschitz}, \Cref{assum:dr-close-nablaU},  \Cref{assum:ergodicity-drift} and let $\gamma \in \rset_+^*$. There
exist $\MM, \ae, \be \in\rset^*_+$ (given explicitly in the proof) satisfying for all $x\in \rset^d$
\begin{equation}
\label{eq:drift-rker}
\rker \lyape(x) \leq \rme^{-\ae^2 \gastep} \lyape(x)
+ \gastep \be \1_{\boulefermee{0}{\MM}}(x) \eqsp .
\end{equation}
In addition, $\rker$ has a unique invariant measure $\invpig$, $\rker$ is $\lyape$-geometrically ergodic \wrt~$\invpig$.
\end{proposition}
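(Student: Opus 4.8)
The plan is to establish the Foster--Lyapunov drift inequality~\eqref{eq:drift-rker} by a direct computation on $\rker \lyape(x)$, and then to invoke standard geometric ergodicity theory for $\psi$-irreducible aperiodic Markov chains (e.g.\ \cite{meyn:tweedie:1993:III}, or the quantitative version of \cite{hairer:mattingly:2011}) to deduce the existence and uniqueness of $\invpig$ together with $\lyape$-geometric ergodicity. First I would write, using~\eqref{eq:definition_rker} and the definition~\eqref{eq:lyap_tamed_euler} of $\lyape$,
\begin{equation*}
\rker \lyape(x) = \expe{\exp\parenthese{\ae(1+\norm[2]{x - \gastep \dr(x) + \sqrt{2\gastep} Z})^{1/2}}}\eqsp,
\end{equation*}
where $Z$ is a standard Gaussian. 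The core estimate is to bound $\norm[2]{x - \gastep \dr(x) + \sqrt{2\gastep} Z}$ from above: expanding the square gives $\norm[2]{x} - 2\gastep\ps{x}{\dr(x)} + \gastep^2\norm[2]{\dr(x)} + 2\gastep\norm[2]{Z} + (\text{cross terms in }Z)$. Using the concavity inequality $(1+u)^{1/2} \leq (1+\norm[2]{x})^{1/2} + (u-\norm[2]{x})/(2(1+\norm[2]{x})^{1/2})$ for $u = \norm[2]{x - \gastep\dr(x)+\sqrt{2\gastep}Z}$, one linearizes the exponent around $(1+\norm[2]{x})^{1/2}$, so that the dominant contribution becomes $\exp(\ae(1+\norm[2]{x})^{1/2})$ times a factor controlled by $\exp\parenthese{-\ae\gastep\parenthese{\ps{x}{\dr(x)} - (\gastep/2)\norm[2]{\dr(x)}}/(1+\norm[2]{x})^{1/2} + \cdots}$.

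The key point is then \Cref{assum:ergodicity-drift}: it guarantees the existence of $\eta > 0$ and $R > 0$ such that $\ps{x/\norm{x}}{\dr(x)} - (\gastep/(2\norm{x}))\norm[2]{\dr(x)} \geq \eta$ for $\norm{x} \geq R$, which makes the linearized exponent strictly negative outside a compact set, of order $-\ae\gastep\eta$ asymptotically; after also handling the Gaussian expectation (the terms involving $Z$ produce a bounded multiplicative constant, e.g.\ via $\expe{\exp(\text{linear}(Z))}$ and a careful treatment of the $\norm[2]{Z}$ term, using that $\ae\sqrt{\gastep}$ can be taken small or that the exponent's $Z$-dependence is sublinear once divided by $(1+\norm[2]{x})^{1/2}$), one gets $\rker\lyape(x) \leq \rme^{-\ae^2\gastep}\lyape(x)$ for $\norm{x}$ large. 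On the compact set $\boulefermee{0}{\MM}$ — with $\MM$ chosen from $R$ and the other constants — one bounds $\rker\lyape(x)$ crudely using~\eqref{eq:estimate-drift-step} and the finiteness of Gaussian exponential moments, absorbing the remainder into the additive term $\gastep\be\1_{\boulefermee{0}{\MM}}(x)$. Tracking constants here gives the explicit $\MM,\ae,\be$ promised in the statement. Note that I expect $\ae$ to be taken as (a fraction of) $\a$ from \Cref{prop:existence_ergodicity}, or small enough that the $\gastep^2\norm[2]{\dr(x)}$ and Gaussian terms do not overwhelm the drift; reconciling the scaling $\rme^{-\ae^2\gastep}$ (rather than $\rme^{-\ae\gastep}$) suggests the bound is obtained after a further $(1+\norm[2]{x})^{1/2} \gtrsim$ constant type simplification, so I would keep the exponent flexible until the calculation closes.

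For the second assertion, once~\eqref{eq:drift-rker} holds with a small set $\boulefermee{0}{\MM}$, I would check that $\rker$ is a $T$-chain / is irreducible with respect to Lebesgue measure and strongly aperiodic: this is immediate from~\eqref{eq:definition_rker}, since $\rker(x,\cdot)$ has a positive continuous Gaussian density on all of $\rset^d$ for every $x$, so every compact set with positive Lebesgue measure is small and the chain is $\lambda_{\mathrm{Leb}}$-irreducible and aperiodic. Combining the drift condition~\eqref{eq:drift-rker} with the minorization on $\boulefermee{0}{\MM}$ via \cite[Theorem 6.3]{meyn:tweedie:1993:III} (or \cite[Theorem 1.3]{hairer:mattingly:2011}) yields the existence of a unique invariant probability measure $\invpig$, finiteness of $\invpig(\lyape)$, and $\lyape$-geometric ergodicity, i.e.\ constants $C_E \geq 0$, $\rho_E \in \coint{0,1}$ with $\VnormEq[\lyape]{\delta_x\rker^n - \invpig} \leq C_E \rho_E^n \lyape(x)$. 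The main obstacle is the first step: controlling the Gaussian expectation of the exponential of a quantity that grows like $\norm{x}$ while simultaneously extracting the strictly-contractive factor coming from \Cref{assum:ergodicity-drift} uniformly in $\gastep$ — in particular making sure the $\gastep^2\norm[2]{\dr(x)}$ term (which via~\eqref{eq:estimate-drift-step} can grow like $\gastep^2\norm[2(\ell+1)]{x}$, superlinearly) is dominated, which is precisely where the $-(\gastep/2)\norm[2]{\dr(x)}$ correction inside \Cref{assum:ergodicity-drift} is essential and must be used before, not after, bounding crudely.
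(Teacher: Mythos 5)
Your plan captures the two essential ideas — that \Cref{assum:ergodicity-drift} supplies the outward drift for $\norm{x}$ large, and that the Gaussian density of $\rker(x,\cdot)$ gives irreducibility, aperiodicity, and smallness of $\boulefermee{0}{\MM}$, after which standard Foster--Lyapunov theory yields the invariant measure and $\lyape$-geometric ergodicity — but you take a different and technically heavier route to the drift inequality than the paper does. You propose to linearize $(1+u)^{1/2}$ around $(1+\norm[2]{x})^{1/2}$ and then take a Gaussian expectation of the resulting exponential, which forces you to control $\expe{\exp(c\norm[2]{Z} + \ps{v}{Z})}$ with $x$-dependent $c,v$; you correctly identify this as the main obstacle and leave it unresolved. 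The paper instead applies the Gaussian \emph{log-Sobolev inequality} to the $1$-Lipschitz map $y\mapsto(1+\norm[2]{y})^{1/2}$: this immediately gives $\rker\lyape(x) \leq \rme^{\ae^2\gastep}\exp\{\ae\,\PE[(1+\norm[2]{X_1})^{1/2}]\}$, and then Jensen/Cauchy--Schwarz passes the expectation inside the square root, reducing everything to the \emph{deterministic} quantity $\norm[2]{x-\gastep\dr(x)}+2\gastep d$. That single move eliminates the Gaussian-moment bookkeeping entirely; it also explains cleanly the $\rme^{-\ae^2\gastep}$ scaling that you flagged as puzzling (the $\rme^{\ae^2\gastep}$ coming from log-Sobolev is halved against the drift term by choosing $\ae = \kappa\MM/(4(1+\MM^2)^{1/2})$). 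Your approach can probably be pushed through — the observation that $\norm{x-\gastep\dr(x)}\leq\norm{x}$ for large $\norm{x}$ under \Cref{assum:ergodicity-drift} keeps $v$ under control — but as written it stops short of the step the paper resolves with one line; if you want to keep your route, you should make the Gaussian moment bound precise and show the argument closes uniformly in $\gastep>0$ rather than appealing to ``$\ae\sqrt{\gastep}$ small''. Everything else (the small-set bound on $\boulefermee{0}{\MM}$ via \eqref{eq:estimate-drift-step}, the ergodicity conclusion via Meyn--Tweedie/Roberts--Tweedie) matches the paper's argument.
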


\begin{proof}
The proof is postponed to \Cref{sec:proof-Lyapunov-discrete-Vae}.
\end{proof}

Note that a straightforward induction of \eqref{eq:drift-rker} gives for all $n\in\nset$ and $x\in\rset^d$,
\begin{equation*}
\rker^n \lyape(x) \leq \rme^{-n\ae^2 \gastep} \lyape(x) + \{(\be \gastep)(1-\rme^{-n\ae^2 \gastep})\}/(1-\rme^{-\ae^2 \gastep}) \eqsp.
\end{equation*}
Using $1-\rme^{-\ae^2 \gastep} = \int_0^\gastep \ae^2 \rme^{-\ae^2 t} \rmd t \geq \gastep \ae^2 \rme^{-\ae^2 \gastep }$, we get for all $n\in\nset$
\begin{equation}
\label{eq:drift-rker-iter-n}
\rker^n \lyape(x) \leq \rme^{-\ae^2 n \gastep} \lyape(x) + (\be/\ae^2)\rme^{\ae^2 \gastep} \eqsp.
\end{equation}
In the following result, we compare the discrete and continuous time processes $(X_k)_{k\in\nset}$ and $(\XL_t)_{t\geq 0}$ using Girsanov's theorem and Pinsker's inequality, see \cite{dalalyan:2017} and \cite[Theorem 10]{durmus2017} for similar arguments.

\begin{theorem}
\label{prop:convergence-Vnorm}
Assume \Cref{assum:grad-loc-lipschitz}, \Cref{assum:ergodicity-U}, \Cref{assum:dr-close-nablaU} and \Cref{assum:ergodicity-drift}.
Let $\gastep_0 >0$. There exist $C>0$ and $\lambda\in\ooint{0,1}$ such that for all $\gastep\in\ocint{0,\gastep_0}$, $x\in\rset^d$ and $n\in\nset$,
\begin{equation}
\label{eq:thm-conv-Vnorm-1}
\VnormEq[\lyape^{1/2}]{\delta_x \rker^n - \invpi} \leq
C\parenthese{n\gastep \lambda^{n\gastep} \lyape(x) + \sqrt{\gastep}} \eqsp,
\end{equation}
where $\ae$ is defined in \Cref{propo:drift_tamed_euler} and for all $\gastep\in\ocint{0,\gastep_0}$,
\begin{equation}
\label{eq:thm-conv-Vnorm-2}
\VnormEq[\lyape^{1/2}]{\invpig - \invpi} \leq C\sqrt{\gastep} \eqsp.
\end{equation}
\end{theorem}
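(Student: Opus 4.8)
The plan is to bound $\VnormEq[\lyape^{1/2}]{\delta_x \rker^n - \invpi}$ by a triangle inequality through the continuous-time semigroup evaluated at time $T = n\gastep$:
\[
\VnormEq[\lyape^{1/2}]{\delta_x \rker^n - \invpi}
\leq \VnormEq[\lyape^{1/2}]{\delta_x \rker^n - \delta_x \sgP_{n\gastep}}
+ \VnormEq[\lyape^{1/2}]{\delta_x \sgP_{n\gastep} - \invpi}\eqsp.
\]
The second term is the ``bias of the diffusion'' and is handled directly by \Cref{prop:existence_ergodicity}: taking $\a$ so that $\lyapa = \lyape^2$ (or dominating $\lyape^{1/2}$ by $\lyapa$ for a suitable $\a$), \eqref{eq:V-geom-ergo} gives a bound of the form $\Ca \rhoa^{n\gastep}\lyapa(x)$, which is $\leq C \rho^{n\gastep}\lyape(x)$; since $\rho^{n\gastep} \leq C' n\gastep \lambda^{n\gastep}$ after adjusting $\lambda\in(\rho,1)$ (the extra polynomial factor $n\gastep$ is harmless and in fact only helps), this is absorbed into the first summand on the right of \eqref{eq:thm-conv-Vnorm-1}. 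The real work is the discretization term.

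For the discretization term I would use the Girsanov/Pinsker route exactly as advertised in the text. Fix the time horizon $T = n\gastep$ and introduce the continuous-time interpolation $(\bar X_t)_{t\in[0,T]}$ of the chain \eqref{eq:def_tamed_euler_1}, i.e.\ the solution of $\rmd \bar X_t = -\Ugad(\bar X_{\floor{t/\gastep}\gastep})\,\rmd t + \sqrt 2\,\rmd B_t$, whose time-$T$ marginal is $\delta_x\rker^n$. Girsanov's theorem expresses the KL divergence between the laws on path space of $(\bar X_t)_{t\le T}$ and $(\XL_t)_{t\le T}$ as
\[
\KL(\text{law of }\bar X \mid \text{law of }\XL)
= \frac14\, \PE\!\left[\int_0^T \norm{\nablaU(\bar X_t) - \Ugad(\bar X_{\floor{t/\gastep}\gastep})}^2 \,\rmd t\right]\eqsp,
\]
and, by the data-processing inequality, this also bounds the KL divergence of the time-$T$ marginals. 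Inside the integral, split $\nablaU(\bar X_t) - \Ugad(\bar X_{\floor{t/\gastep}\gastep})$ into $\bigl(\nablaU(\bar X_t) - \nablaU(\bar X_{\floor{t/\gastep}\gastep})\bigr) + \bigl(\nablaU - \Ugad\bigr)(\bar X_{\floor{t/\gastep}\gastep})$. The first piece is controlled by the local-Lipschitz bound \Cref{assum:grad-loc-lipschitz} together with one-step moment estimates $\PE[\norm{\bar X_t - \bar X_{\floor{t/\gastep}\gastep}}^2 (1+\text{polynomial})]$, which in turn follow from the Foster--Lyapunov bound \eqref{eq:drift-rker-iter-n} (giving uniform-in-$n$ control of exponential moments $\rker^k\lyape(x)$, hence of all polynomial moments) and the Gaussian increment; this yields a contribution $O(\gastep)$ per unit time, i.e.\ $O(T\gastep)$. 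The second piece is $\le \gastep^2 \Cal^2 (1+\norm[\alphag]{\bar X_{\floor{t/\gastep}\gastep}})^2$ by \Cref{assum:dr-close-nablaU}, again $O(\gastep^2)$ per unit time after taking expectations, hence negligible. Altogether $\KL \leq C\, T\gastep\, \lyape(x)$ (the moment bounds contribute the $\lyape(x)$ factor uniformly in $n$), and Pinsker gives $\tvnorm{\delta_x\rker^n - \delta_x\sgP_{n\gastep}} \leq \sqrt{C\, T\gastep\,\lyape(x)}$. To upgrade from plain TV to $\lyape^{1/2}$-weighted TV one interpolates: $\VnormEq[\lyape^{1/2}]{\mu-\nu} \leq \tvnorm{\mu-\nu}^{1/2}\bigl(\mu(\lyape)+\nu(\lyape)\bigr)^{1/2}$ (Cauchy--Schwarz), and both $\delta_x\rker^n(\lyape)$ and $\sgP_{n\gastep}\lyape(x)$ are bounded by $C(\lyape(x)+1)$ via \Cref{prop:existence_ergodicity} and \eqref{eq:drift-rker-iter-n}; this produces a bound of order $(T\gastep)^{1/4}\lyape(x)^{3/4}$, which is weaker than claimed. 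To get the sharp $\sqrt\gastep$ with the right $n\gastep\lambda^{n\gastep}$ prefactor one must not apply Girsanov over the whole horizon $[0,T]$ at once, but rather on a \emph{fixed} window $[0,T_0]$ with $T_0$ chosen large enough that the ergodic contraction $\rhoa^{T_0}$ (resp.\ $\rho_\gastep^{T_0}$ for $\rker$) is $\le 1/2$, combining the short-horizon Girsanov bound $O(\sqrt{\gastep})$ with the geometric ergodicity of both $\rker$ and $\sgP_t$ over the remaining time; one writes $\delta_x\rker^n - \invpi = (\delta_x\rker^{n-m})\rker^m - \invpi\sgP_{m\gastep}$ with $m\gastep \approx T_0$ and iterates, the cost being the $\sqrt\gastep$ term plus a geometrically decaying $\lyape(x)$ term. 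Finally, \eqref{eq:thm-conv-Vnorm-2} follows by letting $n\to\infty$ in \eqref{eq:thm-conv-Vnorm-1}: the first term vanishes since $n\gastep\lambda^{n\gastep}\to 0$, leaving $\VnormEq[\lyape^{1/2}]{\invpig - \invpi} \leq C\sqrt\gastep$, after noting $\delta_x\rker^n \to \invpig$ in $\lyape^{1/2}$-norm by \Cref{propo:drift_tamed_euler}.

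The main obstacle is getting the discretization error to be genuinely $O(\sqrt\gastep)$ \emph{uniformly in $n$} with the correct $n\gastep\,\lambda^{n\gastep}$ decay on the initial-condition term, rather than the crude $O(\sqrt{n\gastep\,\gastep})$ that a one-shot Girsanov estimate gives; this requires carefully interleaving the short-time Girsanov estimate with the geometric ergodicity of the two processes (a ``one-block-at-a-time'' decomposition), and tracking that the exponential-moment bounds coming from \eqref{eq:drift-rker} and \eqref{eq:V-geom-ergo} are uniform in $\gastep \in \ocint{0,\gastep_0}$ and in $n$ — in particular that the constants $\ae, \be, \MM$ of \Cref{propo:drift_tamed_euler} can be taken uniform over $\gastep\le\gastep_0$, which must be checked from their explicit form in the proof of that proposition. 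A secondary technical point is the domination $\lyape^{1/2} \le \lyapa$ and the relation between the continuous-time Lyapunov function $\lyapa$ and the discrete one $\lyape$, so that \Cref{prop:existence_ergodicity} can be applied with the same weight appearing on both sides.
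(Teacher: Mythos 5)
Your overall strategy matches the paper's: triangle inequality through $\sgP_{n\gastep}$, Girsanov's theorem to control the pathwise KL divergence, and a blockwise swap of $\rker$-blocks for $\sgP$-blocks combined with the geometric ergodicity of the semigroup to get the right $n$-dependence. The KL computation, the one-step moment bounds feeding into it, and the limit $n\to\infty$ for \eqref{eq:thm-conv-Vnorm-2} are all as in the paper. Your remark that $\ae,\be,\MM$ must be checked uniform in $\gastep\le\gastep_0$ is also a legitimate point one has to verify from the proof of \Cref{propo:drift_tamed_euler}.

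There is, however, one genuine gap, and you flag the symptom yourself without supplying the cure. You compute that passing from KL to $\lyape^{1/2}$-weighted TV via plain Pinsker ($\tvnorm{\mu-\nu}\le\sqrt{\KL/2}$) followed by the Cauchy--Schwarz interpolation $\Vnorm[\lyape^{1/2}]{\mu-\nu}^2\lesssim(\mu(\lyape)+\nu(\lyape))\,\tvnorm{\mu-\nu}$ produces a bound of order $\KL^{1/4}$, hence $(T\gastep)^{1/4}$. You then say the fix is to localize Girsanov to a fixed time window $T_0$ and exploit the ergodic contraction over the rest. That step is indeed needed (it is exactly what the paper does, with $\kga=\ceil{\gastep^{-1}}$ so that each block has length $\approx 1$ in continuous time), but it cures only the $T$-dependence, not the exponent: with $T_0=O(1)$ per block, your chain of inequalities still gives $\KL=O(\gastep)$ per block, hence $\Vnorm[\lyape^{1/2}]{\cdot}=O(\gastep^{1/4})$ per block, and summing the geometric series over blocks cannot improve the power of $\gastep$. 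The missing ingredient is the \emph{weighted} Pinsker inequality used in the paper (cited as \cite[Lemma 24]{durmus2017}),
\[
\VnormEq[V^{1/2}]{\mu-\nu}^{2}\;\le\;2\bigl(\mu(V)+\nu(V)\bigr)\,\KL(\nu\,|\,\mu)\eqsp,
\]
which bounds the $V^{1/2}$-weighted TV \emph{directly} by the KL divergence (its proof goes through the Hellinger factorization $|1-\densityLigne{\mu}{\nu}|=|1-\sqrt{\densityLigne{\mu}{\nu}}|\,|1+\sqrt{\densityLigne{\mu}{\nu}}|$ and the inequality $H^2\le\KL/2$, not through plain TV and Pinsker). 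With this lemma, $\KL=O(\gastep)$ per block yields $\Vnorm[\lyape^{1/2}]{\cdot}=O(\sqrt{\gastep})$ per block, and the geometric telescope then gives exactly \eqref{eq:thm-conv-Vnorm-1}. Your phrase ``the short-horizon Girsanov bound $O(\sqrt{\gastep})$'' is true only for the unweighted TV, not for $\VnormEq[\lyape^{1/2}]{\cdot}$, so as written the argument does not close.

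A minor secondary point: for the ergodicity term you want to invoke \Cref{prop:existence_ergodicity} with the parameter $\a=\ae/2$, since $V_{\ae/2}=\lyape^{1/2}$ (not $\lyapa=\lyape^2$ as first suggested); your parenthetical ``or dominating $\lyape^{1/2}$ by $\lyapa$'' is the right instinct but the clean choice is $\a=\ae/2$ so the Lyapunov weight matches the norm on both sides.
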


\begin{proof}
The proof is postponed to \Cref{sec:proof-convergence-Vnorm}.
\end{proof}

By adding strong convexity for the potential, one obtains the corresponding bounds for  the Wasserstein distance of order $2$.
\begin{assumption}
\label{assum:convex}
$U$ is strongly convex, \ie~there exists $\mU >0$ such that for all $x,y \in \rset^d$,
\begin{equation*}
\ps{ \nabla U(x) - \nabla U(y)}{x-y}  \geq \mU \norm[2]{x-y} \eqsp.
\end{equation*}
\end{assumption}

By coupling $(\XL_t)_{t\geq 0}$ and the linear interpolation of $(X_k)_{k\in\nset}$ with the same Brownian motion, the following result is obtained.

\begin{theorem}
\label{thm:wasserstein-gamma2}
Assume \Cref{assum:dr-close-nablaU}, \Cref{assum:ergodicity-drift},  \Cref{assum:grad-loc-lipschitz}, \Cref{assum:ergodicity-U} and \Cref{assum:convex}.
Let $\gastep_0>0$. There exist $C>0$ and $\lambda\in\ooint{0,1}$ such that for all $x\in\rset^d$, $\gastep\in\ocint{0,\gastep_0}$ and $n\in\nset$,
\begin{equation}
\label{eq:thm-wasserstein2-1}
W_2^2 (\delta_x \rker^n, \invpi) \leq
C \parenthese{n\gastep \lambda^{n\gastep} \lyape(x) + \gastep} \eqsp,
\end{equation}
where $\ae$ is defined in \Cref{propo:drift_tamed_euler} and for all $\gastep\in\ocint{0,\gastep_0}$,
\begin{equation}
\label{eq:thm-wasserstein2-2}
W_2^2(\invpig, \invpi) \leq C\gastep \eqsp.
\end{equation}
\end{theorem}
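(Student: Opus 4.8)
The plan is to couple the diffusion \eqref{eq:langevin} with a continuous‑time interpolation of the chain \eqref{eq:def_tamed_euler_1} driven by the \emph{same} Brownian motion, turning the strong convexity of \Cref{assum:convex} into an exponential contraction while the drift estimate \eqref{eq:drift-rker-iter-n} keeps the discretization error at order $\gastep$; then a triangle inequality separates the discretization bias from the convergence to equilibrium. Fix $x\in\rset^d$, let $(\XL_t)_{t\geq 0}$ solve \eqref{eq:langevin} with $\XL_0=x$ and Brownian motion $(B_t)_{t\geq 0}$, and define $(\barY_t)_{t\geq 0}$ by $\barY_0=x$ and, for $k\in\nset$ and $t\in\coint{k\gastep,(k+1)\gastep}$,
\begin{equation*}
\barY_t = \barY_{k\gastep} - (t-k\gastep)\dr(\barY_{k\gastep}) + \sqrt{2}(B_t - B_{k\gastep}) \eqsp,
\end{equation*}
so that $(\barY_{k\gastep})_{k\in\nset}$ is a version of \eqref{eq:def_tamed_euler_1} and $\barY_{n\gastep}$ has law $\delta_x\rker^n$. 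With $\Delta_t=\barY_t-\XL_t$ the Brownian parts cancel and, on $\coint{k\gastep,(k+1)\gastep}$,
\begin{equation*}
\frac{\rmd}{\rmd t}\norm[2]{\Delta_t} = 2\ps{\Delta_t}{\nablaU(\XL_t) - \dr(\barY_{k\gastep})} \eqsp.
\end{equation*}
Splitting $\nablaU(\XL_t)-\dr(\barY_{k\gastep})$ into $(\nablaU(\XL_t)-\nablaU(\barY_t)) + (\nablaU(\barY_t)-\nablaU(\barY_{k\gastep})) + (\nablaU(\barY_{k\gastep})-\dr(\barY_{k\gastep}))$, \Cref{assum:convex} bounds the contribution of the first bracket by $-2\mU\norm[2]{\Delta_t}$, \Cref{assum:grad-loc-lipschitz} together with $\norm{\barY_t-\barY_{k\gastep}}\leq\gastep\norm{\dr(\barY_{k\gastep})}+\sqrt{2}\norm{B_t-B_{k\gastep}}$ controls the second, and \Cref{assum:dr-close-nablaU} bounds the third by $\gastep\Cal(1+\norm[\alphag]{\barY_{k\gastep}})$. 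Applying $2\ps{\Delta_t}{v}\leq\mU\norm[2]{\Delta_t}+\mU^{-1}\norm[2]{v}$ to the last two brackets gives, pointwise,
\begin{equation*}
\frac{\rmd}{\rmd t}\norm[2]{\Delta_t} \leq -\mU\norm[2]{\Delta_t} + \resteD_t \eqsp,
\end{equation*}
where $\resteD_t$ is a polynomial in $\norm{\barY_t}$, $\norm{\barY_{k\gastep}}$ and $\norm{B_t-B_{k\gastep}}$ carrying a factor $\gastep^2$ from the $\dr-\nablaU$ term and a factor $\norm[2]{B_t-B_{k\gastep}}$ (plus $\gastep^2\norm[2]{\dr(\barY_{k\gastep})}$) from the drift increment.

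To control $\resteD_t$ in $\Lone$, note that by \eqref{eq:estimate-drift-step} the quantity $\norm{\dr(\barY_{k\gastep})}$ is dominated by a polynomial in $\norm{\barY_{k\gastep}}$ for $\gastep\leq\gastep_0$, that Brownian increments over $\coint{k\gastep,(k+1)\gastep}$ have all moments of order $\gastep^{q/2}$, and that, by \eqref{eq:drift-rker-iter-n} and the fact that $\lyape$ dominates every polynomial, $\expe{\norm[q]{\barY_{k\gastep}}}\leq\CC(1+\rme^{-\ae^2 k\gastep}\lyape(x))$ uniformly in $k$ (the constant depending on $q$ and $\gastep_0$). Using Cauchy--Schwarz to separate the cross terms between $\barY$ and the increments yields $\expe{\resteD_t}\leq\CC\gastep(1+\rme^{-\ae^2 k\gastep}\lyape(x))$ for $t\in\coint{k\gastep,(k+1)\gastep}$. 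Taking expectations and integrating the differential inequality,
\begin{equation*}
\expe{\norm[2]{\Delta_{n\gastep}}} \leq \rme^{-\mU n\gastep}\expe{\norm[2]{\Delta_0}} + \CC\gastep\int_0^{n\gastep}\rme^{-\mU(n\gastep-s)}\parenthese{1 + \rme^{-\ae^2 s}\lyape(x)}\rmd s \eqsp,
\end{equation*}
and since $\Delta_0=0$ and the convolution integral is at most $\CC(1+n\gastep\,\lambda^{n\gastep}\lyape(x))$ for a suitable $\lambda\in\ooint{0,1}$ (the borderline case $\mU=\ae^2$ producing the linear prefactor), we get $W_2^2(\delta_x\rker^n,\delta_x\sgP_{n\gastep})\leq\expe{\norm[2]{\Delta_{n\gastep}}}\leq\CC(\gastep + n\gastep\,\lambda^{n\gastep}\lyape(x))$.

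It remains to compare $\delta_x\sgP_{n\gastep}$ with $\invpi$. Coupling two solutions of \eqref{eq:langevin} driven by the same Brownian motion, \Cref{assum:convex} gives $\frac{\rmd}{\rmd t}\norm[2]{\XL_t-\XL'_t}\leq -2\mU\norm[2]{\XL_t-\XL'_t}$, hence $W_2(\mu\sgP_t,\nu\sgP_t)\leq\rme^{-\mU t}W_2(\mu,\nu)$; in particular $W_2^2(\delta_x\sgP_{n\gastep},\invpi)\leq\rme^{-2\mU n\gastep}W_2^2(\delta_x,\invpi)\leq\CC\,\rme^{-2\mU n\gastep}\lyape(x)$, using $W_2^2(\delta_x,\invpi)\leq 2\norm[2]{x}+2\invpi(\norm[2]{\cdot})\leq\CC\lyape(x)$ (finiteness of $\invpi(\lyape)$ via \eqref{eq:drift-langevin}, and $\lyape$ dominating polynomials). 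The triangle inequality and $(a+b)^2\leq 2a^2+2b^2$, after absorbing $\rme^{-2\mU n\gastep}\lyape(x)$ into $n\gastep\,\lambda^{n\gastep}\lyape(x)$ (enlarging $\lambda$ and $\CC$ if needed), give \eqref{eq:thm-wasserstein2-1}. For \eqref{eq:thm-wasserstein2-2}, run the same coupling of diffusion and interpolated chain, now starting $\XL_0\sim\invpi$ and $\barY_0\sim\invpig$ under an optimal $W_2$‑coupling; then $\XL_{n\gastep}\sim\invpi$ and $\barY_{n\gastep}\sim\invpig$ for all $n$ by invariance (\Cref{propo:drift_tamed_euler}), and the moments of $\barY_{k\gastep}$ and $\XL_s$ are bounded by $\invpig(\lyape)$ and $\invpi(\lyape)$, both finite, so $\expe{\resteD_s}\leq\CC\gastep$; the Grönwall estimate becomes $W_2^2(\invpig,\invpi)\leq\rme^{-\mU n\gastep}W_2^2(\invpig,\invpi) + \CC\gastep$, and letting $n\to\plusinfty$ yields $W_2^2(\invpig,\invpi)\leq\CC\gastep$.

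The main difficulty is the estimate of $\resteD_t$: since the local Lipschitz constant of $\nablaU$ and the bound on $\norm{\dr}$ both grow polynomially (\Cref{assum:grad-loc-lipschitz}, \eqref{eq:estimate-drift-step}), the crude discretization error is only $O(1)$, so one must pair every polynomial weight with an $\Lone$‑small factor --- a Brownian increment or $\gastep(1+\norm[\alphag]{\barY_{k\gastep}})$ --- and then control all the resulting polynomial moments of $\barY$ uniformly in the number of steps, which is precisely what the exponential Lyapunov function $\lyape$ and the iterated drift bound \eqref{eq:drift-rker-iter-n} provide. A secondary, more routine point is the handling of the decaying weight $\rme^{-\ae^2 k\gastep}$ in the convolution integral, which is what produces the $n\gastep\,\lambda^{n\gastep}$ shape of the bounds.
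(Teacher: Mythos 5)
Your proposal is correct and uses essentially the same mechanism as the paper: couple the diffusion with the continuous-time interpolation of the chain via the same Brownian motion, derive a differential inequality for the squared gap (the Brownian parts cancel), split the drift difference into a strongly convex contraction term, an increment term and a taming-error term, handle the latter two by Young's inequality and \Cref{assum:grad-loc-lipschitz}/\Cref{assum:dr-close-nablaU}, and control the resulting polynomial moments uniformly in time via $\lyape$ and the discrete drift \eqref{eq:drift-rker-iter-n}. This is precisely what \Cref{lem:wasserstein-contraction-gamma2} plus the proof of \Cref{thm:wasserstein-gamma2} do.

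The final assembly differs slightly and both routes are valid. The paper starts the diffusion at $y$ and only at the end integrates $y$ against $\invpi$, so that $(\XL_{n\gastep},\XE_n)$ is automatically a coupling of $\invpi$ and $\delta_x\rker^n$, with no triangle inequality needed; you instead start both processes at $x$, bound $W_2^2(\delta_x\rker^n,\delta_x\sgP_{n\gastep})$, and add $W_2^2(\delta_x\sgP_{n\gastep},\invpi)$ via the $W_2$-contraction of the diffusion under \Cref{assum:convex} — a standard but extra ingredient. Similarly, the paper obtains \eqref{eq:thm-wasserstein2-2} by taking $n\to\plusinfty$ in \eqref{eq:thm-wasserstein2-1} using \cite[Corollary 6.11]{VillaniTransport}, whereas you rerun the coupling from a $W_2$-optimal stationary pair $(\invpi,\invpig)$ and close a Gr\"onwall inequality; that needs the a priori finiteness of $W_2(\invpig,\invpi)$, which is available since both measures have finite second moments by \eqref{eq:drift-langevin} and \eqref{eq:drift-rker}. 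The paper's route is somewhat more economical; yours is slightly more modular.

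One bookkeeping point: if Young's inequality $2\ps{\Delta_t}{v}\le\mU\norm[2]{\Delta_t}+\mU^{-1}\norm[2]{v}$ is applied separately to each of the two remaining brackets, the two $+\mU\norm[2]{\Delta_t}$ contributions exactly cancel the $-2\mU\norm[2]{\Delta_t}$ from strong convexity and no contraction survives. You must either apply it once to the sum of the two brackets, or use smaller coefficients on each (the paper uses $\mU/4$ on each term in \Cref{lem:wasserstein-contraction-gamma2}). This is a trivial fix but worth making explicit.
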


\begin{proof}
The proof is postponed to \Cref{sec:proof-strongly-convex}.
\end{proof}

If $\pU\in\Csetfunction^2(\rset^d,\rset)$ and under the following assumption on $\nablaS \pU$, the bound can be improved.

\begin{assumption}
\label{assum:hessian}
$\pU$ is twice continuously differentiable and there exist $\nue, \LH \in \rset_+$ and $\betah\in\ccint{0,1}$ such that for all $x,y\in\rset^d$,
\begin{equation*}
  \norm{\nabla^2 U(x) - \nabla^2 U(y)} \leq \LH \defEns{1+\norm[\nue]{x} +\norm[\nue]{y}} \norm[\betah]{x-y}\eqsp.
\end{equation*}
\end{assumption}

It is shown in \Cref{sec:proof-strongly-convex} that \Cref{assum:hessian} implies \Cref{assum:grad-loc-lipschitz}.

\begin{theorem}
\label{thm:wasserstein-gamma3}
Assume \Cref{assum:dr-close-nablaU}, \Cref{assum:ergodicity-drift},   \Cref{assum:ergodicity-U}, \Cref{assum:convex} and \Cref{assum:hessian}.
Let $\gastep_0>0$. There exist $C>0$ and $\lambda\in\ooint{0,1}$ such that for all $x\in\rset^d$, $\gastep\in\ocint{0,\gastep_0}$ and $n\in\nset$,
\begin{equation}
\label{eq:thm-wasserstein3-1}
W_2^2 (\delta_x \rker^n, \invpi) \leq
C \parenthese{n\gastep^{1+\betah} \lambda^{n\gastep} \lyape(x) + \gastep^{1+\betah}} \eqsp,
\end{equation}
where $\ae$ is defined in \Cref{propo:drift_tamed_euler} and for all $\gastep\in\ocint{0,\gastep_0}$,
\begin{equation}
\label{eq:thm-wasserstein3-2}
W_2^2(\invpig, \invpi) \leq C\gastep^{1+\betah} \eqsp.
\end{equation}
\end{theorem}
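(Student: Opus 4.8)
The plan is to reuse the synchronous coupling of \Cref{thm:wasserstein-gamma2}, the only change being a finer, \emph{second-order}, control of the one-step discretization error exploiting the H\"older regularity of $\nablaS\pU$ in \Cref{assum:hessian}. First, since \Cref{assum:hessian} implies \Cref{assum:grad-loc-lipschitz}, \Cref{prop:existence_ergodicity}, \Cref{propo:drift_tamed_euler} and \Cref{thm:wasserstein-gamma2} all apply; in particular \eqref{eq:drift-rker-iter-n}, together with the fact that every polynomial in $\norm{\cdot}$ is dominated by $\lyape$, yields the moment bounds $\rker^n\norm[q]{\cdot}(x)\leq C_q\{\rme^{-\ae^2 n\gastep}\lyape(x)+1\}$ uniform in $n$, and likewise $\sup_{t\geq0}\sgP_t\norm[q]{\cdot}(x)\leq C_q\{1+\lyape(x)\}$ for the diffusion. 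I would couple $(\XL_t)_{t\geq0}$ solving \eqref{eq:langevin} with $\XL_0\sim\invpi$ and the continuous-time interpolation $(\bar{\XE}_t)_{t\geq0}$ of \eqref{eq:def_tamed_euler_1}, $\bar{\XE}_t=\XE_k-(t-k\gastep)\dr(\XE_k)+\sqrt{2}(B_t-B_{k\gastep})$ on $[k\gastep,(k+1)\gastep)$ with $\bar{\XE}_0=x$, driven by the same Brownian motion, and set $u(t)=\PE[\norm[2]{\bar{\XE}_t-\XL_t}]$, so that $W_2^2(\delta_x\rker^n,\invpi)\leq u(n\gastep)$.

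Since the Brownian parts cancel, $t\mapsto\bar{\XE}_t-\XL_t$ has finite variation and, on $[k\gastep,(k+1)\gastep)$, inserting $\pm\nablaU(\bar{\XE}_t)$ and $\pm\nablaU(\XE_k)$,
\[
\tfrac{\rmd}{\rmd t}\norm[2]{\bar{\XE}_t-\XL_t}=-2\ps{\bar{\XE}_t-\XL_t}{\nablaU(\bar{\XE}_t)-\nablaU(\XL_t)}-2\ps{\bar{\XE}_t-\XL_t}{\dr(\XE_k)-\nablaU(\XE_k)}-2\ps{\bar{\XE}_t-\XL_t}{\nablaU(\XE_k)-\nablaU(\bar{\XE}_t)}\eqsp.
\]
I would bound the first scalar product below by $\mU\norm[2]{\bar{\XE}_t-\XL_t}$ via \Cref{assum:convex}, and absorb the second term into $\varepsilon u(t)+C_\varepsilon\gastep^2\{1+\PE[\norm[2\alphag]{\XE_k}]\}$ using \Cref{assum:dr-close-nablaU}, Cauchy--Schwarz and Young. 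The point is the third, discretization, term, where instead of the local Lipschitz bound of \Cref{thm:wasserstein-gamma2} one expands to second order: with $\Delta_t=\bar{\XE}_t-\XE_k$,
\[
\nablaU(\XE_k)-\nablaU(\bar{\XE}_t)=-\nablaS\pU(\XE_k)\Delta_t-\int_0^1\{\nablaS\pU(\XE_k+s\Delta_t)-\nablaS\pU(\XE_k)\}\Delta_t\,\rmd s\eqsp.
\]
By \Cref{assum:hessian} the remainder integral is $O((1+\norm[\nue]{\XE_k}+\norm[\nue]{\Delta_t})\norm[1+\betah]{\Delta_t})$; since $\PE[\norm[p]{\Delta_t}]=O(\gastep^{p/2})$ for every $p$ — the drift part obeys $\gastep\norm[2]{\dr(\XE_k)}\leq C\norm{\nablaU(\XE_k)}$, one then uses \eqref{eq:estimate_nabla_u_x}, and the Brownian part is controlled by Gaussian moments — Cauchy--Schwarz and Young make this remainder contribute at most $\varepsilon u(t)+C_\varepsilon\gastep^{1+\betah}\{1+\PE[\mathrm{poly}(\XE_k)]\}$. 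For the leading term $\nablaS\pU(\XE_k)\Delta_t$ I would split $\bar{\XE}_t-\XL_t=(\XE_k-\XL_{k\gastep})+\Delta_t-(\XL_t-\XL_{k\gastep})$: as $\nablaS\pU(\XE_k)$ is $\filtration_{k\gastep}$-measurable and the Brownian increment in $\Delta_t$ is centred given $\filtration_{k\gastep}$, the term in $\XE_k-\XL_{k\gastep}$ vanishes in expectation up to an $O(\gastep^2)$ drift contribution; the diagonal term in $\Delta_t$ equals $2(t-k\gastep)\PE[\Delta\pU(\XE_k)]$; and the term in $-(\XL_t-\XL_{k\gastep})$ equals $-2(t-k\gastep)\PE[\Delta\pU(\XE_k)]$ through the shared increment $B_t-B_{k\gastep}$, so these two genuinely $O(\gastep)$ contributions cancel, an It\^o expansion of $\nablaU(\XL_\cdot)$ on $[k\gastep,t]$ reducing all other cross-terms to $O(\gastep^2)$. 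Taking $\varepsilon$ small then gives $u'(t)\leq-\mU u(t)+C\gastep^{1+\betah}\{1+\PE[\mathrm{poly}(\XE_{\floor{t/\gastep}})]\}$.

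Grönwall's lemma and the moment bound of the first paragraph then deliver \eqref{eq:thm-wasserstein3-1}: $u(0)=\int\norm[2]{x-y}\,\invpi(\rmd y)\leq C\lyape(x)$ decays geometrically in $n\gastep$, and $\int_0^{n\gastep}\rme^{-\mU(n\gastep-s)}\gastep^{1+\betah}\{1+\rme^{-\ae^2 s}\lyape(x)\}\,\rmd s\leq C\gastep^{1+\betah}\{1+n\gastep\,\lambda^{n\gastep}\lyape(x)\}$, after enlarging $\lambda\in\ooint{0,1}$ to cover the resonant case $\mU=\ae^2$. Then \eqref{eq:thm-wasserstein3-2} follows on letting $n\to\infty$ in \eqref{eq:thm-wasserstein3-1}: by \Cref{propo:drift_tamed_euler}, $\delta_x\rker^n\to\invpig$ in $\lyape$-norm, hence weakly and with convergence of second moments (as $\lyape$ dominates $\norm[2]{\cdot}$), so the left side tends to $W_2^2(\invpig,\invpi)$ and the right side to $C\gastep^{1+\betah}$. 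I expect the main obstacle to be precisely the second paragraph, and within it the cancellation of the two $O(\gastep)$ terms of the second-order expansion against the Brownian increment of the diffusion: once that is established, the H\"older modulus of $\nablaS\pU$ from \Cref{assum:hessian} governs the entire remaining local error at order $\gastep^{1+\betah}$, which is what upgrades the $\gastep$ of \Cref{thm:wasserstein-gamma2} to $\gastep^{1+\betah}$.
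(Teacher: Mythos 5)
Your proposal is correct and follows essentially the same route as the paper: synchronous coupling driven by a common Brownian motion, a second-order Taylor expansion of $\nablaU$ around the current chain point, Young's inequality and polynomial moment estimates to control the H\"older remainder from \Cref{assum:hessian} at order $\gastep^{1+\betah}$, and a Gr\"onwall/geometric-sum argument followed by $n\to\infty$ for \eqref{eq:thm-wasserstein3-2}. The one place where you diverge in form is the treatment of the leading Hessian term $\nablaS\pU(\XE_k)\Delta_t$. You split $\Theta_t=\bar{\XE}_t-\XL_t$ into $(\XE_k-\XL_{k\gastep})+\Delta_t-(\XL_t-\XL_{k\gastep})$, which reintroduces the shared Brownian increment into the last two pieces, and then you explicitly observe that the two resulting $O(\gastep)\,\PE[\Delta\pU(\XE_k)]$ contributions cancel. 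This is correct, but it is exactly the synchronous-coupling cancellation seen twice: the paper avoids the detour by never splitting $\Theta_t$, keeping it as a finite-variation process (the Brownian parts already cancelled), and then handling the pairing of $\Theta_t$ with $\nablaS\pU(x)(\XEL_t-x)$ by separating $\XEL_t-x=-t\dr(x)+\sqrt{2}B_t$ into its deterministic part $A_{121}$ (Young, $O(t^2)$) and Brownian part $A_{122}$, where $\PE[\ps{\Theta_t}{\nablaS\pU(x)B_t}]$ is $O(t^2)$ directly by Cauchy--Schwarz after replacing $\Theta_t$ by its only $B_t$-correlated part $\int_0^t\{\nablaU(\XL_s)-\nablaU(y)\}\rmd s$. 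So where you describe the cancellation as ``the main obstacle'', in the paper's grouping there is in fact nothing to cancel, which is cleaner and arguably the better way to present it. Two small points to keep straight if you flesh this out: (i) the term you call ``vanishes up to an $O(\gastep^2)$ drift contribution'' is really an $O(\gastep)\cdot\norm{\XE_k-\XL_{k\gastep}}$ term that needs Young to become $\varepsilon u+C_\varepsilon\gastep^2$; and (ii) the $A_{122}$-type contribution is controlled by moments of the \emph{diffusion} $(\XL_s)$, not of the chain, so the Gr\"onwall inequality should carry both $\PE[\mathrm{poly}(\XE_k)]$ and $\PE[\mathrm{poly}(\XL_{k\gastep})]$ on its right-hand side (the paper's Lemma~\ref{lem:wasserstein-contraction-gamma3} and the decomposition into $A_n$ and $B_n$ in its proof of the theorem reflect exactly this, using \Cref{prop:existence_ergodicity} for the diffusion moments and \Cref{propo:drift_tamed_euler} for the chain moments); your first paragraph has the ingredients but the displayed ODE drops the diffusion-moment term.
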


\begin{proof}
The proof is postponed to \Cref{sec:proof-strongly-convex}.
\end{proof}


The exponent of $\gastep$ in \eqref{eq:thm-wasserstein2-1} is improved from $1$ to $1+\betah$. In particular, if $\nablaS \pU$ is Lipschitz, $\nue=0$, $\betah=1$, and \cite[Theorem 8]{2016arXiv160501559D} is recovered.

Let $(X_k)_{k\in\nset}$ be the Markov chain defined in \eqref{eq:def_tamed_euler_1}. To study the empirical average $(1/n)\sum_{k=0}^{n-1} \{ \tf(X_k) - \invpi(\tf) \}$ for $n\in\nset^*$, we follow a method introduced in \cite{doi:10.1137/090770527} and based on the Poisson equation.
For $\tf$ a $\invpi$-integrable function, the Poisson equation associated with the generator $\generator$ defined in \eqref{eq:generator-langevin} is given for all $x\in\rset^d$ by
\begin{equation}
\label{eq:eq-Poisson}
\generator \sPoi(x) = -\parenthese{\tf(x) -  \invpi(\tf)} \eqsp,
\end{equation}
where $\sPoi$, if it exists, is a solution of the Poisson equation.
This equation has proved to be a useful tool to analyze additive functionals of diffusion processes, see \eg~\cite{cattiaux2012central} and references therein.
The existence and regularity of a solution of the Poisson equation has been investigated in \cite{glynn1996}, \cite{pardoux2001}, \cite{kopec:overdamped}, \cite{2016arXiv161106972G}.
In that purpose, the following additional assumption on $\pU$ is necessary.

\begin{assumption}
\label{assumption:U-C4}
$\pU\in\Csetfunction^4(\rset^d,\rset)$ and $\norm{\DD^i \pU} \in \Cpoly(\rset^d, \rset_+)$ for $i\in\defEns{1,\ldots,4}$.
\end{assumption}

%

\begin{theorem}
\label{prop:bias-MSE-poisson-equation}
Assume \Cref{assum:ergodicity-U}, \Cref{assumption:U-C4}, \Cref{assum:dr-close-nablaU} and \Cref{assum:ergodicity-drift}. Let $\tf\in\Csetfunction^3(\rset^d,\rset)$ be such that $\norm{\DD^i \tf} \in \Cpoly(\rset^d, \rset_+)$ for $i\in\defEns{0,\ldots,3}$.
Let $\gastep_0 > 0$ and $(\XE_k)_{k\in\nset}$ be the Markov chain defined by \eqref{eq:def_tamed_euler_1} and starting at $X_0=0$. There exists $C>0$ such that for all $\gastep\in\ocint{0,\gastep_0}$ and $n\in\nset^*$,
\begin{equation}\label{eq:bias-poisson-eq}
\absolute{\expe{\frac{1}{n}\sum_{k=0}^{n-1} \tf(\XE_k) - \invpi(\tf)}} \leq C \parenthese{\gastep + \frac{1}{n\gastep}}
\end{equation}
and
\begin{equation}\label{eq:MSE-poisson-eq}
\expe{\parenthese{\frac{1}{n}\sum_{k=0}^{n-1} \tf(\XE_k) - \invpi(\tf)}^2} \leq C \parenthese{\gastep^2 + \frac{1}{n\gastep}} \eqsp.
\end{equation}
\end{theorem}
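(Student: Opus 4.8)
The plan is to follow the standard Poisson-equation approach of \cite{doi:10.1137/090770527}: introduce a smooth solution $\sPoi$ of the Poisson equation \eqref{eq:eq-Poisson}, decompose the empirical-average error into a telescoping term plus a martingale-type remainder, and control each piece. First I would invoke the existence and regularity results for \eqref{eq:eq-Poisson} quoted after the statement (under \Cref{assum:ergodicity-U} and \Cref{assumption:U-C4}, together with the hypothesis $\norm{\DD^i \tf}\in\Cpoly$ for $i\in\{0,\ldots,3\}$) to obtain $\sPoi\in\Csetfunction^3(\rset^d,\rset)$ with $\norm{\DD^i\sPoi}\in\Cpoly(\rset^d,\rset_+)$ for $i\in\{1,2,3\}$. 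Then, writing $\rr_k = \tf(\XE_k)-\invpi(\tf) = -\generator\sPoi(\XE_k)$ and using a Taylor expansion of $\sPoi$ along one step of \eqref{eq:def_tamed_euler_1}, I would expand $\sPoi(\XE_{k+1})-\sPoi(\XE_k)$ to second order (or to third order for the MSE bound), taking conditional expectation with respect to $\Ft_k=\sigma(\XE_0,\ZE_1,\ldots,\ZE_k)$. The leading term reproduces $\gastep\,\generator\sPoi(\XE_k) = -\gastep\,\rr_k$ up to: (i) a term measuring the discrepancy between $\dr(\XE_k)$ and $\nablaU(\XE_k)$, which is $\bigO(\gastep^2(1+\norm[\alphag]{\XE_k}))$ by \Cref{assum:dr-close-nablaU}; and (ii) a Taylor remainder of order $\gastep^2$ controlled by $\norm{\DD^2\sPoi}$, $\norm{\DD^3\sPoi}$ and polynomial moments of $\XE_k$ and $\ZE_{k+1}$.

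Summing over $k=0,\ldots,n-1$ and rearranging gives
\begin{equation*}
\frac1n\sum_{k=0}^{n-1}\rr_k = \frac{1}{n\gastep}\parenthese{\sPoi(\XE_0)-\expeMarkov{}{\sPoi(\XE_n)\mid\Ft_0}} - \frac{1}{n\gastep}\sum_{k=0}^{n-1}\martInc_{k+1} + \frac1n\sum_{k=0}^{n-1}\resteR_k \eqsp,
\end{equation*}
where $\martInc_{k+1} = \sPoi(\XE_{k+1})-\expeMarkov{}{\sPoi(\XE_{k+1})\mid\Ft_k}$ are martingale increments and $\resteR_k$ collects the $\bigO(\gastep)$-per-step remainders described above (note $\resteR_k/\gastep$ has size $\bigO(\gastep)$ times polynomial moments). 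For the bias \eqref{eq:bias-poisson-eq}, take expectations: the martingale term vanishes, the boundary term is bounded by $(2/(n\gastep))\sup_{k\le n}\expe{\abs{\sPoi(\XE_k)}}$, and $\expe{\abs{\resteR_k}}\le C\gastep^2\,(1+\text{moments})$. The required uniform-in-$\gastep$, uniform-in-$k$ polynomial moment bounds $\sup_{k\in\nset}\expe{\norm[q]{\XE_k}}<\infty$ for every $q$ follow from the exponential Lyapunov drift \eqref{eq:drift-rker-iter-n} applied to $\lyape$ with $\XE_0=0$, since $\lyape$ dominates every polynomial. This yields $\abs{\expe{n^{-1}\sum \rr_k}}\le C(\gastep + 1/(n\gastep))$.

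For the MSE \eqref{eq:MSE-poisson-eq} I would square the decomposition and use $(\sum_{i=1}^3 a_i)^2\le 3\sum a_i^2$. The boundary term squared is $\bigO(1/(n\gastep)^2)\le\bigO(1/(n\gastep))$ when $n\gastep\ge 1$ (and the statement is trivial otherwise, after absorbing into $C$). The remainder term squared is $\bigO(\gastep^2)$ by Cauchy--Schwarz and the moment bounds. The key term is the martingale one: by orthogonality of increments, $\expe{(\sum_{k=0}^{n-1}\martInc_{k+1})^2}=\sum_{k=0}^{n-1}\expe{\martInc_{k+1}^2}\le C n$ (each increment has bounded second moment via the moment bounds and the polynomial growth of $\sPoi$), so after dividing by $(n\gastep)^2$ this contributes $\bigO(1/(n\gastep^2))$. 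To get the sharper $1/(n\gastep)$ one uses a third-order expansion: one shows $\expeMarkov{}{\martInc_{k+1}^2\mid\Ft_k} = \gastep\,\Gammabf(\XE_k) + \bigO(\gastep^2)$ for an explicit polynomially bounded carré-du-champ-type function $\Gammabf$, so $\sum\expe{\martInc_{k+1}^2}\le C n\gastep$, giving the martingale contribution $\bigO(1/(n\gastep))$; this is exactly the argument of \cite{doi:10.1137/090770527}.

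I expect the main obstacle to be the bookkeeping around the regularity and polynomial growth of $\sPoi$ together with the $\Csetfunction^3$ (rather than $\Csetfunction^4$) regularity of $\tf$: one must be careful that the third-order Taylor expansion needed for the sharp MSE bound only uses $\DD^3\sPoi$, which in turn requires $\tf\in\Csetfunction^3$ and $\pU\in\Csetfunction^4$, matching exactly the hypotheses. Verifying that the cited Poisson-equation results apply under precisely these assumptions, and that all the remainder terms are genuinely $\bigO(\gastep^2)$ uniformly in $\gastep\le\gastep_0$ with constants depending only on polynomial moments that are themselves uniformly bounded by \eqref{eq:drift-rker-iter-n}, is where the real work lies.
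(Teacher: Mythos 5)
Your overall strategy matches the paper's: introduce the Poisson-equation solution $\sPoi$, Taylor-expand $\sPoi(\XE_{k+1})-\sPoi(\XE_k)$ along a step of \eqref{eq:def_tamed_euler_1}, collect the $\gastep\generator\sPoi(\XE_k)$ term, sum the telescope, and split the residual into martingale and non-martingale parts with moments controlled by the exponential Foster--Lyapunov bound \eqref{eq:drift-rker-iter-n}. The decomposition, the role of \Cref{assum:dr-close-nablaU} for the $\dr$-versus-$\nablaU$ correction, the orthogonality-of-increments argument giving the $1/(n\gastep)$ martingale contribution, and the handling of the $n\gastep<1$ regime are all essentially what the paper does.

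However, there is a genuine slip in the regularity you claim for $\sPoi$, and it matters for the rates. You assert that the Poisson lemma yields $\sPoi\in\Csetfunction^3(\rset^d,\rset)$ with $\norm{\DD^i\sPoi}\in\Cpoly$ only for $i\in\{1,2,3\}$, and you accordingly propose a second-order Taylor expansion for the bias (third-order for the MSE) with remainder controlled by $\DD^2\sPoi,\DD^3\sPoi$. This is not enough. With only $\DD^3\sPoi$ continuous and polynomially bounded, the Taylor remainder at step $k$ is $\bigO(\norm[3]{\inc_{k+1}})=\bigO(\gastep^{3/2})$ in magnitude, and you cannot extract the extra half power from its conditional mean without knowing that $\DD^3\sPoi$ is \emph{locally Lipschitz} --- equivalently, that $\DD^4\sPoi$ exists and is polynomially bounded. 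The odd Gaussian moments kill the $\gastep^{3/2}$ contribution of $\DD^3\sPoi(\XE_k)[\inc_{k+1}^{\otimes 3}]$, but the difference $\DD^3\sPoi(\XE_k+s\inc_{k+1})-\DD^3\sPoi(\XE_k)$ is only $o(1)$ under mere continuity, so the per-step remainder mean is $o(\gastep^{3/2})$, not $\bigO(\gastep^2)$, and the argument would deliver bias $\bigO(\sqrt{\gastep})$ rather than $\bigO(\gastep)$. The paper's \Cref{prop:existence-sol-Poisson} in fact gives $\sPoi\in\Csetfunction^4(\rset^d,\rset)$ with $\norm{\DD^i\sPoi}\in\Cpoly$ for $i\in\{0,\ldots,4\}$ (this is the elliptic regularity gain: $\generator\sPoi=-\tfb$ with $\tfb\in\Csetfunction^3$ and $\Csetfunction^3$ coefficients implies $\sPoi\in\Csetfunction^4$), and the paper's proof uses a genuine fourth-order Taylor expansion with the integral remainder $\rr_k$ controlled by $\DD^4\sPoi$, yielding $\expe{\rr_k}=\bigO(\gastep^2)$. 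Once you replace your stated regularity by the correct $\Csetfunction^4$ statement and carry one more order in the Taylor expansion, your argument goes through exactly as you sketch it.
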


\begin{proof}
The proof is postponed to \Cref{sec:proof-prop:bias-MSE-poisson-equation}.
\end{proof}

Note that the standard rates of convergence are recovered, see \cite[Theorems 5.1, 5.2]{doi:10.1137/090770527}.

\section{Numerical examples}
\label{sec:examples}


%

We illustrate our theoretical results using three numerical examples.

\paragraph*{Multivariate Gaussian variable in high dimension}
We first consider a multivariate Gaussian variable in dimension $d\in\{100,1000\}$ of mean $0$ and covariance matrix $\Sigma = \diag(1,\ldots,d)$. The potential $\pU:\rset^d \to \rset$ defined for all $x\in\rset^d$ by $\pU(x) = (1/2) x^{\Tr} \Sigma^{-1} x$ is $d^{-1}$-strongly convex and $1$-gradient Lipschitz. The assumptions \Cref{assum:grad-loc-lipschitz}, \Cref{assum:ergodicity-U}, \Cref{assum:convex}, \Cref{assum:hessian} with $\betah=1$ and \Cref{assumption:U-C4} are thus satisfied. Note that in this case, ULA is stable and the analysis of \cite{dalalyan:2017}, \cite{durmus2017}, \cite{2016arXiv160501559D} valid. Nevertheless, implementing TULA and TULAc on this example is still of interest. Indeed, some Bayesian posterior distributions have intricate expressions and identifying the superlinear part in the gradient $\nablaU$ may be a difficult task. Within this context, we check the robustness of TULA and TULAc with respect to (globally) Lipschitz $\nablaU$.

We also consider in \Cref{sec:suppl-badly-conditioned-gaussian} a badly conditioned multivariate Gaussian variable in dimension $d=100$ of mean $0$ and covariance matrix $\Sigma = \diag(10^{-5}, 1, \ldots, 1)$. In this example, ULA requires a step size of order $10^{-5}$ to be stable which implies a large number of iterations to obtain relevant results. On the other side, TULA and TULAc are applicable with a step size of order $10^{-2}$ and within a relatively small number of iterations, valid results for the axes $2$ to $100$ are obtained.

\paragraph*{Double well}
The potential is defined for all $x\in\rset^d$ by $\pU(x) = (1/4)\norm[4]{x} - (1/2)\norm[2]{x}$. We have $\nablaU(x) = (\norm[2]{x} -1) x$ and $\nablaS \pU(x) = (\norm[2]{x} -1) \Id + 2x x^{\Tr}$. We get $\norm{\nablaS \pU(x)} = 3\norm[2]{x} -1$, $\ps{x}{\nablaU(x)} = \norm{x}\norm{\nablaU(x)}$ for $\norm{x} \geq 1$ and
\[ \norm{\nablaS \pU(x) - \nablaS \pU(y)} \leq 3\parenthese{\norm{x}+\norm{y}}\norm{x-y} \eqsp, \]
so that \Cref{assum:grad-loc-lipschitz}, \Cref{assum:ergodicity-U},  \Cref{assum:hessian} with $\beta=1$ and \Cref{assumption:U-C4} are satisfied.


\paragraph*{Ginzburg-Landau model}
This model of phase transitions in physics \cite[Section 6.2]{livingstone:2017} is defined on a three-dimensional $d=\pginz^3$ lattice for $\pginz\in\nset^*$ and the potential is given for $\psig = \parenthese{\psig_{ijk}}_{i,j,k\in\defEns{1,\ldots,\pginz}} \in \rset^{d}$ by
\begin{equation*}
\pU(\psig) = \sum_{i,j,k=1}^{\pginz} \defEns{\frac{1-\tau}{2}\psig_{ijk}^2 + \frac{\tau \alpha}{2}\norm[2]{\nablat \psig_{ijk}} + \frac{\tau\lambda}{4}\psig_{ijk}^4} \eqsp,
\end{equation*}
where $\alpha,\lambda,\tau>0$ and $\nablat \psig_{ijk} = (\psig_{i_{+}jk}-\psig_{ijk},\psig_{ij_{+}k}-\psig_{ijk},\psig_{ijk_{+}}-\psig_{ijk})$ with $i_{\pm} = i\pm 1 \eqsp \text{mod} \eqsp \pginz$ and similarly for $j_{\pm}, k_{\pm}$.
In the simulations, $\pginz$ is equal to $10$.
We have
\begin{multline*}
\nablaU (\psig) = \Big\{
\tau\alpha\parenthese{6\psig_{ijk}-\psig_{i_{+}jk}-\psig_{ij_{+}k}-\psig_{ijk_{+}}-\psig_{i_{-}jk}-\psig_{ij_{-}k}-\psig_{ijk_{-}}} \\
+ (1-\tau) \psig_{ijk} + \tau\lambda \psig_{ijk}^3
\Big\}_{i,j,k\in\defEns{1,\ldots,\pginz}} \eqsp,
\end{multline*}
and
\begin{equation*}
\nablaS \pU(\psig) = \diag\parenthese{\parenthese{1-\tau + 6\tau\alpha + 3\tau\lambda \psig_{ijk}^2}_{i,j,k\in\defEns{1,\ldots,\pginz}}} + M \eqsp,
\end{equation*}
where $M\in\rset^{d \times d}$ is a constant matrix.
\Cref{assum:grad-loc-lipschitz}, \Cref{assum:hessian} with $\betah=1$ and \Cref{assumption:U-C4} are thus satisfied.
Using that $\psig \mapsto \sum_{i,j,k=1}^{\pginz} \norm[2]{\nablat \psig_{ijk}}$ is convex by composition of convex functions and its gradient evaluated in $0$ is $0$, we have for all $\psig\in\rset^d$,
\[ \ps{\psig}{\nablaU(\psig)} \geq \sum_{i,j,k=1}^{\pginz} \{(1-\tau) \psig_{ijk}^2 + \tau\lambda \psig_{ijk}^4 \} \eqsp. \]
By Cauchy-Schwarz inequality, $\defEns{\sum_{i,j,k=1}^{\pginz} \psig_{ijk}^2}^2 \leq d \sum_{i,j,k=1}^{\pginz} \psig_{ijk}^4$, and for all $x\in\rset^d$, $\norm[2]{\psig} \geq (2\absolute{1-\tau}d) / (\tau\lambda)$, we get $\ps{\psig}{\nablaU(\psig)} \geq \{(\tau\lambda)/2\} \sum_{i,j,k=1}^{\pginz} \psig_{ijk}^4$. Besides, we have
\[ \norm{\nablaU(\psig)} \leq (\absolute{1-\tau} + 12\tau\alpha) \norm{\psig} + \tau\lambda\norm{(\psig_{ijk}^3)_{i,j,k\in\defEns{1,\ldots,\pginz}}} \eqsp. \]
Let $a,b,c\in\defEns{1,\ldots,\pginz}$ be such that $\absolute{\psig_{abc}} = \max \absolute{\psig_{ijk}}$.
We get
\[ \norm{\psig} \norm{(\psig_{ijk}^3)_{i,j,k\in\defEns{1,\ldots,\pginz}}} \leq d \psig_{abc}^4 \leq d \sum_{i,j,k=1}^{\pginz} \psig_{ijk}^4 \eqsp. \]
Finally, for $\norm[2]{\psig} \geq \max\{1, (2\absolute{1-\tau}d) / (\tau\lambda)\}$, we obtain
\[ \norm{\psig}\norm{\nablaU(\psig)} \leq  \defEns{\frac{2d\absolute{1-\tau}}{\tau\lambda}+\frac{24\alpha d}{\lambda} + 2d} \ps{\psig}{\nablaU(\psig)} \eqsp, \]
and \Cref{assum:ergodicity-U} is satisfied.

We benchmark TULA and TULAc against ULA given by \eqref{eq:def-euler}, MALA and a Random Walk Metropolis-Hastings with a Gaussian proposal (RWM). TMALA (Tamed Metropolis Adjusted Langevin Algorithm) and TMALAc (coordinate-wise Tamed Metropolis Adjusted Langevin Algorithm), the Metropolized versions of TULA and TULAc, are also included in the numerical tests. Their theoretical analysis is similar to the one of MALTA \cite[Proposition 2.1]{Atchade2006}.

Since double well and Ginzburg-Landau models are coordinate-wise exchangeable, the results are provided only for their first coordinate.
The Markov chains associated with these models are started at $X_0=0, (10,0^{\otimes(d-1)}), (100,0^{\otimes(d-1)})$, $(1000,0^{\otimes(d-1)})$ and for the multivariate Gaussian at a random vector of norm $0,10,100,1000$.
For the Gaussian and double well examples, for each initial condition, algorithm, step size $\gastep\in\defEns{10^{-3},10^{-2},10^{-1}}$, we run $100$ independent Markov chains started at $X_0$ of $10^6$ samples (respectively $10^5$) in dimension $d=100$ (respectively $d=1000$).
For the Ginzburg-Landau model, we run $100$ independent Markov chains started at $X_0$ of $10^5$ samples.
For each run, we estimate the 1st and 2nd moment for the first and last coordinate, \ie~$\int_{\rset^d} x_i \invpi(x) \rmd x$ for $i\in\defEns{1,d}$, by the empirical average
and we compute the boxplots of the errors.
For ULA, if the norm of $X_k$ for $k\in\nset$ exceeds $10^5$, the chain is stopped and for this step size $\gastep$ the trajectory of ULA is not taken into account. For MALA, RWM, TMALA and TMALAc, if the acceptance ratio is below $0.05$, we similarly do not take into account the corresponding trajectories.

For the three examples and for $i\in\defEns{1,\ldots,d}$, $\int_{\rset^d} x_i \invpi(x) \rmd x = 0$. By symmetry, for the double well, we have for $i\in\defEns{1,\ldots,d}$ and $r\in\rset_+$,
\begin{equation*}
\expe{X_i^2} = d^{-1} \int_{\rset_+} r^2 \nu(r) \rmd r \Big/ \int_{\rset_+} \nu(r) \rmd r \eqsp,\quad
\nu(r) = r^{d-1} \exp\defEns{(r^2/2)-(r^4/4)} \eqsp.
\end{equation*}
A Random Walk Metropolis run of $10^7$ samples gives $\int_{\rset^d} x_i^2 \invpi(x) \rmd x \approx 0.104 \pm 0.001$ for $d=100$ and $\int_{\rset^d} x_i^2 \invpi(x) \rmd x \approx 0.032 \pm 0.001$ for $d=1000$.

Because of lack of space, we only display some boxplots in \Cref{figure:icg-1c-1st-d1000-x0,figure:dw-1st-d100-x100,figure:dw-2nd-d100-x0,figure:g-1st-d1000-x100}.
The Python code and all the figures are available at \url{https://github.com/nbrosse/TULA}.
We remark that TULA, TULAc and to a lesser extent, TMALA and TMALAc, have a stable behavior even with large step sizes and starting far from the origin.
This is particularly visible in \Cref{figure:dw-1st-d100-x100,figure:g-1st-d1000-x100} where ULA diverges (\ie~$\liminf_{k\to\plusinfty} \expe{\norm{X_k}} = \plusinfty$) and MALA does not move even for small step sizes $\gastep=10^{-3}$.
Note however the existence of a bias for ULA, TULA and TULAc in \Cref{figure:dw-2nd-d100-x0}.
Finally, comparison of the results shows that TULAc is preferable to TULA.

Note that other choices are possible for $\dr$, depending on the model under study. For example, in the case of the double well, we could "tame" only the superlinear part of $\nablaU$, \ie~consider for all $\gastep>0$ and $x\in\rset^d$,
\begin{equation}
\label{eq:smart-TULA}
\dr(x) = \frac{\norm[2]{x} x}{1+\gastep\norm[2]{x}} - x \eqsp.
\end{equation}
\Cref{assum:dr-close-nablaU} is satisfied and we have
\begin{align*}
&\ps{\frac{x}{\norm{x}}}{\dr(x)} - \frac{\gastep}{2\norm{x}}\norm[2]{\dr(x)} = \frac{\norm[3]{x}}{1+\gastep\norm[2]{x}} \Big\{ 1+\gastep - \frac{\gastep}{2}\frac{\norm[2]{x}}{1+\gastep\norm[2]{x}}\Big\} \\
&\phantom{------------------------}
- \norm{x}\defEns{1+(\gastep/2)} \eqsp, \\
&\liminf_{\norm{x}\to\plusinfty} \ps{\frac{x}{\norm[2]{x}}}{\dr(x)} - \frac{\gastep}{2\norm[2]{x}}\norm[2]{\dr(x)} = \frac{\gastep^{-1} - \gastep}{2} \eqsp.
\end{align*}
\Cref{assum:ergodicity-drift} is satisfied if and only if $\gastep\in\ooint{0,1}$.
It is striking to see that this theoretical threshold is clearly visible on the simulations.
The algorithm \eqref{eq:def_tamed_euler_1} with $\dr$ defined by \eqref{eq:smart-TULA} obtains similar results as TULAc for $\gastep<1$ but for $\gastep=1$, the algorithm diverges.

Given the results of the numerical experiments, TULAc should be chosen over ULA to sample from general probability distributions. Indeed, TULAc has similar results as ULA when the step size is small and is more stable when using larger step sizes.

\begin{figure}
\hspace*{-2cm}\includegraphics[scale=0.55]{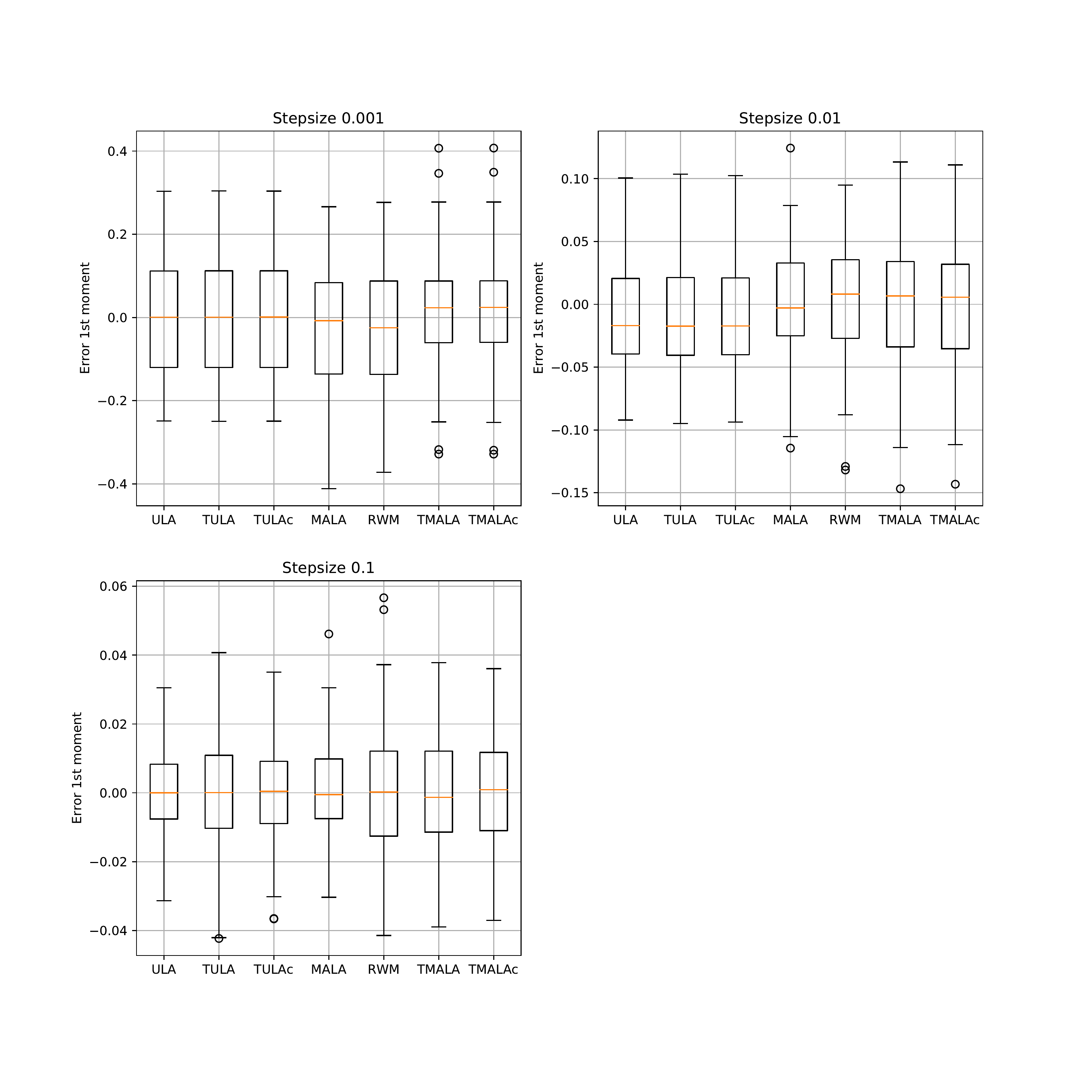}
\caption{\label{figure:icg-1c-1st-d1000-x0} Boxplots of the error on the first moment for the multivariate Gaussian (first coordinate) in dimension $1000$ starting at $0$ for different step sizes.}
\end{figure}

\begin{figure}
\hspace*{-2cm}\includegraphics[scale=0.55]{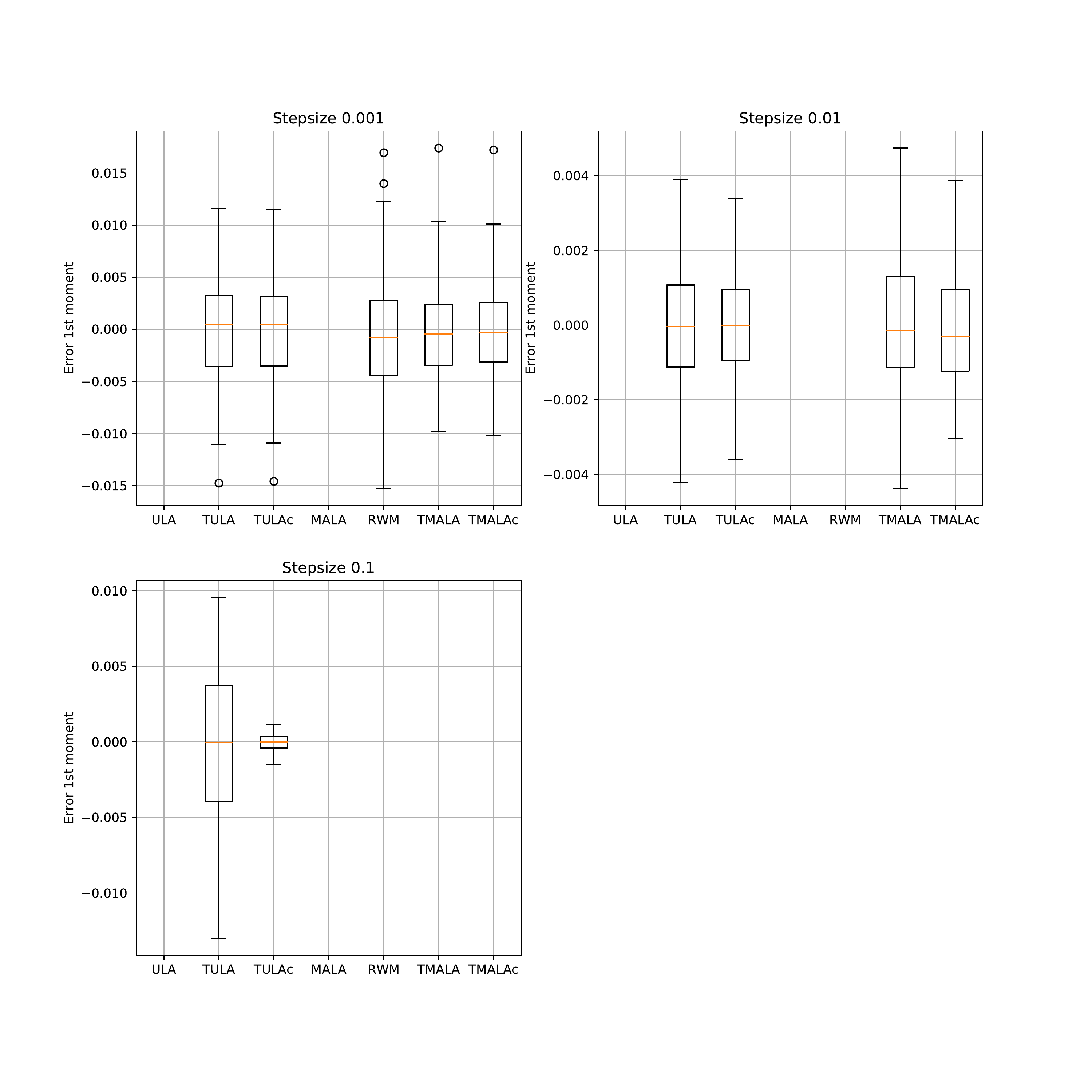}
\caption{\label{figure:dw-1st-d100-x100} Boxplots of the error on the first moment for the double well in dimension $100$ starting at $(100, 0^{\otimes 99})$ for different step sizes.}
\end{figure}

\begin{figure}
\hspace*{-2cm}\includegraphics[scale=0.55]{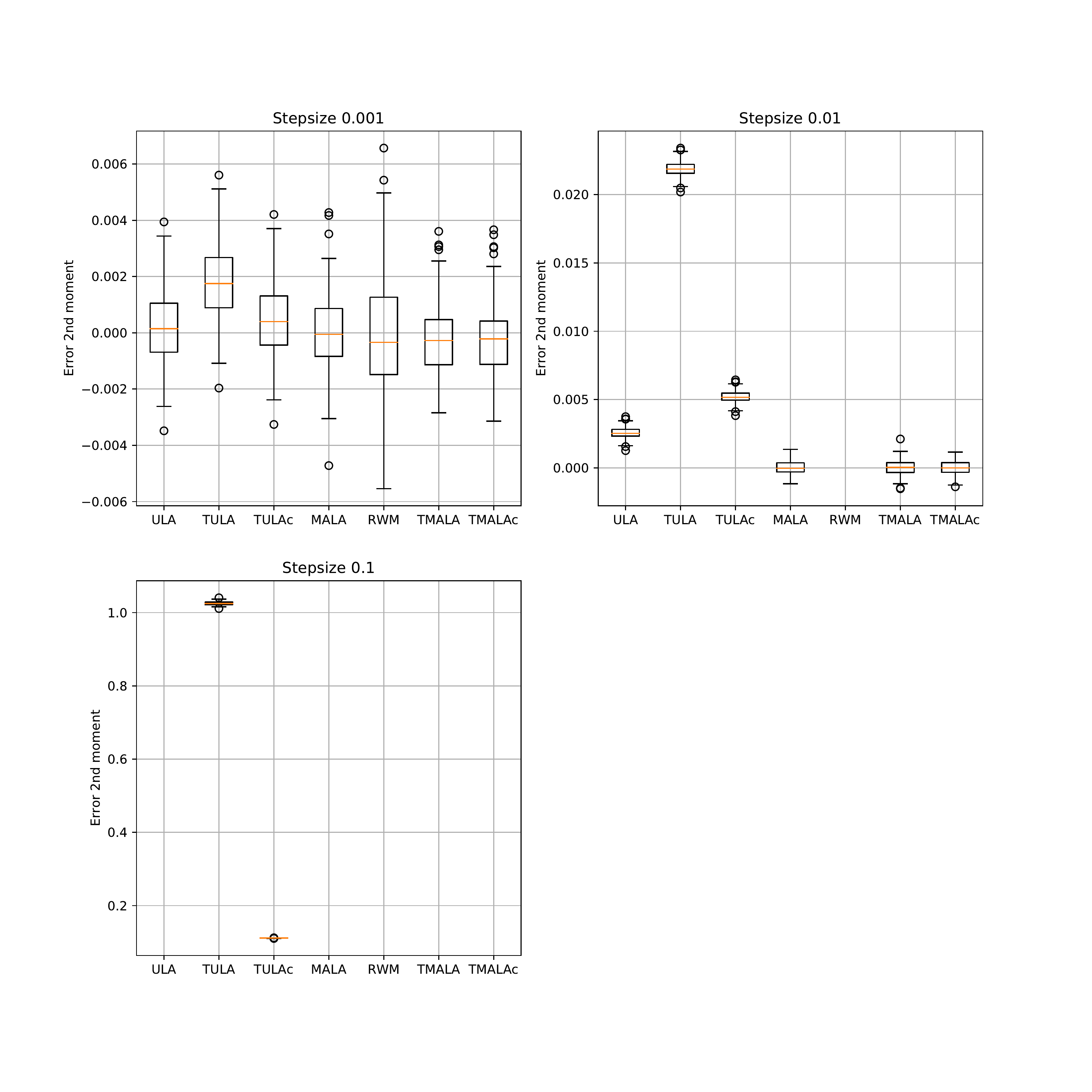}
\caption{\label{figure:dw-2nd-d100-x0} Boxplots of the error on the second moment for the double well in dimension $100$ starting at $0$ for different step sizes.}
\end{figure}

\begin{figure}
\hspace*{-2cm}\includegraphics[scale=0.55]{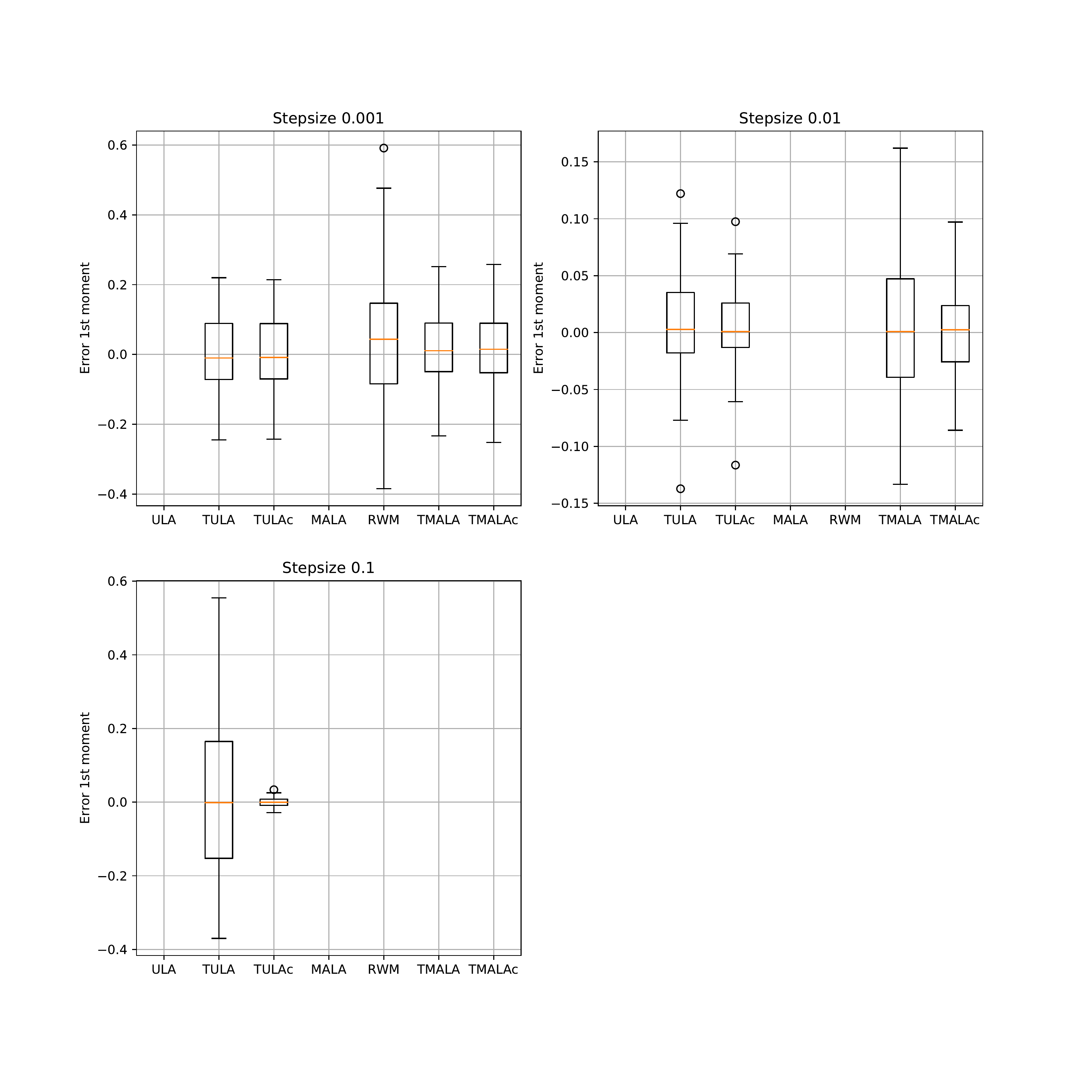}
\caption{\label{figure:g-1st-d1000-x100} Boxplots of the error on the first moment for the Ginzburg-Landau model in dimension $1000$ starting at $(100, 0^{\otimes 999})$ for different step sizes.}
\end{figure}



\section{Proofs}
\label{sec:proofs}

\subsection{Proof of \Cref{prop:existence_ergodicity}}
\label{sec:proof-Lyapunov-langevin}

We have for all $x\in\rset^d$,
\begin{multline}
\label{eq:generator-Va}
\frac{\generator \lyapa(x)}{\a\lyapa(x)} = -\ps{\nablaU(x)}{\frac{x}{(1+\norm[2]{x})^{1/2}}} + \frac{\a\norm[2]{x}}{1+\norm[2]{x}} + \frac{d}{(1+\norm[2]{x})^{1/2}} \\
- \frac{\norm[2]{x}}{(1+\norm[2]{x})^{3/2}} \eqsp.
\end{multline}
By \Cref{assum:ergodicity-U}-\ref{item:ergodicity-U-2} and using $s\mapsto s/(1+s^2)^{1/2}$ is non-decreasing for $s\geq 0$, there exist $\MM_1, \kappaU \in\rset^*_+$ such that for all $x\in\rset^d$, $\norm{x} \geq \MM_1$, $\ps{\nablaU(x)}{x(1+\norm[2]{x})^{-1/2}} \geq \kappaU \norm{\nablaU(x)}$. By \Cref{assum:ergodicity-U}-\ref{item:ergodicity-U-1}, there exists $\MM_2 \geq \MM_1$ such that for all $x\in\rset^d$, $\norm{x} \geq \MM_2$, $\norm{\nablaU(x)} \geq \kappaU^{-1}\{1+\a+d(1+\MM_1^2)^{-1/2}\}$. We then have for all $x\in\rset^d$, $\norm{x} \geq \MM_2$, $\generator \lyapa(x) \leq - \a\lyapa(x)$. Define
\[ \bL[\a] = \exp(\a(1+\MM_2^2)^{1/2})\{2\LG(1+\MM_2^{\ell+1})+\a+d\} \eqsp.\]
Combining \eqref{eq:estimate_nabla_u_x} and \eqref{eq:generator-Va} gives \eqref{eq:drift-langevin}.
By \cite[Theorem 1.1]{meyn:tweedie:1993:III}, we get $ \sgP_t \lyapa(x) \leq \rme^{-\a t} \lyapa(x) + \bL[\a](1-\rme^{-\a t})$.
The second statement is a consequence of \cite[Theorem 2.2]{roberts:tweedie:1996} and \cite[Theorem 6.1]{meyn:tweedie:1993:III}.

\subsection{Proof of \Cref{lemma:checkA1A2}}
\label{sec:proof-lemma-checkA1A2}

Let $\gastep>0$. We have for all $x\in\rset^d$, $\norm{\TU(x) - \nablaU(x)} \leq \gastep \norm[2]{\nablaU(x)}$ and
\[ \norm{\TUc(x) - \nablaU(x)} \leq \gastep \defEns{\sum_{i=1}^d \parenthese{\partial_i \pU(x)}^4}^{1/2} \leq \gastep \norm[2]{\nablaU(x)} \eqsp. \]
By \eqref{eq:estimate_nabla_u_x}, \Cref{assum:dr-close-nablaU} is satisfied.
Define for all $x\in\rset^d$, $x\neq 0$,
\begin{equation*}
A_\gastep(x) = \ps{\frac{x}{\norm{x}}}{\TU(x)} - \frac{\gastep}{2\norm{x}}\norm[2]{\TU(x)} \eqsp.
\end{equation*}
By \Cref{assum:ergodicity-U}-\ref{item:ergodicity-U-2}, there exist $M_1, \kappaU >0$ such that for all $x\in\rset^d$, $\norm{x} \geq M_1$, $\ps{x}{\nablaU(x)} \geq \kappaU\norm{x}\norm{\nablaU(x)}$. We get then for all $x\in\rset^d$, $\norm{x}\geq M_1$,
\begin{align*}
A_\gastep(x) &= \frac{1}{2\norm{x}\{1+\gastep\norm{\nablaU(x)}\}}\defEns{2\ps{x}{\nablaU(x)} - \norm{\nablaU(x)} \frac{\gastep\norm{\nablaU(x)}}{1+\gastep\norm{\nablaU(x)}}} \\
&\geq \frac{\norm{\nablaU(x)}}{1+\gastep\norm{\nablaU(x)}}\frac{2\kappaU\norm{x} -1}{2\norm{x}} \eqsp.
\end{align*}
By \Cref{assum:ergodicity-U}-\ref{item:ergodicity-U-1}, there exist $M_2, C>0$ such that for all $x\in\rset^d$, $\norm{x}\geq M_2$, $\norm{\nablaU(x)} \geq C$. Using that $s\mapsto s(1+\gastep s)^{-1}$ is non-decreasing for $s\geq 0$, we get for all $x\in\rset^d$, $\norm{x} \geq \max(\kappaU^{-1}, M_1, M_2)$, $A_\gastep(x) \geq (\kappaU C)/\{2(1+\gastep C)\}$.

Define for all $x\in\rset^d$, $x\neq 0$,
\begin{equation*}
B_\gastep(x) = \ps{\frac{x}{\norm{x}}}{\TUc(x)} - \frac{\gastep}{2\norm{x}}\norm[2]{\TUc(x)} \eqsp.
\end{equation*}
We have for all $x\in\rset^d$, $\gastep\norm{\TUc(x)} \leq \sqrt{d}$ and for all $x\in\rset^d$, $\norm{x} \geq M_1$,
\begin{equation*}
\ps{x}{\parenthese{\frac{\partial_i \pU(x)}{1+\gastep\absolute{\partial_i \pU(x)}}}_{i\in\defEns{1,\ldots,d}}}
\geq \frac{\kappaU\norm{x}\norm{\nablaU(x)}}{1+\gastep\max_{i\in\defEns{1,\ldots,d}} \absolute{\partial_i \pU(x)}}
\end{equation*}
and
\begin{equation*}
\norm{\parenthese{\frac{\partial_i \pU(x)}{1+\gastep\absolute{\partial_i \pU(x)}}}_{i\in\defEns{1,\ldots,d}}}
\leq \frac{\norm{\nablaU(x)}}{1+\gastep\max_{i\in\defEns{1,\ldots,d}} \absolute{\partial_i \pU(x)}} \eqsp.
\end{equation*}
Combining these inequalities, we get for all $x\in\rset^d$, $\norm{x} \geq \max(\kappaU^{-1} \sqrt{d}, M_1)$,
\begin{equation*}
B_\gastep(x) \geq \frac{\norm{\nablaU(x)}}{1+\gastep\max_{i\in\defEns{1,\ldots,d}} \absolute{\partial_i \pU(x)}} \frac{1}{2\norm{x}} \defEns{2\kappaU\norm{x} -\sqrt{d}}
\geq \frac{\norm{\nablaU(x)}}{1+\gastep\norm{\nablaU(x)}} \frac{\kappaU}{2} \eqsp,
\end{equation*}
and for all $x\in\rset^d$, $\norm{x} \geq \max(\kappaU^{-1} \sqrt{d}, M_1, M_2)$, we get $B_\gastep(x) \geq (\kappaU C)/\{2(1+\gastep C)\}$.

\subsection{Proof of \Cref{propo:drift_tamed_euler}}
\label{sec:proof-Lyapunov-discrete-Vae}

Let $\gastep, \a \in \rset_+^*$.
  Note that the function $x \mapsto (1+\norm[2]{x})^{1/2}$ is
  Lipschitz continuous with Lipschitz constant equal to $1$.  By the
  log-Sobolev inequality \cite[Proposition 5.5.1]{bakry:gentil:ledoux:2014}, and the Cauchy-Schwarz inequality, we have for all $x\in\rset^d$ and $a>0$
  \begin{align}
\nonumber
\rker \lyapa(x) &\leq \rme^{\a^2 \gastep}\exp\defEns{ \a \int_{\rset^d}(1+\norm[2]{y})^{1/2}\rker(x,\rmd y)}\\
    \label{eq:log_sob_0}
& \leq \rme^{\a^2 \gastep}\exp\defEns{\a \parenthese{1+\norm[2]{x- \gaStep \Ugad(x)} + 2 \gastep d}^{1/2}} \eqsp.
  \end{align}
We now bound the term inside the exponential in the right hand side. For all $x\in\rset^d$,
\begin{equation}
\label{eq:drift_discr_1}
\norm[2]{x- \gaStep \Ugad(x)}
=\norm[2]{x} -2 \gastep \parenthese{\ps{\dr(x)}{x} - (\gastep/2)\norm[2]{\dr(x)}} \eqsp.
\end{equation}
By \Cref{assum:ergodicity-drift}, there exist $\MM_1, \kappaU \in\rset^*_+$ such that for all $x\in\rset^d$, $\norm{x}\geq \MM_1$, $\ps{x}{\dr(x)} - (\gastep/2)\norm[2]{\dr(x)} \geq \kappaU \norm{x}$. Denote by $\MM = \max(\MM_1, 2d \kappaU^{-1})$. For all $x\in\rset^d$, $\norm{x}\geq \MM$, we have
\begin{equation*}
\norm[2]{x- \gaStep \Ugad(x)} + 2 \gastep d
  \leq \norm[2]{x} -\gastep \kappaU \norm{x} \eqsp.
\end{equation*}
Using for all $t\in \ccint{0,1}$, $(1-t)^{1/2} \leq 1-t/2$ and $s\mapsto s/(1+s^2)^{1/2}$ is non-decreasing for $s\geq 0$, we have for all $x\in\rset^d$, $\norm{x}\geq \MM$,
\begin{align*}
\parenthese{1+\norm[2]{x- \gaStep \Ugad(x)} + 2 \gastep d}^{1/2} &\leq
\parenthese{1+ \norm[2]{x}}^{1/2}
\parenthese{1- \frac{\gaStep \kappaU\norm{x}}{1+\norm[2]{x}}}^{1/2} \\
&\leq \parenthese{1+ \norm[2]{x}}^{1/2}
-\frac{\gaStep \kappaU\MM}{2(1+\MM^2)^{1/2}} \eqsp.
\end{align*}
Plugging this result in \eqref{eq:log_sob_0} shows that for all $x\in\rset^d$, $\norm{x} \geq \MM$,
\begin{equation}
  \label{eq:lem}
  \rker \lyape(x) \leq \rme^{-\ae^2 \gastep} \lyape(x) \quad \text{for} \quad
  \ae = \frac{\kappaU\MM}{4(1+\MM^2)^{1/2}} \eqsp.
\end{equation}
By \eqref{eq:estimate-drift-step}, we have
\begin{equation*}
\max_{\norm{x}\leq \MM} \norm{\dr(x)} \leq 2 \LG \defEns{1+\norm[\ell+1]{\MM}} + \gastep \Cal \parenthese{1+\norm[\alphag]{\MM}} \eqsp.
\end{equation*}
Combining it with \eqref{eq:log_sob_0}, \eqref{eq:drift_discr_1}, $s\mapsto s/(1+s^2)^{1/2}$ is non-decreasing for $s\geq 0$ and $(1+t_1+t_2)^{1/2} \leq (1+t_1)^{1/2} + t_2/2$ for $t_1=\norm[2]{x}$, $t_2=\gastep^2\norm[2]{\dr(x)} + 2\gastep\norm{x}\norm{\dr(x)} + 2\gastep d$, we have for all $x\in\rset^d$, $\norm{x} \leq \MM$,
\begin{equation}
  \label{eq:lem_2}
  \rker \lyape(x) \leq \rme^{ \gastep \ce  } \lyape(x) \eqsp,
\end{equation}
where
\begin{align*}
\ce &= \ae^2 + \ae
\bigg[
\MM \defEns{2 \LG \defEns{1+\norm[\ell+1]{\MM}} + \gastep \Cal \parenthese{1+\norm[\alphag]{\MM}}} \\
&\phantom{--------}+\frac{\gastep}{2}
\defEns{2 \LG \defEns{1+\norm[\ell+1]{\MM}} + \gastep \Cal \parenthese{1+\norm[\alphag]{\MM}}}^2
+d
\bigg] \eqsp.
\end{align*}
Then, using that for all $t \geq 0$, $1-\rme^{-t} \leq t$, we get for all $x\in\rset^d$, $\norm{x} \leq \MM$,
\begin{equation}
  \label{eq:lem_3}
  \rker \lyape(x)-\rme^{-\ae^2 \gastep} \lyape(x) \leq \rme^{ \gastep \ce  }(1-\rme^{-\gastep(\ae^2 + \ce)}) \lyape(x) \leq \gastep \rme^{ \gastep \ce  }(\ae^2 + \ce)\lyape(x)  \eqsp,
\end{equation}
which combined with \eqref{eq:lem} gives \eqref{eq:drift-rker} with $\be = \rme^{ \gastep \ce  }(\ae^2 + \ce) \rme^{\kappa M / 4}$.
Finally, using Jensen's inequality and $(s+t)^\varsigma \leq s^\varsigma + t^\varsigma$ for $\varsigma\in\ooint{0,1}$, $s,t\geq 0$ in \eqref{eq:drift-rker},
by \cite[Section 3.1]{roberts:tweedie:1996}, for all $\gastep>0$, $\rker$ has a unique invariant probability measure $\invpig$ and $\rker$ is $\lyape^\varsigma$-geometrically ergodic \wrt~$\invpig$.

\subsection{Proof of \Cref{prop:convergence-Vnorm}}
\label{sec:proof-convergence-Vnorm}

The proof is adapted from \cite[Proposition 2]{dalalyan:tsybakov:2012} and \cite[Theorem 10]{durmus2017}.
We first state a lemma.
\begin{lemma}
\label{lemma:kullback-leibler}
Assume \Cref{assum:grad-loc-lipschitz}, \Cref{assum:ergodicity-U}, \Cref{assum:dr-close-nablaU} and \Cref{assum:ergodicity-drift}. Let $\gastep_0>0$, $p\in\nset^*$ and $\nu_0$ be a probability measure on $(\rset^d, \Borel(\rset^d))$. There exists $C>0$ such that for all $\gastep\in\ocint{0,\gastep_0}$
\begin{equation*}
\KL(\nu_0 \rker^p | \nu_0 \sgP_{p\gastep}) \leq C\gastep^2 \int_{\rset^d} \sum_{i=0}^{p-1}
\defEns{\int_{\rset^d} \lyape(z) \rker^i(y, \rmd z)}
\nu_0(\rmd y) \eqsp.
\end{equation*}
\end{lemma}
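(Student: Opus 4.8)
The plan is to bound the Kullback–Leibler divergence $\KL(\nu_0 \rker^p \,|\, \nu_0 \sgP_{p\gastep})$ by a Girsanov-type argument, comparing the law of the linearly interpolated tamed Euler scheme with the law of the diffusion $(\XL_t)_{t \geq 0}$, both viewed as path measures on $\Csetfunction([0,p\gastep], \rset^d)$. Concretely, I would introduce the continuous-time interpolation $(\tilde Y_t)_{t\in[0,p\gastep]}$ defined so that $\tilde Y_0 \sim \nu_0$ and, on each interval $[k\gastep,(k+1)\gastep)$, $\rmd \tilde Y_t = -\dr(\tilde Y_{k\gastep})\,\rmd t + \sqrt{2}\,\rmd B_t$; its time-$p\gastep$ marginal is $\nu_0 \rker^p$, and its path law $\mathbb{Q}$ dominates the path law $\mathbb{P}$ of the diffusion started from $\nu_0$. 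By the Girsanov theorem, the Radon–Nikodym derivative $\rmd\mathbb{Q}/\rmd\mathbb{P}$ has an explicit exponential-martingale form, and the data-processing inequality gives $\KL(\nu_0\rker^p \,|\, \nu_0\sgP_{p\gastep}) \leq \KL(\mathbb{Q}\,|\,\mathbb{P})$.

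**Key steps.** First, write the Girsanov density: under $\mathbb{P}$, $\rmd\mathbb{Q}/\rmd\mathbb{P} = \exp\{\int_0^{p\gastep} \langle b_t, \rmd B_t\rangle - \tfrac12\int_0^{p\gastep}\norm[2]{b_t}\rmd t\}$ where, for $t\in[k\gastep,(k+1)\gastep)$, $b_t = (1/\sqrt{2})(\nablaU(\XL_t) - \dr(\XL_{k\gastep}))$ (the drift mismatch, evaluated along the diffusion path). Then $\KL(\mathbb{Q}\,|\,\mathbb{P}) = \expeMarkov{\mathbb{Q}}{\tfrac12\int_0^{p\gastep}\norm[2]{b_t}\rmd t}$, so I must bound $\expeMarkov{\mathbb{Q}}{\norm[2]{\nablaU(\tilde Y_t) - \dr(\tilde Y_{k\gastep})}}$ for $t\in[k\gastep,(k+1)\gastep)$. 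I split this as $\norm{\nablaU(\tilde Y_t) - \nablaU(\tilde Y_{k\gastep})} + \norm{\nablaU(\tilde Y_{k\gastep}) - \dr(\tilde Y_{k\gastep})}$. The second term is $O(\gastep(1+\norm[\alphag]{\tilde Y_{k\gastep}}))$ by \Cref{assum:dr-close-nablaU}. For the first term, I use the local Lipschitz bound \Cref{assum:grad-loc-lipschitz}: it is controlled by $\LG(1+\norm[\ell]{\tilde Y_t}+\norm[\ell]{\tilde Y_{k\gastep}})\norm{\tilde Y_t - \tilde Y_{k\gastep}}$, and on $[k\gastep,t)$ one has $\tilde Y_t - \tilde Y_{k\gastep} = -(t-k\gastep)\dr(\tilde Y_{k\gastep}) + \sqrt{2}(B_t - B_{k\gastep})$, whose size is $O(\gastep\norm{\dr(\tilde Y_{k\gastep})} + \sqrt{\gastep}\,\xi)$ with $\xi$ a Gaussian increment. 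Using \eqref{eq:estimate-drift-step} to bound $\norm{\dr(\tilde Y_{k\gastep})}$ polynomially in $\norm{\tilde Y_{k\gastep}}$, and absorbing Gaussian moments, each summand is $O(\gastep)$ times a polynomial in $\norm{\tilde Y_{k\gastep}}$ (in expectation), which is in turn dominated by $\expeMarkov{\mathbb{Q}}{\lyape(\tilde Y_{k\gastep})}$ since $\lyape$ has exponential growth and dominates every polynomial. Finally, $\expeMarkov{\mathbb{Q}}{\lyape(\tilde Y_{k\gastep})} = \int \int \lyape(z)\rker^k(y,\rmd z)\,\nu_0(\rmd y)$, and summing over $k=0,\ldots,p-1$ gives the stated bound with the factor $\gastep^2$ (one $\gastep$ from the length of each integration interval, one from the drift-mismatch estimate).

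**Main obstacle.** The delicate point is to make the bound uniform in $\gastep\in\ocint{0,\gastep_0}$ with constants that do not blow up, while handling the fact that the intermediate point $\tilde Y_t$ for $t$ strictly inside an interval differs from the grid point $\tilde Y_{k\gastep}$; one must bound moments of $\tilde Y_t$ in terms of $\lyape(\tilde Y_{k\gastep})$ over a single step, which requires the one-step estimate $\expeMarkov{}{\lyape(\tilde Y_t) \,|\, \tilde Y_{k\gastep}} \le C \lyape(\tilde Y_{k\gastep})$ for $t - k\gastep \le \gastep_0$ — a consequence of (a version of) the log-Sobolev / Lipschitz argument already used in the proof of \Cref{propo:drift_tamed_euler}. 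A secondary technical care is that the cross term $\int_0^{p\gastep}\langle b_t,\rmd B_t\rangle$ has zero $\mathbb{Q}$-expectation only after justifying that the exponential local martingale is a true martingale (Novikov or a localization/monotone-convergence argument), which is where the polynomial-growth control of $\dr$ and the finiteness of all exponential moments of the scheme (\Cref{propo:drift_tamed_euler}) are used.
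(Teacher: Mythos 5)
Your proposal matches the paper's proof in all essentials: both compare the path law of the linearly interpolated tamed scheme with the diffusion's path law via Girsanov's theorem, compute the path-space KL as the time integral of the squared drift mismatch, split that mismatch into a local-Lipschitz term and a $\nablaU-\dr$ taming term controlled by \Cref{assum:grad-loc-lipschitz} and \Cref{assum:dr-close-nablaU}, absorb the resulting polynomials into $\lyape$, and finally apply the data-processing (chain-rule) inequality for KL to pass from path measures to marginals. The only cosmetic difference is that the paper works conditionally on a deterministic start $y$ and integrates over $\nu_0$ at the end, whereas you start the path measures directly from $\nu_0$; and the one-step moment control you flag as a potential obstacle is handled in the paper by explicitly integrating out the Gaussian increment rather than invoking a separate one-step Lyapunov bound, which is a routine substitution.
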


\begin{proof}
Let $y\in\rset^d$ and $\gastep>0$.
Denote by $(Y_t, \Ybar_t)_{t\geq 0}$ the unique strong solution of
\begin{equation}
\label{eq:def_Y_Ybar}
  \begin{cases}
    & \rmd \XL_t = - \nabla U(\XL_t) \rmd t + \sqrt{2} \rmd B_t \quad, \quad \XL_0 = y \eqsp, \\
    & \rmd \Ybar_t =  - \dr\parenthese{\Ybar_{\floor{t/\gastep}\gastep}} \rmd t + \sqrt{2}\rmd B_t \quad, \quad  \Ybar_0 = y \eqsp,
  \end{cases}
\end{equation}
and by $(\filtration_t)_{t \geq 0}$ the filtration
associated with $(B_t)_{t \geq 0}$.
Denote by $\mu^y_p$ and $\mubar^y_p$ the marginal distributions on $\mathcal{C}(\ccint{0,p\gastep}, \rset^d)$ of $(Y_t, \Ybar_t)_{t\geq 0}$.
%
By \eqref{eq:estimate_nabla_u_x}, \eqref{eq:estimate-drift-step} and \Cref{propo:drift_tamed_euler,prop:existence_ergodicity}, we have
\begin{align*}
\PP\parenthese{\int_0^{p\gastep} \norm[2]{\nablaU(\XL_t)} + \norm[2]{\dr\parenthese{\XL_{\floor{t/\gastep}\gastep}}} \rmd t < \plusinfty} &= 1 \eqsp,\\
\PP\parenthese{\int_0^{p\gastep} \norm[2]{\nablaU(\Ybar_t)} + \norm[2]{\dr\parenthese{\Ybar_{\floor{t/\gastep}\gastep}}} \rmd t < \plusinfty} &= 1 \eqsp.
\end{align*}
By \cite[Theorem 7.19]{liptser2013statistics}, $\mu^y_p$ and $\mubar^y_p$ are equivalent and $\PP$-almost surely,
\begin{multline*}
\frac{\rmd \mu^y_p}{\rmd \mubar^y_p}((\Ybar_t)_{t\in\ccint{0,p\gastep}}) = \exp
\bigg(
\frac{1}{2}\int_0^{p\gastep} \ps{-\nablaU(\Ybar_s) + \dr\parenthese{\Ybar_{\floor{s/\gastep}\gastep}}}{\rmd \Ybar_s} \\
-\frac{1}{4}\int_0^{p\gastep} \defEns{\norm[2]{\nablaU(\Ybar_s)} - \norm[2]{\dr\parenthese{\Ybar_{\floor{s/\gastep}\gastep}}}} \rmd s
\bigg)\eqsp.
\end{multline*}
We get then
\begin{align*}
\KL(\mubar^y_p | \mu^y_p) &= \expe{-\log\defEns{\frac{\rmd \mu^y_p}{\rmd \mubar^y_p}((\Ybar_t)_{t\in\ccint{0,p\gastep}})}} \\
&= (1/4) \int_0^{p\gastep} \expe{\norm[2]{\nablaU(\Ybar_s)-\dr\parenthese{\Ybar_{\floor{s/\gastep}\gastep}}}} \rmd s \\
&= (1/4) \sum_{i=0}^{p-1} \int_{i\gastep}^{(i+1)\gastep} \expe{\norm[2]{\nablaU(\Ybar_s)- \Ugad(\Ybar_{i\gastep})}} \rmd s \eqsp.
\end{align*}
For $i\in\defEns{0,\ldots,p-1}$ and $s\in\coint{i\gastep,(i+1)\gastep}$, we have $\norm[2]{\nablaU(\Ybar_s)- \Ugad(\Ybar_{i\gastep})} \leq 2(A_1 + A_2)$ where
\begin{equation*}
A_1 = \norm[2]{\nablaU(\Ybar_s)-\nablaU(\Ybar_{i\gastep})} \eqsp, \quad
A_2 = \norm[2]{\nablaU(\Ybar_{i\gastep})- \Ugad(\Ybar_{i\gastep})} \eqsp.
\end{equation*}
By \Cref{assum:dr-close-nablaU}, $A_2 \leq \gastep^2 \Cal^2 \parenthese{1+\norm[\alphag]{\Ybar_{i\gastep}}}^2$ and by \Cref{assum:grad-loc-lipschitz},
\begin{equation}
\label{eq:temp-A1}
A_1 \leq \LG^2 \parenthese{1+\norm[\ell]{\Ybar_s}  + \norm[\ell]{\Ybar_{i\gastep}}}^2 \norm[2]{\Ybar_s - \Ybar_{i\gastep}} \eqsp.
\end{equation}
On the other hand for $s\in\coint{i\gastep,(i+1)\gastep}$,
\begin{align}
\nonumber
\norm[2]{\Ybar_s - \Ybar_{i\gastep}} &= (s-i\gastep)^2 \norm[2]{ \Ugad(\Ybar_{i\gastep})} + 2\norm[2]{B_s - B_{i\gastep}} \\
\label{eq:temp-A11}
&\phantom{--------}- 2^{3/2}(s-i\gastep) \ps{B_s - B_{i\gastep}}{\Ugad(\Ybar_{i\gastep})} \eqsp,\\
\label{eq:temp-A12}
\norm{\Ybar_{s}} &\leq \norm{\Ybar_{i\gastep}} + \gastep\norm{\dr(\Ybar_{i\gastep})} + \sqrt{2}\norm{B_s - B_{i\gastep}} \eqsp.
\end{align}
Define $\polyga[1]:\rset_+ \to \rset_+$ for all $t\in\rset_+$ by
\begin{align}
\nonumber
&\polyga[1](t) = (2\uppi)^{-d/2} \LG^2 \int_{\rset^d}  \left[
2\norm[2]{z} + \gastep \defEns{2\LG(1+t^{\ell+1}) + \gastep\Cal(1+t^\alpha)}^2
\right] \\
\label{eq:def-polyp1}
&\times \left[
1+ t^\ell + \defEns{t + \gastep\parenthese{2\LG(1+t^{\ell+1}) + \gastep\Cal(1+t^\alphag)} + \sqrt{2\gastep}\norm{z}}^{\ell}
\right]^2
 \rme^{-\norm[2]{z}/2} \rmd z \eqsp.
\end{align}
By \eqref{eq:estimate-drift-step}, \eqref{eq:temp-A1}, \eqref{eq:temp-A11} and \eqref{eq:temp-A12}, we have for $i\in\defEns{0,\ldots,p-1}$
\[ \int_{i\gastep}^{(i+1)\gastep}
\CPE{A_1}{\filtration_{i\gamma}} \rmd s \leq (\gastep^2/2) \polyga[1](\norm{\Ybar_{i\gastep}})
\]
and we get
\begin{equation*}
\int_{i\gastep}^{(i+1)\gastep}
\CPE{\norm[2]{\nablaU(\Ybar_s)- \Ugad(\Ybar_{i\gastep})} }{\filtration_{i\gamma}} \rmd s
\leq \gastep^2 \defEns{\polyga[1](\norm{\Ybar_{i\gastep}})+2\gastep \polyp[2](\norm{\Ybar_{i\gastep}})} \eqsp,
\end{equation*}
where $\polyp[2]:\rset_+ \to \rset_+$ is defined for all $t\in \rset_+$ by
\begin{equation}
\label{eq:def-polyp2}
\polyp[2](t) = \Cal^2\parenthese{1+t^\alphag}^2 \eqsp.
\end{equation}
By \cite[Theorem 4.1, Chapter 2]{kullback:1997}, we obtain
\begin{equation*}
\KL(\delta_y \rker^p | \delta_y \sgP_{p\gastep}) \leq
\KL(\mubar^y_p | \mu^y_p) \leq (\gastep^2/4) \sum_{i=0}^{p-1} \expe{\polyga[1](\norm{\Ybar_{i\gastep}}) + 2\gastep\polyp[2](\norm{\Ybar_{i\gastep}})} \eqsp.
\end{equation*}
By \eqref{eq:def-polyp1} and \eqref{eq:def-polyp2}, there exists $C>0$ such that for all $\gastep\in\ocint{0,\gastep_0}$ and $x\in\rset^d$, $\polyga[1](\norm{x}) +2\gastep\polyp[2](\norm{x}) \leq 4C \lyape(x)$. Combining it with the chain rule for the Kullback-Leibler divergence concludes the proof.
\end{proof}

\begin{proof}[Proof of \Cref{prop:convergence-Vnorm}]
Let $\gastep\in\ocint{0,\gastep_0}$.
By \Cref{prop:existence_ergodicity}, we have for all $n\in\nset$ and $x\in\rset^d$,
\begin{equation*}
\VnormEq[\lyape^{1/2}]{\delta_x \rker^n - \invpi} \leq \Cae \rhoae^{n\gastep} \lyape^{1/2}(x) + \VnormEq[\lyape^{1/2}]{\delta_x \rker^n - \delta_x \sgP_{n\gastep}} \eqsp.
\end{equation*}
Denote by $\kga = \ceil{\gastep^{-1}}$ and by $\qga,\rga$ the quotient and the remainder of the Euclidian division of $n$ by $\kga$. We have $\Vnorm[\lyape^{1/2}]{\delta_x \rker^n - \delta_x \sgP_{n\gastep}} \leq A+B$ where
\begin{align}
\nonumber
A &=  \VnormEq[\lyape^{1/2}]{\delta_x \rker^{\qga\kga} \sgP_{\rga\gastep} - \delta_x \rker^n} \\
\nonumber
B &= \sum_{i=1}^{\qga} \VnormEq[\lyape^{1/2}]{\delta_x \rker^{(i-1)\kga} \sgP_{(n-(i-1)\kga) \gastep} - \delta_x \rker^{i\kga} \sgP_{(n-i\kga) \gastep}} \\
\label{eq:tempo-B1}
&\leq \sum_{i=1}^{\qga} \Cae \rhoae^{(n-i\kga)\gastep} \VnormEq[\lyape^{1/2}]{\delta_x \rker^{(i-1)\kga} \sgP_{\kga \gastep} - \delta_x \rker^{i\kga}} \eqsp.
\end{align}
For $i\in\defEns{1,\ldots,\qga}$ we have by \cite[Lemma 24]{durmus2017},
\begin{multline}
\label{eq:tempo-B2}
\VnormEq[\lyape^{1/2}]{\delta_x \rker^{(i-1)\kga} \sgP_{\kga \gastep} - \delta_x \rker^{i\kga}}^2
\leq 2 \defEns{\delta_x \rker^{(i-1)\kga} \sgP_{\kga \gastep}(\lyape) + \delta_x \rker^{i\kga}(\lyape)} \\
\times \KL(\delta_x \rker^{i\kga} | \delta_x \rker^{(i-1)\kga} \sgP_{\kga \gastep}) \eqsp.
\end{multline}
By \Cref{propo:drift_tamed_euler}, \Cref{lemma:kullback-leibler} and $\kga \leq 1 + \gastep^{-1}$, we have for all $i\in\defEns{1,\ldots,\qga}$
\begin{align}
\nonumber
\KL(\delta_x \rker^{i\kga} | \delta_x \rker^{(i-1)\kga} \sgP_{\kga \gastep})
&\leq C\gastep^2 \sum_{j=0}^{\kga-1} \int_{\rset^d} \lyape(z) \delta_x \rker^{(i-1)\kga + j}(\rmd z) \\
\label{eq:tempo-B3}
&\leq C\gastep^2(1+\gastep^{-1})
\defEns{\rme^{-\ae^2 \gastep \kga (i-1)}\lyape(x) + \frac{\be}{\ae^2}\rme^{\ae^2 \gastep}} \eqsp,
\end{align}
where $C$ is the constant defined in \Cref{lemma:kullback-leibler}.
By \Cref{prop:existence_ergodicity}, we have for $x\in\rset^d$, $\sgP_{\kga \gastep} \lyape(x) \leq \lyape(x) + \bL[\ae]$ and by \Cref{propo:drift_tamed_euler}, we get for all $i\in\defEns{1,\ldots,\qga}$
\begin{equation}
\label{eq:tempo-B4}
\delta_x \rker^{(i-1)\kga} \sgP_{\kga \gastep}(\lyape) + \delta_x \rker^{i\kga}(\lyape)
\leq 2\defEns{\rme^{-\ae^2 \gastep \kga (i-1)}\lyape(x) + \frac{\be}{\ae^2}\rme^{\ae^2 \gastep} + \bL[\ae]} \eqsp.
\end{equation}
By \eqref{eq:tempo-B1}, \eqref{eq:tempo-B2}, \eqref{eq:tempo-B3} and \eqref{eq:tempo-B4}, we obtain
\begin{multline*}
B \leq 2\Cae C^{1/2} \gastep(1+\gastep^{-1})^{1/2} \\ \times
\sum_{i=1}^{\qga} \rhoae^{(\qga-i)\gastep\kga}
\defEns{
\rme^{-(i-1) \gastep\kga \ae^2} \lyape(x) + \parenthese{\bL[\ae]+\frac{\be}{\ae^2}\rme^{\ae^2 \gastep}}
} \eqsp
\end{multline*}
and we get
\begin{multline*}
B \defEns{2\Cae C^{1/2} \gastep(1+\gastep^{-1})^{1/2}}^{-1} \leq
\parenthese{\bL[\ae]+\frac{\be}{\ae^2}\rme^{\ae^2 \gastep}}
\frac{1}{1-\rhoae^{\kga\gastep}} \\
+ \lyape(x) \qga \max(\rhoae, \rme^{-\ae^2})^{(\qga-1)\gastep\kga} \eqsp.
\end{multline*}
Bounding $A$ along the same lines and using $\kga\gastep \geq 1$, we get \eqref{eq:thm-conv-Vnorm-1}.
By \Cref{propo:drift_tamed_euler} and taking the limit $n\to\plusinfty$, we obtain \eqref{eq:thm-conv-Vnorm-2}.
\end{proof}

\subsection{Proofs of \Cref{thm:wasserstein-gamma2,thm:wasserstein-gamma3}}
\label{sec:proof-strongly-convex}

We first state preliminary technical lemmas on the diffusion $(\XL_t)_{t\geq 0}$. The proofs are postponed to the Appendix.
Define for all $p \in\nset^*$ and $k \in \{0,\cdots,p\}$,
\begin{equation}
\label{eq:def_akp}
a_{k,p} =
 m^{k-p}\prod_{i=k+1}^p \defEns{i(d+2(i-1))(i-k)^{-1}}  \eqsp.
\end{equation}

\begin{lemma}
\label{lem:moment_diffusion_2_p}
Assume \Cref{assum:convex}. Let $p \in \nset^*$, $x \in \rset^d$ and $(\XL_t)_{t \geq 0}$ be the solution of
\eqref{eq:langevin} started at $x$.
 For all $t \geq 0$,
\begin{equation*}
\expe{\norm[ 2 p]{\XL_t}}  \leq  a_{0,p}\parenthese{1-\rme^{-2pmt}}  +\sum_{k=1}^p a_{k,p} \rme^{-2kmt}\norm[2k]{x}  \eqsp,
\end{equation*}
where  for $ k \in \defEns{0,\cdots,p}$, $a_{k,p}$ is given in \eqref{eq:def_akp}.
\end{lemma}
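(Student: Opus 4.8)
The plan is to apply Itô's formula to $x\mapsto \norm[2p]{x}$ along $(\XL_t)_{t\ge 0}$, take expectations to obtain a linear differential inequality relating $u_p(t):=\expe{\norm[2p]{\XL_t}}$ to $u_{p-1}(t)$, solve it by the variation-of-constants formula, and then close the recursion by induction on $p$, the only real content of which is the algebraic identities satisfied by the constants $a_{k,p}$ of \eqref{eq:def_akp}.

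\textbf{Deriving the differential inequality.} For $f_p(x)=\norm[2p]{x}=(\norm[2]{x})^p$ one computes $\nabla f_p(x)=2p\norm[2p-2]{x}\,x$ and $\Delta f_p(x)=2p(d+2p-2)\norm[2p-2]{x}$, so the generator $\generator$ of \eqref{eq:langevin} satisfies
\[
\generator f_p(x) = -2p\norm[2p-2]{x}\ps{\nabla U(x)}{x} + 2p(d+2p-2)\norm[2p-2]{x}\eqsp.
\]
Under \Cref{assum:convex}, applying strong convexity with the minimizer of $U$ (taken to be $0$, where $\nabla U$ vanishes) gives $\ps{\nabla U(x)}{x}\ge \mU\norm[2]{x}$, hence $\generator f_p(x)\le -2p\mU\,\norm[2p]{x} + 2p(d+2p-2)\norm[2p-2]{x}$. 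Applying Itô's formula on $[0,t\wedge\tau_n]$ with $\tau_n=\inf\{s\ge 0:\norm{\XL_s}\ge n\}$, taking expectations to kill the local martingale part, and letting $n\to\infty$ (using $\tau_n\to\infty$ a.s., Fatou on the left, monotone convergence on the right; finiteness of all polynomial moments of $\XL_t$ is standard under \Cref{assum:convex} and the one-sided growth bound), one obtains $u_p(0)=\norm[2p]{x}$ and $u_p'(t)\le -2p\mU\,u_p(t)+2p(d+2p-2)\,u_{p-1}(t)$, whence
\[
u_p(t) \le \rme^{-2p\mU t}\norm[2p]{x} + 2p(d+2p-2)\int_0^t \rme^{-2p\mU(t-s)}u_{p-1}(s)\,\rmd s\eqsp.
\]

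\textbf{The induction.} I prove the claimed bound by induction on $p\ge 1$. In the step from $p-1$ to $p$, I first use $u_{p-1}(s)\le a_{0,p-1}+\sum_{k=1}^{p-1}a_{k,p-1}\rme^{-2k\mU s}\norm[2k]{x}$; this holds for $p-1=0$ since $u_0\equiv 1=a_{0,0}$, and for $p-1\ge 1$ from the induction hypothesis together with $1-\rme^{-2(p-1)\mU s}\le 1$. Plugging this into the integral and integrating termwise, with $c_p=2p(d+2p-2)$, the constant term yields $\tfrac{c_p}{2p\mU}a_{0,p-1}(1-\rme^{-2p\mU t})$ and each term $a_{k,p-1}\rme^{-2k\mU s}\norm[2k]{x}$ yields $\tfrac{c_p}{2(p-k)\mU}a_{k,p-1}\norm[2k]{x}(\rme^{-2k\mU t}-\rme^{-2p\mU t})$. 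Since $p(d+2(p-1))=c_p/2$, the definition \eqref{eq:def_akp} gives the identities $\tfrac{c_p}{2p\mU}a_{0,p-1}=a_{0,p}$ and $\tfrac{c_p}{2(p-k)\mU}a_{k,p-1}=a_{k,p}$ for $1\le k\le p-1$; combined with $a_{p,p}=1$ for the surviving $\rme^{-2p\mU t}\norm[2p]{x}$ term and discarding the nonpositive leftover $-\rme^{-2p\mU t}\sum_{k=1}^{p-1}a_{k,p}\norm[2k]{x}$, this is exactly the asserted inequality. The base case $p=1$ is the same computation with $u_0\equiv 1$.

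\textbf{Main obstacle.} The deterministic part (Grönwall plus the $a_{k,p}$ bookkeeping) is routine; the only genuinely delicate point is justifying the differential inequality for $u_p$, i.e. the vanishing of the local-martingale term together with the a priori finiteness of the polynomial moments $\expe{\norm[2p]{\XL_t}}$. I would dispatch this by the localization argument above, invoking the strong-solution theory already used for \eqref{eq:langevin} and the Lyapunov control of moments coming from $\ps{\nabla U(x)}{x}\ge \mU\norm[2]{x}$.
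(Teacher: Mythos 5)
Your proof is correct and follows essentially the same route as the paper: compute $\generator \norm[2p]{\cdot}$, use strong convexity at the minimizer to get a drift inequality, integrate by a Gr\"onwall/variation-of-constants comparison, and close by induction on $p$. The only cosmetic difference is that the paper invokes a comparison theorem from Meyn and Tweedie in place of your explicit It\^o-plus-localization argument, and leaves the $a_{k,p}$ bookkeeping as ``a straightforward induction'' where you carry it out in full (correctly).
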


\begin{proof}
The proof is postponed to \Cref{appendix:proof-lem-moment}.
\end{proof}

\begin{lemma}
  \label{theo:wasser_p_diffusion}
Assume  \Cref{assum:convex} and let $p \in \nset^*$.
We have $\int_{\rset^d}  \norm[2p]{y} \pi(\rmd y)  \leq a_{0,p}$.
\end{lemma}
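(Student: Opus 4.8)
The plan is to derive the bound on $\int_{\rset^d} \norm[2p]{y}\,\pi(\rmd y)$ as a direct consequence of \Cref{lem:moment_diffusion_2_p} by letting $t\to\plusinfty$. The key point is that $\pi$ is the invariant measure of the diffusion $(\XL_t)_{t\geq 0}$, so for any initial point $x$ the law of $\XL_t$ converges to $\pi$ in a suitable sense, and in particular the $2p$-th moments should converge.

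First I would fix $x\in\rset^d$ (for instance $x=0$) and apply \Cref{lem:moment_diffusion_2_p}, which gives, for all $t\geq 0$,
\begin{equation*}
\expe{\norm[2p]{\XL_t}} \leq a_{0,p}\parenthese{1-\rme^{-2pmt}} + \sum_{k=1}^p a_{k,p}\rme^{-2kmt}\norm[2k]{x}\eqsp.
\end{equation*}
Since $m>0$ by \Cref{assum:convex}, every exponential term $\rme^{-2kmt}$ with $k\geq 1$ tends to $0$ as $t\to\plusinfty$, and $1-\rme^{-2pmt}\to 1$, so that $\limsup_{t\to\plusinfty}\expe{\norm[2p]{\XL_t}}\leq a_{0,p}$. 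In particular $\sup_{t\geq 0}\expe{\norm[2p]{\XL_t}}<\plusinfty$, so the family $\{\norm[2p]{\XL_t}\}_{t\geq 0}$ is bounded in $\Lspace^1$; combined with a uniform integrability argument (e.g. using the $\Lspace^{2}$-bound on $\norm[2p]{\XL_t}$ coming from \Cref{lem:moment_diffusion_2_p} applied with $2p$ in place of $p$, or simply using convergence in distribution together with a truncation/monotone-limit argument), one obtains that $\pi(\norm[2p]{\cdot}) = \lim_{t\to\plusinfty}\expe{\norm[2p]{\XL_t}}\leq a_{0,p}$.

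Concretely, to avoid invoking uniform integrability one can argue by truncation: for any $N>0$, the function $y\mapsto \min(\norm[2p]{y},N)$ is bounded and continuous, and by \Cref{prop:existence_ergodicity} (specifically the $V$-geometric ergodicity of $(\sgP_t)_{t\geq 0}$, which implies weak convergence $\delta_x\sgP_t\to\pi$), we get $\int \min(\norm[2p]{y},N)\,\pi(\rmd y) = \lim_{t\to\plusinfty}\expe{\min(\norm[2p]{\XL_t},N)} \leq \limsup_{t\to\plusinfty}\expe{\norm[2p]{\XL_t}}\leq a_{0,p}$. Letting $N\to\plusinfty$ and using the monotone convergence theorem yields $\int\norm[2p]{y}\,\pi(\rmd y)\leq a_{0,p}$, which is the claim.

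The only mild obstacle is making the passage to the limit rigorous, i.e.\ justifying that the $2p$-th moment of $\XL_t$ converges (or at least that its $\limsup$ dominates the $\pi$-moment); the truncation argument above handles this cleanly using only weak convergence, which is already available from \Cref{prop:existence_ergodicity}, so no new ingredient is needed.
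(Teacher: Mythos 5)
Your core idea matches the paper's: take $t\to\plusinfty$ in \Cref{lem:moment_diffusion_2_p} and transfer the resulting bound to $\pi$. The difference is only in how the limit is justified. The paper observes that the drift inequality \eqref{eq:drift-Vp} (which holds under \Cref{assum:convex} alone) together with \cite[Theorem 2.2]{roberts:tweedie:1996} gives $V_p$-geometric ergodicity of $(\XL_t)_{t\geq 0}$, so $\sgP_t V_p(x)\to\pi(V_p)$ directly; this makes the proof self-contained under \Cref{assum:convex}. You instead deduce the inequality from weak convergence $\delta_x\sgP_t\to\pi$ plus truncation and monotone convergence, which is a perfectly valid and in some ways more elementary route.

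The one caveat is that you invoke \Cref{prop:existence_ergodicity} for the weak convergence, but that proposition is stated under \Cref{assum:grad-loc-lipschitz} and \Cref{assum:ergodicity-U}, whereas the present lemma only assumes \Cref{assum:convex}. To keep the proof within the lemma's stated hypotheses, you should replace that citation: under strong convexity, weak (indeed $W_2$-) convergence of $\delta_x\sgP_t$ to $\pi$ follows from the synchronous coupling contraction $W_2(\delta_x\sgP_t,\delta_y\sgP_t)\leq\rme^{-\mU t}\norm{x-y}$, or you can simply use \eqref{eq:drift-Vp} together with \cite[Theorem 2.2]{roberts:tweedie:1996} as the paper does, which yields $V_p$-geometric ergodicity and hence weak convergence as a byproduct. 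With that substitution your argument is complete.
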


\begin{proof}
The proof is postponed to \Cref{appendix:proof-theo-wasser}.
\end{proof}


Let $\gastep>0$ and under \Cref{assum:grad-loc-lipschitz} set
\begin{equation}
\label{eq:def-N}
N = \ceil{(\ell+1)/2} \eqsp.
\end{equation}
Consider 
$\polyga[3]:\rset_+ \to \rset_+$ defined for all $s\in\rset_+$ by
\begin{equation}
\label{eq:def-polyga3}
\polyga[3](s) = 2d + 8\LG^2(1+s^{\ell+1})
\defEns{\frac{\gastep}{2} \parenthese{2+\sum_{k=1}^\N a_{k,\N} s^{2k}} + \N \mU a_{0,\N} \frac{\gastep^{2}}{3}} \eqsp.
\end{equation}

\begin{lemma}
\label{item:moment_diffusion_3_p}
Assume \Cref{assum:grad-loc-lipschitz} and \Cref{assum:convex}.
Let $x \in \rset^d$, $\gastep>0$ and $(\XL_t)_{t \geq 0}$ be the solution of \eqref{eq:langevin} started at $x$. For all $t\in\ccint{0,\gastep}$, we have
 $\expe{\norm[2]{\XL_t-x}} \leq t \polyga[3](\norm{x})$, where $\polyga[3]$ is defined in \eqref{eq:def-polyga3}.
\end{lemma}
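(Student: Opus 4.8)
## Proof Plan for Lemma~\ref{item:moment_diffusion_3_p}

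\textbf{Overall strategy.} The plan is to use Itô's formula on $t \mapsto \norm[2]{\XL_t - x}$, take expectations to kill the martingale part, and then bound the resulting drift and diffusion contributions. Writing $\XL_t - x = -\int_0^t \nabla U(\XL_s)\,\rmd s + \sqrt{2}\,B_t$, Itô's formula (or simply expanding the square of this expression and taking expectations, using $\expe{B_t} = 0$ and $\expe{\norm[2]{B_t}} = dt$) gives
\begin{equation*}
\expe{\norm[2]{\XL_t - x}} = \expe{\norm[2]{\textstyle\int_0^t \nabla U(\XL_s)\,\rmd s}} + 2dt - 2\sqrt{2}\,\expe{\ps{\textstyle\int_0^t \nabla U(\XL_s)\,\rmd s}{B_t}} \eqsp.
\end{equation*}
The cross term must be handled carefully since $\int_0^t \nabla U(\XL_s)\,\rmd s$ and $B_t$ are not independent; the cleaner route is the genuine Itô expansion $\rmd \norm[2]{\XL_t - x} = \bigl(-2\ps{\XL_t - x}{\nabla U(\XL_t)} + 2d\bigr)\rmd t + (\text{mart.})$, so that $\expe{\norm[2]{\XL_t - x}} = 2dt - 2\int_0^t \expe{\ps{\XL_s - x}{\nabla U(\XL_s)}}\,\rmd s$.

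\textbf{Main steps.} First I would rewrite $\ps{\XL_s - x}{\nabla U(\XL_s)} = \ps{\XL_s}{\nabla U(\XL_s)} - \ps{x}{\nabla U(\XL_s)}$; the sign of the first term is not controlled, so instead I would go back one step and bound the increment directly. A more robust approach: apply Cauchy--Schwarz to the representation $\XL_t - x = -\int_0^t \nabla U(\XL_s)\,\rmd s + \sqrt 2 B_t$, yielding $\norm[2]{\XL_t - x} \leq 2\norm[2]{\int_0^t \nabla U(\XL_s)\,\rmd s} + 4\norm[2]{B_t}$, hence using Jensen on the time integral,
\begin{equation*}
\expe{\norm[2]{\XL_t - x}} \leq 2t\int_0^t \expe{\norm[2]{\nabla U(\XL_s)}}\,\rmd s + 4dt \eqsp.
\end{equation*}
Then, by \eqref{eq:estimate_nabla_u_x}, $\norm[2]{\nabla U(\XL_s)} \leq 8\LG^2(1 + \norm[2(\ell+1)]{\XL_s})$, and since $N = \ceil{(\ell+1)/2}$ we have $2(\ell+1) \leq 4N$, so $\norm[2(\ell+1)]{\XL_s} \leq 1 + \norm[4N]{\XL_s} = 1 + (\norm[2]{\XL_s})^{2N}$; actually I would more carefully interpolate so that only even moments of order $\leq 2N$ appear, matching the $a_{k,N}$ coefficients. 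Applying Lemma~\ref{lem:moment_diffusion_2_p} with $p = N$ bounds $\expe{\norm[2N]{\XL_s}} \leq a_{0,N} + \sum_{k=1}^N a_{k,N}\norm[2k]{x}$ (dropping the decaying exponentials, bounded by $1$), and integrating over $s \in [0,t] \subseteq [0,\gastep]$ gives the $\frac{\gastep}{2}(2 + \sum_k a_{k,N} s^{2k})$-type term; the $\N \mU a_{0,\N}\gastep^2/3$ piece must come from a sharper use of the $(1 - \rme^{-2Nms})$ factor in Lemma~\ref{lem:moment_diffusion_2_p}, i.e. $1 - \rme^{-2Nms} \leq 2Nms$, producing an extra factor of $s$ and hence $\int_0^t s\,\rmd s = t^2/2 \leq t\gastep/2$-type bounds refined to $t^2/3$ via a second-order estimate.

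\textbf{Main obstacle.} The delicate point is getting the precise polynomial $\polyga[3]$ as stated — in particular tracking the exact constants ($2d$, the factor $8\LG^2$, the split into the two bracketed terms with $\gastep/2$ and $\N\mU a_{0,\N}\gastep^2/3$). This requires being careful about which exponential decay factors from Lemma~\ref{lem:moment_diffusion_2_p} one retains versus bounds by $1$, and about the order of integration in $s$. The term with $a_{0,N}$ and the $\gastep^2/3$ suggests one keeps the factor $1 - \rme^{-2Nms} \le 2Nms$ specifically for the $k=0$ contribution and integrates $\int_0^t 2Nms\,\rmd s \cdot (\text{stuff})$, whereas for $k \geq 1$ one bounds $\rme^{-2kms} \leq 1$; reconciling this bookkeeping with the displayed formula is the step that demands the most care. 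Everything else — Itô's formula, Cauchy--Schwarz, Jensen, invoking the moment lemma — is routine.
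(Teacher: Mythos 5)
Your Cauchy–Schwarz route diverges from the paper's argument, and it has a genuine gap that prevents it from recovering the stated polynomial $\polyga[3]$. When you write $\expe{\norm[2]{\XL_t-x}} \leq 2t\int_0^t \expe{\norm[2]{\nabla U(\XL_s)}}\,\rmd s + 4dt$ and then bound $\norm[2]{\nabla U(\XL_s)}$ via \eqref{eq:estimate_nabla_u_x}, you are forced to control $\expe{\norm[2(\ell+1)]{\XL_s}}$. Since $N = \ceil{(\ell+1)/2}$ gives $2N \geq \ell+1$ but in general $2(\ell+1) > 2N$, this quantity lives at moment order up to $4N$, so you would have to invoke \Cref{lem:moment_diffusion_2_p} with $p = 2N$. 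That produces coefficients $a_{k,2N}$ and powers $\norm[2k]{x}$ up to $k = 2N$, whereas $\polyga[3]$ in \eqref{eq:def-polyga3} only contains $a_{k,N}$ for $k \leq N$. Your aside that you would ``more carefully interpolate so that only even moments of order $\leq 2N$ appear'' cannot be made to work: there is no way to dominate $\norm[2(\ell+1)]{\XL_s}$ by a fixed polynomial of degree $\leq 2N$ in $\norm{\XL_s}$.

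The missing idea, which the paper uses, is to exploit strong convexity \emph{inside} the Dynkin/It\^o computation rather than discarding the drift term. Writing $\tilde V_x(y) = \norm[2]{y-x}$, \Cref{assum:convex} gives $\generator \tilde V_x(y) = 2(-\ps{\nabla U(y)}{y-x} + d) \leq 2\bigl(-m\tilde V_x(y) + d - \ps{\nabla U(x)}{y-x}\bigr)$: the quantity $\nabla U(x)$ is frozen at the initial point, so after dropping the nonpositive $-m\tilde V_x$ term one only needs to estimate $\absolute{\expe{\ps{\nabla U(x)}{\XL_t - x}}} \leq \norm{\nabla U(x)}\,\norm{\expe{\XL_t - x}}$. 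Crucially, the expectation is pulled inside the norm before any squaring, and $\expe{B_t}=0$ kills the Brownian part, so one lands on $\int_0^t \expe{\norm{\nabla U(\XL_s)}}\,\rmd s$ --- a \emph{first} moment of $\nabla U$, hence only a $2N$-th moment of $\XL_s$. This is precisely what keeps the final polynomial at $a_{k,N}$, $k\leq N$. You actually started down the It\^o road but abandoned it because ``the sign of the first term is not controlled''; strong convexity is exactly the assumption that controls it, and dropping it in favor of the crude Cauchy--Schwarz split is what costs you the extra factor of $\norm{\nabla U(\XL_s)}$ inside the expectation. (Your explanation of the $\gastep^2/3$ term is also off: in the paper it comes from $s+\rme^{-s}-1\leq s^2/2$ applied to the $a_{0,N}$ contribution followed by one more integration $\int_0^t s^2\,\rmd s = t^3/3 \leq t\gastep^2/3$, not from a second-order refinement of the $k\geq 1$ terms.)
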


\begin{proof}
The proof is postponed to \Cref{appendix:proof-item-moment}.
\end{proof}

For $p\in\nset$ and $\gastep>0$, define $\polygaq[p]:\rset_+ \to \rset_+$ for all $s\in\rset_+$ by,
\begin{align}
\nonumber
&\polygaq[p](s) =
\defEns{\prod_{i=1}^p 2i(d+3i-2)}
\Bigg[
2d\frac{\gastep^{p}}{(p+1)!} + 8\LG^2(1+s^{\ell+1}) \\
\nonumber
&\phantom{-------}\times
	\bigg\{
	(2+\sum_{k=1}^N a_{k,N} s^{2k}) \frac{\gastep^{p+1}}{(p+2)!} + 2N m a_{0,N} \frac{\gastep^{p+2}}{(p+3)!}
	\bigg\}
\Bigg] \\
\nonumber
&\phantom{---} + 2\sum_{k=1}^p \defEns{\prod_{i=k+1}^p 2i(d+3i-2)} \defEns{d+4+\frac{\LG^2(1+s^{\ell+1})^2}{m(k+1)}} \\
\label{eq:polygaq}
&\phantom{-------} \times
\defEns{\parenthese{\sum_{i=1}^k a_{i,k} s^{2i}} \frac{\gastep^{p-k}}{(p+1-k)!} + 2km a_{0,k} \frac{\gastep^{p+1-k}}{(p+2-k)!}}
\end{align}
where $N$ is defined in \eqref{eq:def-N}.

\begin{lemma}
\label{item:moment_diffusion_4_p}
Assume \Cref{assum:grad-loc-lipschitz} and \Cref{assum:convex}. Let $p \in \nset$, $\gastep>0$, $x \in \rset^d$ and $(\XL_t)_{t \geq 0}$ be the solution of \eqref{eq:langevin} started at $x$. For all $t \in\ccint{0,\gastep}$, we have
$\expe{\norm[2p]{\XL_t}\norm[2]{\XL_t - x}} \leq
t\polygaq[p](\norm{x})$, where $\polygaq[p]$ is defined in \eqref{eq:polygaq}.
\end{lemma}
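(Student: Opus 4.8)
The plan is to apply Itô's formula to the function $t \mapsto \norm[2p]{\XL_t}\norm[2]{\XL_t - x}$ and estimate the resulting drift term, then integrate in time over $\ccint{0,\gastep}$ and close the recursion in $p$. Write $f_p(y) = \norm[2p]{y}\norm[2]{y-x}$. Using \eqref{eq:langevin}, Itô's formula gives
\[
\expe{f_p(\XL_t)} = f_p(x) + \int_0^t \expe{\generator f_p(\XL_s)}\,\rmd s \eqsp,
\]
where $\generator$ is the generator in \eqref{eq:generator-langevin}; note $f_p(x) = 0$ since the $\norm[2]{y-x}$ factor vanishes at $y=x$. A direct computation of $\generator f_p$ splits into a part coming from $\norm[2p]{\cdot}$ times $\norm[2]{\cdot - x}$ and a part from differentiating $\norm[2]{\cdot - x}$; the latter produces the ``good'' term $-2p\,\norm[2p]{y}\,\norm[2]{y-x}$-type contribution via strong convexity (\Cref{assum:convex}) together with lower-order terms, while the former is controlled by \Cref{assum:grad-loc-lipschitz} through $\norm{\nablaU(y)} \leq 2\LG(1+\norm[\ell+1]{y})$ from \eqref{eq:estimate_nabla_u_x}. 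The key structural point is that the $\norm[2]{y-x}$ factor is small for $s$ small, so after bounding $\generator f_p(\XL_s)$ by (a constant times) $\norm[2p]{\XL_s} + \norm[2p+\ell+1]{\XL_s}$ plus $\norm[2(p+1)]{\XL_s}\norm[\cdots]{\cdot}$ pieces, one integrates twice: once for the $\norm[2]{\XL_s - x}$ factor itself (using \Cref{item:moment_diffusion_3_p}) and once in the outer time integral.

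More precisely, I would establish a differential inequality of the schematic form
\[
\frac{\rmd}{\rmd t}\expe{f_p(\XL_t)} \leq -2pm\,\expe{f_p(\XL_t)} + C_p\,\expe{\big(1+\norm[2]{\XL_t}\big)^{?}\,\norm[2]{\XL_t - x}} + (\text{terms with one fewer power})\eqsp,
\]
so that Grönwall, combined with the bound $\expe{\norm[2p']{\XL_t}\norm[2]{\XL_t-x}} \leq t\,\polygaq[p'](\norm{x})$ for $p' < p$ as an induction hypothesis, and with \Cref{lem:moment_diffusion_2_p} to control $\expe{\norm[2k]{\XL_t}}$, yields the stated bound. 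The explicit form of $\polygaq[p]$ in \eqref{eq:polygaq} — with its sum over $k$ and the products $\prod_{i=k+1}^p 2i(d+3i-2)$ — is exactly what emerges from iterating this recursion: each step down in $p$ contributes a factor $2p(d+3p-2)$ (from $2p$ cross terms in $\DD^2 \norm[2p]{\cdot}$, the $d$ from the Laplacian, and the $3p$ from Young's inequality applied to the $\ps{\nablaU}{\cdot}$ cross terms), and the factorial denominators $\gastep^{p-k}/(p+1-k)!$ come from the repeated time integration of $t^{j}$ terms on $\ccint{0,\gastep}$.

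The main obstacle is purely bookkeeping: tracking the precise constants so that the final bound matches \eqref{eq:polygaq} term by term, in particular handling the cross term $-2\ps{\nablaU(y)}{y-x}\norm[2p]{y}$, which is not sign-definite (unlike $\ps{\nablaU(y)-\nablaU(x)}{y-x}$). I would handle it by writing $\ps{\nablaU(y)}{y-x} = \ps{\nablaU(y)-\nablaU(x)}{y-x} + \ps{\nablaU(x)}{y-x}$, bounding the first via \Cref{assum:convex} (giving a favorable $-m\norm[2]{y-x}$) and the second by Cauchy--Schwarz and Young as $\norm{\nablaU(x)}\norm{y-x} \leq (\text{const})\norm[2]{\nablaU(x)} + (\text{const})\norm[2]{y-x}$, with $\norm[2]{\nablaU(x)} \leq 4\LG^2(1+\norm[\ell+1]{x})^2$ absorbed into the polynomial in $\norm{x}$. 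The $\norm[2k]{x}$ prefactors appearing inside $a_{i,k}$ then propagate naturally from \Cref{lem:moment_diffusion_2_p}. Everything else is a careful but routine application of Itô calculus and Grönwall's lemma.
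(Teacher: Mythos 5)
Your plan matches the paper's proof almost step for step: both apply the generator to $W_{x,p}(y)=\norm[2p]{y}\norm[2]{y-x}$, decompose $\ps{\nablaU(y)}{y-x}$ via strong convexity plus Young's inequality with $\norm{\nablaU(x)}$ controlled by \eqref{eq:estimate_nabla_u_x}, and close the recursion by induction on $p$ using \Cref{lem:moment_diffusion_2_p}. The only cosmetic difference is that the paper, after Young's inequality, lets the convexity decay exactly cancel the unfavorable $\norm[2]{y-x}$ pieces and then simply integrates the resulting bound (no Gr\"onwall needed), whereas you keep a residual $-2pm$ term and invoke Gr\"onwall; both give the stated $t\,\polygaq[p](\norm{x})$ bound.
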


\begin{proof}
The proof is postponed to \Cref{appendix:proof-item-diffusion}.
\end{proof}

\begin{lemma}
\label{lem:reg-hessian-U-growth}
Assume \Cref{assum:hessian}.
\begin{enumerate}[label = \alph*)]
\item
\label{item:reg-hessian-U-growth-1}
For all $x \in \rset^d$, $\normLigne{\nablaS U(x) } \leq \CH \defEnsLigne{1+\normLigne[\nue+\betah]{x}}$ where $\CH = \max(2\LH,\norm{\nablaS U(0)})$.
\item
\label{item:reg-hessian-U-growth-2}
For all $x,y \in \rset^d$,
\begin{equation*}
\norm{\nabla U(x) - \nabla U(y)-\nablaS U(y)(x-y)} \leq \frac{2\LH}{1+\betah} \defEns{1+ \norm[\nue]{x} + \norm[\nue]{y}}\norm[1+\betah]{x-y} \eqsp.
\end{equation*}
\end{enumerate}
\end{lemma}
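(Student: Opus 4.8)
The plan is to derive both statements from Assumption~\ref{assum:hessian} by elementary manipulations, with no heavy machinery required. For part~\ref{item:reg-hessian-U-growth-1}, I would start from the triangle inequality $\norm{\nablaS U(x)} \leq \norm{\nablaS U(0)} + \norm{\nablaS U(x) - \nablaS U(0)}$ and apply Assumption~\ref{assum:hessian} with $y = 0$, giving $\norm{\nablaS U(x) - \nablaS U(0)} \leq \LH(1 + \norm[\nue]{x})\norm[\betah]{x}$. The remaining work is to absorb this into the claimed form $\CH\{1 + \norm[\nue+\betah]{x}\}$: I would split into the cases $\norm{x} \leq 1$ and $\norm{x} > 1$. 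When $\norm{x} \leq 1$, the term $(1+\norm[\nue]{x})\norm[\betah]{x} \leq 2$; when $\norm{x} > 1$, $(1+\norm[\nue]{x})\norm[\betah]{x} \leq 2\norm[\nue]{x}\norm[\betah]{x} = 2\norm[\nue+\betah]{x}$. In both cases the bound is dominated by $2\LH\{1 + \norm[\nue+\betah]{x}\}$, and adding the constant term $\norm{\nablaS U(0)}$ and taking $\CH = \max(2\LH, \norm{\nablaS U(0)})$ yields the claim (perhaps after noting $a + b\{1+t\} \le \max(a,b)\{1 + \text{something}\}$-type bookkeeping, which is routine).

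For part~\ref{item:reg-hessian-U-growth-2}, the key identity is the Taylor-with-integral-remainder formula
\begin{equation*}
\nabla U(x) - \nabla U(y) - \nablaS U(y)(x-y) = \int_0^1 \{\nablaS U(y + t(x-y)) - \nablaS U(y)\}(x-y)\, \rmd t \eqsp.
\end{equation*}
Taking norms, bounding the operator norm of the Hessian difference by Assumption~\ref{assum:hessian} with the pair $(y + t(x-y), y)$ — which contributes a factor $\LH\{1 + \norm[\nue]{y+t(x-y)} + \norm[\nue]{y}\}\,t^{\betah}\norm[\betah]{x-y}$ — and then bounding $\norm{(x-y)}$ by $\norm{x-y}$, I get an integrand controlled by $\LH\{1 + \norm[\nue]{y + t(x-y)} + \norm[\nue]{y}\} t^\betah \norm[1+\betah]{x-y}$. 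Using convexity of $s \mapsto \norm[\nue]{s}$ (or simply $\norm{y + t(x-y)} \leq (1-t)\norm{y} + t\norm{x}$ followed by $\norm[\nue]{a+b} \le $ a constant times $\norm[\nue]{a} + \norm[\nue]{b}$), one bounds $\norm[\nue]{y+t(x-y)}$ uniformly in $t$ by a multiple of $\norm[\nue]{x} + \norm[\nue]{y}$; then $\int_0^1 t^\betah\, \rmd t = 1/(1+\betah)$ produces the stated constant $2\LH/(1+\betah)$. Some care is needed so that the combination $1 + \norm[\nue]{y+t(x-y)} + \norm[\nue]{y}$ collapses into exactly $1 + \norm[\nue]{x} + \norm[\nue]{y}$ rather than a larger constant times it; choosing the crude bound $\norm[\nue]{y+t(x-y)} \le \max(\norm{x},\norm{y})^\nue \le \norm[\nue]{x} + \norm[\nue]{y}$ (valid since $t \in [0,1]$, the point lies on the segment, hence has norm at most $\max(\norm x, \norm y)$) does this cleanly.

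The main obstacle — really the only one — is the bookkeeping in matching the polynomial growth bounds to the precise form stated, namely getting the exponent $\nue + \betah$ in part~\ref{item:reg-hessian-U-growth-1} and the exact constant $2\LH/(1+\betah)$ with the exact growth factor $1 + \norm[\nue]{x} + \norm[\nue]{y}$ in part~\ref{item:reg-hessian-U-growth-2}. Neither is conceptually hard, but one must be disciplined about the case split in \ref{item:reg-hessian-U-growth-1} and about using the segment bound $\norm{y + t(x-y)} \le \max(\norm x, \norm y)$ in \ref{item:reg-hessian-U-growth-2} to avoid picking up spurious constants. No deeper estimate or assumption beyond Assumption~\ref{assum:hessian} is needed.
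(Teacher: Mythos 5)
Your proof is correct and follows essentially the same route as the paper's: for part~(a) you expand around $y=0$ via the triangle inequality and then absorb $\LH(1+\norm[\nue]{x})\norm[\betah]{x}$ into the $\CH(1+\norm[\nue+\betah]{x})$ form (the paper does this in one line via $\norm[\betah]{x}\le 1+\norm[\nue+\betah]{x}$, which is exactly what your case split proves), and for part~(b) you use the same Taylor integral-remainder representation, the same segment estimate $\norm[\nue]{tx+(1-t)y}\le\norm[\nue]{x}+\norm[\nue]{y}$, and $\int_0^1 t^{\betah}\,\rmd t=1/(1+\betah)$ to produce the constant. The only soft spot is the bookkeeping at the end of (a), which you flag yourself; the paper is equally terse there, and nothing downstream depends on the exact value of $\CH$.
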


\begin{proof}
\begin{enumerate}[label=\alph*), wide=0pt, labelindent=\parindent]
\item
By \Cref{assum:hessian}, we get for all $x \in \rset^d$
\begin{align*}
\norm{\nabla^2 U(x)}& \leq  \norm{\nablaS U(x)-\nablaS U(0)} + \norm{\nablaS U(0)} \\
&\leq \LH \defEns{1+\norm[\nue]{x}} \norm[\betah]{x} + \norm{\nablaS U(0)}  \eqsp.
\end{align*}
The proof then follows from the upper bound for all $x \in \rset^d$, $\norm[\betah]{x} \leq 1+\norm[\nue + \betah]{x}$.
\item
Let $x,y \in \rset^d$. By \Cref{assum:hessian},
\begin{align*}
&\norm{\nabla U(x) - \nabla U(y)-\nablaS U(y)(x-y)} \\
&\phantom{-------}\leq
\int_0^1 \norm{\nablaS \pU (tx+(1-t)y) - \nablaS \pU(y)} \rmd t \norm{x-y} \\
&\phantom{-------}\leq \LH \int_0^1 \defEns{1+\norm[\nue]{y} + \norm[\nue]{tx+(1-t)y}} \norm[\betah]{t(x-y)} \rmd t \norm{x-y} \eqsp,
\end{align*}
and the proof follows from $\norm[\nue]{tx+(1-t)y} \leq \norm[\nue]{x} + \norm[\nue]{y}$.
\end{enumerate}
\end{proof}


For all $n \in \nset$, we now bound the Wasserstein distance $W_2$ between $\pi$ and
the distribution of the $n^{\text{th}}$ iterate of
$\XE_n$ defined by \eqref{eq:def_tamed_euler_1}.
The strategy consists given two initial conditions $(x,y)$, in coupling
$\XE_n$ and $\XL_{\gamma n}$ solution of \eqref{eq:langevin} at time $\gamma n$, using the same
Brownian motion.
Similarly to \eqref{eq:def_Y_Ybar}, for $\gastep>0$, consider the
unique strong solution $(\XL_t, \Ybar_t)_{t\geq 0}$ of
\begin{equation}
\label{eq:def_coupling}
  \begin{cases}
    & \rmd \XL_t = - \nabla U(\XL_t) \rmd t + \sqrt{2} \rmd B_t \quad, \quad \XL_0 = y \eqsp, \\
    & \rmd \XEL_{t} =  - \Ugad\parenthese{\XEL_{\floor{t/\gastep}\gastep}} \rmd t + \sqrt{2} \rmd B_{t} \quad , \quad \Ybar_0 = x \eqsp,
  \end{cases}
\end{equation}
where $(B_t)_{t \geq 0}$ is a $d$-dimensional Brownian motion.
Note that for $n\in\nset$, $\XEL_{n\gastep} = \XE_n$ and let $(\filtration_t)_{t \geq 0}$ be the filtration
associated with $(B_t)_{t \geq 0}$.

\begin{lemma}
\label{lem:wasserstein-contraction-gamma2}
Assume \Cref{assum:dr-close-nablaU}, \Cref{assum:ergodicity-drift},  \Cref{assum:grad-loc-lipschitz} and \Cref{assum:convex}. Let $\gastep_0>0$.
Define $(\XL_{t})_{t \geq 0}$, $(\XEL_t)_{t\geq 0}$ by \eqref{eq:def_coupling}.
Then there exists $C>0$ such that for all $n \in \nset$ and $\gastep\in\ocint{0,\gastep_0}$, almost surely,
\begin{equation*}
\CPE{\norm[2]{Y_{(n+1)\gamma}- \XEL_{(n+1)\gamma}}}{\filtration_{n\gamma}} \leq
\rme^{-\mU \gastep} \norm[2]{\XL_{n\gastep} - \XEL_{n\gamma}}
+ C \gastep^2 \lyape(\XEL_{n\gamma}) \eqsp.
\end{equation*}
\end{lemma}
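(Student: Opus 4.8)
The plan is to expand the squared distance at time $(n+1)\gastep$ by writing $Y_{(n+1)\gastep} - \XEL_{(n+1)\gastep}$ as the sum of the contribution of the exact drift flow started from $\XL_{n\gastep}$ and $\XEL_{n\gastep}$ on $\ccint{n\gastep, (n+1)\gastep}$, plus the error coming from (i) the fact that the discrete chain uses a frozen drift $\dr(\XEL_{n\gastep})$ over the whole interval instead of the instantaneous gradient $\nabla U(\XEL_t)$, and (ii) the fact that $\dr$ differs from $\nabla U$. Since both processes in \eqref{eq:def_coupling} use the same Brownian motion, the stochastic integrals cancel and the difference $D_t = Y_t - \XEL_t$ is absolutely continuous, so I would study $\norm[2]{D_t}$ via the chain rule: $\rmd \norm[2]{D_t}/\rmd t = -2\ps{D_t}{\nabla U(Y_t) - \dr(\XEL_{n\gastep})}$. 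Adding and subtracting $\nabla U(\XEL_t)$ and then $\nabla U(\XEL_{n\gastep})$ splits the right-hand side into a strongly-convex contraction term $-2\ps{Y_t - \XEL_t}{\nabla U(Y_t) - \nabla U(\XEL_t)} \leq -2\mU \norm[2]{D_t}$ (by \Cref{assum:convex}) and two remainder inner products that I would bound by Cauchy–Schwarz and Young's inequality.

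The key steps, in order, are: first, set up the ODE for $\norm[2]{D_t}$ on $\ccint{n\gastep,(n+1)\gastep}$ and isolate the contraction term; second, absorb the cross terms using $2\ps{u}{v} \leq \varepsilon \norm[2]{u} + \varepsilon^{-1}\norm[2]{v}$ with $\varepsilon$ a small multiple of $\mU$, so that after combining with the contraction one keeps a term like $-\mU\norm[2]{D_t}$ plus remainder terms not involving $D_t$; third, a Grönwall-type integration over the interval of length $\gastep$ yields the factor $\rme^{-\mU \gastep}\norm[2]{D_{n\gastep}}$ plus $\gastep$ times the time integral of the remainders; fourth, bound those remainders. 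The discretization error $\norm{\nabla U(\XEL_t) - \nabla U(\XEL_{n\gastep})}$ is controlled by \Cref{assum:grad-loc-lipschitz} in terms of $(1+\norm[\ell]{\XEL_t}+\norm[\ell]{\XEL_{n\gastep}})\norm{\XEL_t - \XEL_{n\gastep}}$, and on this short interval $\norm{\XEL_t - \XEL_{n\gastep}}$ is $\gastep\norm{\dr(\XEL_{n\gastep})}$ plus a Brownian increment of size $\sqrt{\gastep}$; the taming error $\norm{\nabla U(\XEL_{n\gastep}) - \dr(\XEL_{n\gastep})}$ is at most $\gastep\Cal(1+\norm[\alphag]{\XEL_{n\gastep}})$ by \Cref{assum:dr-close-nablaU}. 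Taking conditional expectation given $\filtration_{n\gastep}$, the odd Brownian moments vanish and the even ones contribute constants times powers of $\gastep$, leaving a polynomial in $\norm{\XEL_{n\gastep}}$ multiplied by $\gastep^2$; using \eqref{eq:estimate-drift-step} and the fact that any fixed polynomial is dominated by $\lyape$ up to a multiplicative constant uniformly for $\gastep \in \ocint{0,\gastep_0}$, this polynomial-times-$\gastep^2$ bound becomes $C\gastep^2 \lyape(\XEL_{n\gastep})$.

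One subtlety I would handle carefully is that the remainder integrals that survive are of order $\gastep^2$ only after one factor of $\gastep$ comes from the integration over the interval and the integrand itself is of order $\gastep$; in particular the Brownian increment term $\sqrt{\gastep}$ enters squared inside $\norm[2]{\XEL_t - \XEL_{n\gastep}}$, giving $\gastep$, and this is multiplied by a further $\gastep$ from integrating. I also need to be a little careful that the Lipschitz factor $(1+\norm[\ell]{\XEL_t}+\norm[\ell]{\XEL_{n\gastep}})^2$ gets integrated against Gaussian tails of the Brownian increment, which is fine since all moments are finite, but it does force the constant $C$ to depend on $\ell$, $\Cal$, $\alphag$, $\LG$, $\mU$, $d$ and $\gastep_0$ — exactly as in \Cref{lemma:kullback-leibler}, whose polynomial bounds $\polyga[1]$, $\polyp[2]$ I would reuse essentially verbatim. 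The main obstacle is purely bookkeeping: assembling all the cross-terms from expanding $\norm[2]{\XEL_t - \XEL_{n\gastep}}$ and the Lipschitz square, then verifying that the worst power of $\norm{\XEL_{n\gastep}}$ appearing is still absorbed by $\lyape$; there is no conceptual difficulty beyond the standard one-step strong-convexity contraction argument combined with the tamed-Euler moment estimates already established.
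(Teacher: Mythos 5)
Your proposal is correct and follows essentially the same route as the paper: reduce to one step by the Markov property, write the ODE for $\norm[2]{\XL_t - \XEL_t}$ (the It\^{o} correction cancels since both processes share the Brownian motion), decompose $\nabla U(\XL_t) - \dr(\XEL_{n\gastep})$ into the strongly convex contraction term $\nabla U(\XL_t) - \nabla U(\XEL_t)$ plus a discretization error $\nabla U(\XEL_t) - \nabla U(\XEL_{n\gastep})$ and a taming error $\nabla U(\XEL_{n\gastep}) - \dr(\XEL_{n\gastep})$, absorb the cross terms by Young's inequality, apply Gr\"onwall, control the remainders with the polynomial bounds $\polyga[1], \polyp[2]$ from \Cref{lemma:kullback-leibler}, and finally dominate the polynomial by $\lyape$. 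The only cosmetic difference is that you integrate Gr\"onwall pathwise before taking conditional expectation, whereas the paper takes expectation first (after a uniform-integrability check) and applies Gr\"onwall to the deterministic map $t\mapsto\expe{\norm[2]{\Theta_t}}$ — both are valid.
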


\begin{proof}
Using the Markov property, we only need to show the result for $n=0$. Define for $t\in\coint{0, \gastep}$, $\Theta_t = \XL_t - \XEL_t$. By It\^o's formula, we have for all $t\in\coint{0, \gastep}$,
\begin{equation*}
\norm[2]{\Theta_t} = \norm[2]{y-x} - 2\int_{0}^t \ps{\Theta_s}{\nablaU(\XL_s) - \Ugad (x)} \rmd s \eqsp.
\end{equation*}
By \eqref{eq:estimate_nabla_u_x} and \Cref{lem:moment_diffusion_2_p}, the family of random variables $\parenthese{\ps{\Theta_s}{\nablaU(\XL_s) - \Ugad (x)}}_{s\in\coint{0,\gastep}}$ is uniformly integrable. Pathwise continuity implies then for $s\in\coint{0, \gastep}$ the continuity of $s \mapsto \expe{\ps{\Theta_s}{\nablaU(\XL_s) - \Ugad (x)}}$.
Taking the expectation and deriving, we have for $t\in\coint{0,\gastep}$,
\begin{align}
\nonumber
\frac{\rmd}{\rmd t} \expe{\norm[2]{\Theta_t}}
&= -2 \expe{\ps{\Theta_t}{\nablaU(\XL_t) - \Ugad (x)}} \\
\nonumber
&= -2 \expe{\ps{\Theta_t}{\nablaU(\XL_t) - \nablaU (\XEL_t)}}
-2 A_1 -2 A_2 \\
\label{eq:derivative-thetat}
&\leq -2\mU \expe{\norm[2]{\Theta_t}} -2 A_1 -2 A_2 \eqsp,
\end{align}
where
\begin{equation}
\label{eq:A1A2}
A_1 = \expe{\ps{\Theta_t}{\nablaU(\XEL_t) - \nablaU (x)}} \eqsp,\quad
A_2 = \expe{\ps{\Theta_t}{\nablaU(x) - \Ugad (x)}} \eqsp.
\end{equation}
Using that $\absolute{\ps{a}{b}} \leq (\mU/4)\norm[2]{a} + \mU^{-1} \norm[2]{b}$ for all $a,b\in\rset^d$,
\begin{equation*}
\absolute{A_1} \leq (m/4) \expe{\norm[2]{\Theta_t}} + m^{-1} \expe{\norm[2]{\nablaU(\XEL_t) - \nablaU (x)}} \eqsp.
\end{equation*}
Similarly to the proof of \Cref{lemma:kullback-leibler}, we have $\expe{\norm[2]{\nablaU(\XEL_t) - \nablaU (x)}} \leq t\polyga[1](\norm{x})$ where $\polyga[1]$ is defined in \eqref{eq:def-polyp1}.
For $A_2$, we have
\begin{equation}
\label{eq:A2}
\absolute{A_2} \leq (m/4) \expe{\norm[2]{\Theta_t}} + m^{-1} \norm[2]{\nablaU(x) - \nabla \Ugad (x)}
\end{equation}
and $\norm[2]{\nablaU(x) - \Ugad (x)} \leq \gastep^2 \polyp[2](\norm{x})$ where $\polyp[2]$ is defined in \eqref{eq:def-polyp2}.
We get for $t\in\coint{0,\gastep}$,
\begin{equation*}
\frac{\rmd}{\rmd t} \expe{\norm[2]{\Theta_t}} \leq -m \expe{\norm[2]{\Theta_t}} + 2m^{-1} \defEns{t \polyga[1](\norm{x}) + \gastep^2 \polyp[2](\norm{x})} \eqsp.
\end{equation*}
Using Gr\"onwall's lemma and $1-\rme^{-s} \leq s$ for all $s\geq 0$, we obtain
\begin{equation*}
\expe{\norm[2]{Y_{\gamma}- \XEL_{\gastep}}} \leq
\rme^{-\mU \gastep} \norm[2]{y - x}
+ \mU^{-1} \gastep^2 \defEns{\polyga[1](\norm{x}) + 2\gastep \polyp[2](\norm{x})} \eqsp.
\end{equation*}
Finally, by \eqref{eq:def-polyp1} and \eqref{eq:def-polyp2}, there exists $C>0$ such that for all $x\in\rset^d$, $\polyga[1](\norm{x}) + 2\gastep \polyp[2](\norm{x}) \leq C \mU \lyape(x)$.
\end{proof}

\begin{lemma}
\label{lem:wasserstein-contraction-gamma3}
Assume \Cref{assum:dr-close-nablaU}, \Cref{assum:ergodicity-drift}, \Cref{assum:convex} and \Cref{assum:hessian}.
Let $\gastep_0>0$.
Define $(\XL_{t})_{t \geq 0}$, $(\XEL_t)_{t\geq 0}$ by \eqref{eq:def_coupling}.
Then there exists $C>0$ such that for all $n \in \nset$ and $\gastep\in\ocint{0,\gastep_0}$, almost surely,
\begin{multline*}
\CPE{\norm[2]{Y_{(n+1)\gamma}- \XEL_{(n+1)\gamma}}}{\filtration_{n\gamma}} \leq
\rme^{-\mU \gastep} \norm[2]{\XL_{n\gastep} - \XEL_{n\gastep}} \\
+ C\gastep^{2+\betah} \lyape(\XEL_{n\gastep})
+ C \gastep^3 \lyape(\XEL_{n\gastep}) \eqsp.
\end{multline*}
\end{lemma}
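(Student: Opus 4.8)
The plan is to follow the scheme of \Cref{lem:wasserstein-contraction-gamma2}, refining the one-step error estimate by means of a first-order Taylor expansion of $\nablaU$ around the current point, which is now licit thanks to \Cref{assum:hessian}. By the Markov property it suffices to treat $n=0$, i.e.\ to bound $\expe{\norm[2]{\XL_\gastep-\XEL_\gastep}}$ when $\XL_0=y$ and $\XEL_0=x$. Set $\Theta_t=\XL_t-\XEL_t$ for $t\in\coint{0,\gastep}$. Since $\XL$ and $\XEL$ are driven by the same Brownian motion with the same diffusion coefficient, the stochastic integrals cancel and $\Theta$ is of finite variation, $\Theta_t=(y-x)-\int_0^t\bracket{\nablaU(\XL_s)-\dr(x)}\rmd s$. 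As in the proof of \Cref{lem:wasserstein-contraction-gamma2}, uniform integrability (using \eqref{eq:estimate_nabla_u_x}, \eqref{eq:estimate-drift-step} and \Cref{lem:moment_diffusion_2_p}) gives $\frac{\rmd}{\rmd t}\expe{\norm[2]{\Theta_t}}=-2\expe{\ps{\Theta_t}{\nablaU(\XL_t)-\dr(x)}}$, and \Cref{assum:convex} yields $\ps{\Theta_t}{\nablaU(\XL_t)-\nablaU(\XEL_t)}\geq\mU\norm[2]{\Theta_t}$. It therefore remains to control $-2\expe{\ps{\Theta_t}{\nablaU(\XEL_t)-\dr(x)}}$; throughout, $\poly(\cdot)$ denotes a polynomial with nonnegative coefficients that may change from line to line.

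Setting $r_1(t):=\nablaU(\XEL_t)-\nablaU(x)-\nablaS\pU(x)(\XEL_t-x)$, I would decompose
\begin{equation*}
\nablaU(\XEL_t)-\dr(x)=r_1(t)+\nablaS\pU(x)(\XEL_t-x)+\bracket{\nablaU(x)-\dr(x)}\eqsp.
\end{equation*}
On $\coint{0,\gastep}$ one has $\XEL_t-x=-t\dr(x)+\sqrt2(B_t-B_0)$, and three of the four resulting contributions are routine. Using \Cref{lem:reg-hessian-U-growth} (the second-order Taylor estimate), the elementary Gaussian bound $\expe{\norm[2q]{B_t-B_0}}\leq\ConstCgene_q(dt)^q$ and \Cref{lem:moment_diffusion_2_p}, one gets $\expe{\norm[2]{r_1(t)}}\leq t^{1+\betah}\poly(\norm{x})$, so by Young's inequality $\abs{\expe{\ps{\Theta_t}{r_1(t)}}}\leq\varepsilon\expe{\norm[2]{\Theta_t}}+C_\varepsilon\,t^{1+\betah}\poly(\norm{x})$, whose integral over $\coint{0,\gastep}$ is $O(\gastep^{2+\betah})\poly(\norm{x})$; by \Cref{assum:dr-close-nablaU}, $\norm[2]{\nablaU(x)-\dr(x)}\leq\gastep^2\polyp[2](\norm{x})$, contributing $O(\gastep^3)$; and the deterministic part $-t\,\nablaS\pU(x)\dr(x)$ of the Hessian term has size $O(t)\poly(\norm{x})$, again contributing $O(\gastep^3)$ after integration.

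The delicate term is the remaining Brownian piece $-2\sqrt2\,\expe{\ps{\nablaS\pU(x)\Theta_t}{B_t-B_0}}$, and here I would exploit the finite-variation structure of $\Theta$. Inserting $\Theta_t=(y-x)-\int_0^t\bracket{\nablaU(\XL_s)-\dr(x)}\rmd s$ and using Fubini together with $\expe{\ps{\nablaS\pU(x)(y-x)}{B_t-B_0}}=0$ gives
\begin{equation*}
\expe{\ps{\nablaS\pU(x)\Theta_t}{B_t-B_0}}=-\int_0^t\expe{\ps{\nablaS\pU(x)\bracket{\nablaU(\XL_s)-\dr(x)}}{B_t-B_0}}\rmd s\eqsp.
\end{equation*}
For fixed $s\leq t$, split $B_t-B_0=(B_s-B_0)+(B_t-B_s)$: the increment $B_t-B_s$ is centered and independent of $\filtration_s$ while $\nablaS\pU(x)\bracket{\nablaU(\XL_s)-\dr(x)}$ is $\filtration_s$-measurable, so only $B_s-B_0$ survives; and since $\dr(x)$ and $\nablaU(y)$ are $\filtration_0$-measurable, the integrand equals $\expe{\ps{\nablaS\pU(x)\bracket{\nablaU(\XL_s)-\nablaU(y)}}{B_s-B_0}}$. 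Bounding this by Cauchy--Schwarz and using \Cref{assum:grad-loc-lipschitz} together with \Cref{item:moment_diffusion_3_p,item:moment_diffusion_4_p} applied to the diffusion started at $y$ to obtain $\expe{\norm[2]{\nablaU(\XL_s)-\nablaU(y)}}\leq s\,\poly(\norm{y})$, the integrand is $O(s)$, hence $\abs{\expe{\ps{\nablaS\pU(x)\Theta_t}{B_t-B_0}}}\leq t^2\,\poly(\norm{x})\,\poly(\norm{y})$, which contributes $O(\gastep^3)$ after integration in $t$.

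Collecting the four estimates, taking the Young parameters small compared with $\mU$, and applying Gr\"onwall's lemma with $1-\rme^{-s}\leq s$ yields $\expe{\norm[2]{\XL_\gastep-\XEL_\gastep}}\leq\rme^{-\mU\gastep}\norm[2]{y-x}+C\bracket{\gastep^{2+\betah}+\gastep^3}\poly(\norm{x})\poly(\norm{y})$, from which one concludes as in \Cref{lem:wasserstein-contraction-gamma2}: every polynomial in $\norm{x}$ appearing here is dominated by a constant multiple of $\lyape(x)$, while the polynomial moments of $\XL_{\cdot}$ are controlled through \Cref{lem:moment_diffusion_2_p}. I expect the Brownian-increment term of the previous paragraph to be the main obstacle: the entire improvement of the exponent from $\gastep^2$ to $\gastep^{2+\betah}$ (together with a harmless $\gastep^3$) hinges on noticing that $\Theta$ carries no stochastic integral and on extracting the extra factor of $t$ from the centering of the Brownian increments, which is precisely the role of the auxiliary moment bounds \Cref{lem:moment_diffusion_2_p,item:moment_diffusion_3_p,item:moment_diffusion_4_p}.
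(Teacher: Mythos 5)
Your proof is correct and follows essentially the paper's approach: you use the same Taylor decomposition of $\nablaU(\XEL_t)-\dr(x)$ into a remainder $r_1(t)$, a Hessian term split further into its deterministic drift part $-t\nablaS\pU(x)\dr(x)$ and its Brownian part $\sqrt2\,\nablaS\pU(x)(B_t-B_0)$, and the tamed-vs-true-drift error $\nablaU(x)-\dr(x)$ (matching the paper's $A_{11}$, $A_{121}$, $A_{122}$, $A_2$), the same Young/Gr\"onwall bookkeeping, and — crucially — the same insight that the Brownian piece must be handled by exploiting the finite-variation structure of $\Theta_t$. The only real variation is your treatment of that Brownian piece: you apply Fubini, split $B_t-B_0=(B_s-B_0)+(B_t-B_s)$, and use independence of the increment $B_t-B_s$ from $\filtration_s$ to kill one half and centering to replace $\dr(x)$ by $\nablaU(y)$ in the other, then Cauchy--Schwarz at the level of the integrand, whereas the paper substitutes $\nablaU(y)$ once and applies a single global Cauchy--Schwarz to $\expe{\ps{\int_0^t\{\nablaU(\XL_s)-\nablaU(y)\}\rmd s}{\nablaS\pU(x)B_t}}$; both routes yield the decisive extra factor of $t$ and hence the same $O(t^2)$ estimate, and yours makes the role of the martingale structure a bit more explicit at the cost of an extra Fubini step. (A minor citation slip: in bounding $r_1(t)$ you invoke \Cref{lem:moment_diffusion_2_p}, but the moments you need there are those of $\XEL_t-x=-t\dr(x)+\sqrt2(B_t-B_0)$, which are explicit Gaussian ones as in \eqref{eq:temp-A11}--\eqref{eq:temp-A12}; the diffusion-moment lemmas enter only in the $A_{122}$ bound, where the resulting polynomial naturally depends on $\norm{y}$ — exactly as in the paper's own proof.)
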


\begin{remark}
The calculations in the proof show that the dependence \wrt~$\XEL_{n\gastep}$ and $\XL_{n\gastep}$ is in fact polynomial but their exact expressions are very involved. For the sake of simplicity, we bound these polynomials by $\lyape$. The same remark applies equally to \Cref{lem:wasserstein-contraction-gamma2}.
\end{remark}

\begin{proof}
Note first that by \Cref{lem:reg-hessian-U-growth}-\ref{item:reg-hessian-U-growth-1}, \Cref{assum:hessian} implies \Cref{assum:grad-loc-lipschitz} with $\LG=\CH$ and $\ell = \nu+\betah$.
By the Markov property, we only need to show the result for $n=0$.
The proof is a refinement of \Cref{lem:wasserstein-contraction-gamma2} and we use the same notations.
We have to improve the bound on $A_1$ defined in \eqref{eq:A1A2}.
We decompose $A_1 = A_{11}+A_{12}$ where
\begin{align*}
A_{11} &= \expe{\ps{\Theta_t}{\nablaU(\XEL_t) - \nablaU(x) - \nablaS \pU(x)(\XEL_t -x)}} \eqsp,\eqsp \\
A_{12} &= \expe{\ps{\Theta_t}{\nablaS \pU(x)(\XEL_t -x)}} \eqsp.
\end{align*}
Using $\absolute{\ps{a}{b}} \leq (\mU/6)\norm[2]{a} + \{3/(2m)\} \norm[2]{b}$ for all $a,b\in\rset^d$,
\begin{equation}
\label{eq:A11}
\absolute{A_{11}} \leq \frac{m}{6} \expe{\norm[2]{\Theta_t}} + \frac{3}{2m} \expe{\norm[2]{\nablaU(\XEL_t) - \nablaU(x) - \nablaS \pU(x)(\XEL_t -x)}} \eqsp.
\end{equation}
By \Cref{lem:reg-hessian-U-growth}-\ref{item:reg-hessian-U-growth-2},
\begin{multline*}
\norm[2]{\nablaU(\XEL_t) - \nablaU(x) - \nablaS \pU(x)(\XEL_t -x)} \\
\leq \frac{4 \LH^2}{(1+\betah)^2} \parenthese{1+\norm[\nue]{x}+\norm[\nue]{\XEL_t}}^2\norm[2(1+\betah)]{\XEL_t-x} \eqsp.
\end{multline*}
Following the proof of \Cref{lemma:kullback-leibler}, using \eqref{eq:temp-A11} and \eqref{eq:temp-A12}, we have
\begin{equation}
\label{eq:bound-taylor-order2}
\expe{\norm[2]{\nablaU(\XEL_t) - \nablaU(x) - \nablaS \pU(x)(\XEL_t -x)}} \leq t^{1+\betah} \polyga[4](\norm{x}) \eqsp.
\end{equation}
where $\polyga[4]:\rset_+ \to \rset_+$ is defined for all $s\in\rset_+$ by,
\begin{align}
\nonumber
&\polyga[4](s) = \frac{4\LH^2}{(1+\betah)^2} \int_{\rset^d}
\left[
\sqrt{2}\norm{z} + \sqrt{\gastep}\defEns{2\LG(1+s^{\ell+1}) + \gastep\Cal(1+s^\alphag)}
\right]^{2(1+\betah)} \\
 \label{eq:def-polyga4}
&\times
\left[
1+ s^\nue + \defEns{s + \gastep\parenthese{2\LG(1+s^{\ell+1}) + \gastep\Cal(1+s^\alphag)} + \sqrt{2\gastep}\norm{z}}^{\nue}
\right]^2
\frac{\rme^{-\norm[2]{z}/2}}{(2\uppi)^{d/2}}
  \rmd z \eqsp.
\end{align}
We decompose $A_{12}$ in $A_{12} = A_{121}+A_{122}$ where
\begin{equation*}
A_{121} = \expe{\ps{\Theta_t}{-t \nablaS \pU(x) \Ugad(x)}} \eqsp,\eqsp
A_{122} = \sqrt{2} \expe{\ps{\Theta_t}{\nablaS \pU(x) B_t}} \eqsp.
\end{equation*}
Define $\polyga[5]:\rset_+\to\rset_+$ for $s\in\rset_+$ by,
\begin{equation}
\label{eq:def-polyga5}
\polyga[5](s) = \CH^2 \parenthese{1+s^{\nue+\betah}}^2
\defEns{2\LG(1+s^{\ell+1}) + \gastep\Cal (1+s^{\alphag})}^2 \eqsp.
\end{equation}
By \Cref{lem:reg-hessian-U-growth}-\ref{item:reg-hessian-U-growth-1} and \eqref{eq:estimate-drift-step},
\begin{equation}
\label{eq:A121}
\absolute{A_{121}} \leq (m/6) \expe{\norm[2]{\Theta_t}} + \{3/(2m)\} t^2 \polyga[5](\norm{x}) \eqsp.
\end{equation}
By Cauchy-Schwarz inequality and \Cref{lem:reg-hessian-U-growth}-\ref{item:reg-hessian-U-growth-1},
\begin{align}
\nonumber
\absolute{A_{122}} &= \sqrt{2} \absolute{\expe{\ps{\int_0^t \defEns{\nablaU(\XL_s) - \nablaU(y)}\rmd s}{\nablaS \pU(x) B_t}}} \\
\label{eq:A122-1}
&\leq \sqrt{2dt}\CH(1+\norm[\nue+\betah]{x})
\expe{\norm[2]{\int_0^t \defEns{\nablaU(\XL_s) - \nablaU(y)}\rmd s}}^{1/2} \eqsp.
\end{align}
By \Cref{assum:grad-loc-lipschitz}, Cauchy-Schwarz inequality and using $(1+\norm[\ell]{y} + \norm[\ell]{\XL_s})^2 \leq 3 (2+\norm[2\ell]{y} + \norm[2\ceil{\ell}]{\XL_s})$ for $s\in\coint{0,\gastep}$, we have
\begin{multline*}
\expe{\norm[2]{\int_0^t \defEns{\nablaU(\XL_s) - \nablaU(y)}\rmd s}} \leq
3t\LG^2 (2 + \norm[2\ell]{y}) \int_{0}^{t} \expe{\norm[2]{\XL_s - y}} \rmd s  \\
+ 3t\LG^2 \int_{0}^{t} \expe{\norm[2\ceil{\ell}]{\XL_s}\norm[2]{\XL_s - y}} \rmd s \eqsp.
\end{multline*}
By \Cref{item:moment_diffusion_3_p,item:moment_diffusion_4_p}, we get
\begin{equation*}
\expe{\norm[2]{\int_0^t \defEns{\nablaU(\XL_s) - \nablaU(y)}\rmd s}} \leq
\frac{3 t^3 \LG^2}{2} \defEns{\parenthese{2+\norm[2\ell]{y}} \polyga[3](\norm{y})+\polygaq[\ceil{\ell}](\norm{y})} \eqsp,
\end{equation*}
where $\polyga[3], \polygaq[\ceil{\ell}] \in\Cpoly(\rset_+,\rset_+)$ are defined in \eqref{eq:def-polyga3} and \eqref{eq:polygaq}.
Plugging this result in \eqref{eq:A122-1}, we obtain
\begin{equation}
\label{eq:A122}
\absolute{A_{122}} \leq t^2 \sqrt{3d} \CH \LG \parenthese{1+\norm[\nue+\betah]{x}} \defEns{\parenthese{2+\norm[2\ell]{y}} \polyga[3](\norm{y})+\polygaq[\ceil{\ell}](\norm{y})}^{1/2} \eqsp.
\end{equation}
Combining \eqref{eq:A11}, \eqref{eq:bound-taylor-order2}, \eqref{eq:A121} and \eqref{eq:A122}, we get
\begin{align*}
\absolute{A_1} &\leq (\mU/3) \expe{\norm[2]{\Theta_t}} + \{3/(2\mU)\} \defEns{t^{1+\betah} \polyga[4](\norm{x}) + t^2 \polyga[5](\norm{x})} \\
&\phantom{--}+ t^2 \sqrt{3d} \CH \LG \parenthese{1+\norm[\nue+\betah]{x}} \defEns{\parenthese{2+\norm[2\ell]{y}} \polyga[3](\norm{y})+\polygaq[\ceil{\ell}](\norm{y})}^{1/2} \eqsp,
\end{align*}
and by \eqref{eq:A2},
$\absolute{A_2} \leq (m/6) \expe{\norm[2]{\Theta_t}} + \{3/(2\mU)\} \gastep^2 \polyp[2](\norm{x})$,
where $\polyp[2]\in\Cpoly(\rset_+,\rset_+)$ is defined in \eqref{eq:def-polyp2}.
 Combining these inequalities in \eqref{eq:derivative-thetat}, we get
\begin{multline*}
\frac{\rmd}{\rmd t} \expe{\norm[2]{\Theta_t}} \leq -m \expe{\norm[2]{\Theta_t}}
+ 3m^{-1} \defEns{\gastep^2 \polyp[2](\norm{x}) + t^{1+\betah} \polyga[4](\norm{x}) + t^2 \polyga[5](\norm{x})} \\
+ 2 t^2 \sqrt{3d} \CH \LG \parenthese{1+\norm[\nue+\betah]{x}} \defEns{\parenthese{2+\norm[2\ell]{y}} \polyga[3](\norm{y})+\polygaq[\ceil{\ell}](\norm{y})}^{1/2} \eqsp.
\end{multline*}
Using Gr\"onwall's lemma and $1-\rme^{-s} \leq s$ for all $s\geq 0$, we obtain
\begin{align*}
&\expe{\norm[2]{Y_{\gamma}- \XEL_{\gastep}}} \leq
\rme^{-\mU \gastep} \norm[2]{y - x} \\
&\phantom{--------}+ 3m^{-1} \defEns{\gastep^3 \polyp[2](\norm{x}) + \frac{\gastep^{2+\betah}}{2+\betah} \polyga[4](\norm{x}) + \frac{\gastep^3}{3} \polyga[5](\norm{x})} \\
&+ 2 \gastep^3 \sqrt{d/3} \CH \LG \parenthese{1+\norm[\nue+\betah]{x}} \defEns{\parenthese{2+\norm[2\ell]{y}} \polyga[3](\norm{y})+\polygaq[\ceil{\ell}](\norm{y})}^{1/2} \eqsp.
\end{align*}
Finally, by \eqref{eq:def-polyp2}, \eqref{eq:def-polyga3}, \eqref{eq:def-polyga4}, \eqref{eq:def-polyga5} and \eqref{eq:polygaq}, there exists $C>0$ such that for all $x\in\rset^d$ and $\gastep\in\ocint{0,\gastep_0}$,
\begin{align*}
&3m^{-1} \defEns{\gastep^3 \polyp[2](\norm{x}) + \frac{\gastep^{2+\betah}}{2+\betah} \polyga[4](\norm{x}) + \frac{\gastep^3}{3} \polyga[5](\norm{x})} \leq C \gastep^{2+\betah} \lyape(x) \eqsp, \\
&2 \sqrt{d/3} \CH \LG \parenthese{1+\norm[\nue+\betah]{x}} \leq C^{1/2} \lyape(x)^{1/2} \eqsp, \\
&\parenthese{2+\norm[2\ell]{x}} \polyga[3](\norm{x})+\polygaq[\ceil{\ell}](\norm{x}) \leq C \lyape(x) \eqsp.
\end{align*}
\end{proof}

\begin{proof}[Proof of \Cref{thm:wasserstein-gamma2}]
Let $\gastep\in\ocint{0,\gastep_0}$.
Define $(\XL_{t})_{t \geq 0}$, $(\XEL_t)_{t\geq 0}$ by \eqref{eq:def_coupling} and $X_n = \XEL_{n\gastep}$ for $n\in\nset$.
By \Cref{lem:wasserstein-contraction-gamma2} and \Cref{propo:drift_tamed_euler}, we have for all $n \in \nset$,
\begin{align}
\nonumber
&\expe{\norm[2]{Y_{n\gamma}- X_{n}}} \leq
\rme^{- n\mU \gastep} \norm[2]{y - x}
+ C \gastep^2 \sum_{k=0}^{n-1} \rme^{-\mU\gastep(n-1-k)}\expe{\lyape(\XE_k)} \\
\label{eq:iter-gamma2-wasserstein}
&\leq
\rme^{- n\mU \gastep} \norm[2]{y - x}
+  \frac{C\gastep^2}{1-\rme^{-\mU\gastep}}\frac{\be}{\ae^2}\rme^{\ae^2 \gastep}
+ C\gastep^2 \lyape(x) \sum_{k=0}^{n-1} \rme^{-\mU\gastep(n-1-k)} \rme^{-\ae^2 \gastep k} \eqsp.
\end{align}
Note that
\begin{equation*}
\sum_{k=0}^{n-1} \rme^{-\mU\gastep(n-1-k)} \rme^{-\ae^2 \gastep k}
\leq \frac{n}{1-\max(\rme^{-\mU}, \rme^{-\ae^2})^{\gastep}} \eqsp
\end{equation*}
and $1-s^\gastep \geq -\gastep \log(s) \rme^{\gastep\log(s)}$ for $s\in\ooint{0,1}$.
In eq.~\eqref{eq:iter-gamma2-wasserstein}, integrating $y$ with respect to $\invpi$, for all $n\in\nset$, $(\XL_{n\gastep}, \XE_n)$ is a coupling between $\invpi$ and $\delta_x \rker^n$. By \Cref{theo:wasser_p_diffusion}, we get \eqref{eq:thm-wasserstein2-1}.
By \Cref{propo:drift_tamed_euler} and \cite[Corollary 6.11]{VillaniTransport}, we have for all $x\in\rset^d$, $\lim_{n\to\plusinfty} W_2(\delta_x \rker^n, \invpi) = W_2(\invpig, \invpi)$ and we obtain \eqref{eq:thm-wasserstein2-2}.
\end{proof}

\begin{proof}[Proof of \Cref{thm:wasserstein-gamma3}]
Let $\gastep\in\ocint{0,\gastep_0}$.
Define $(\XL_{t})_{t \geq 0}$, $(\XEL_t)_{t\geq 0}$ by \eqref{eq:def_coupling} and $X_n = \XEL_{n\gastep}$ for $n\in\nset$.
By \Cref{lem:wasserstein-contraction-gamma3}, we have for all $n \in \nset$,
\begin{equation*}
\expe{\norm[2]{Y_{n\gamma}- X_{n}}} \leq
\rme^{- n\mU \gastep} \norm[2]{y - x} + A_n + B_n \eqsp,
\end{equation*}
where
\begin{align*}
A_n &= C \gastep^{2+\betah} \sum_{k=0}^{n-1} \rme^{-\mU \gastep (n-1-k)} \expe{\lyape(\XE_k)} \eqsp, \\
B_n &= C \gastep^{3} \sum_{k=0}^{n-1} \rme^{-\mU \gastep (n-1-k)} \expe{\lyape(\XL_{k\gastep})} \eqsp.
\end{align*}
Analysis similar to the proof of \Cref{thm:wasserstein-gamma2} using \Cref{prop:existence_ergodicity} instead of \Cref{propo:drift_tamed_euler} for $B_n$ shows then the result.
\end{proof}

\subsection{Proof of \Cref{prop:bias-MSE-poisson-equation}}
\label{sec:proof-prop:bias-MSE-poisson-equation}

We first state a lemma on the existence and regularity of a solution of the Poisson equation \eqref{eq:eq-Poisson} which is adapted from \cite[Theorem 1]{pardoux2001}.

\begin{lemma}
\label{prop:existence-sol-Poisson}
Assume \Cref{assum:ergodicity-U} and \Cref{assumption:U-C4}. Let $\tf\in\Csetfunction^3(\rset^d,\rset)$ be such that $\norm{\DD^i \tf} \in \Cpoly(\rset^d, \rset_+)$ for $i\in\defEns{0,\ldots,3}$.
Then, there exists a solution of the Poisson equation \eqref{eq:eq-Poisson} $\sPoi\in\Csetfunction^4(\rset^d,\rset)$, such that $\norm{\DD^i \sPoi} \in\Cpoly(\rset^d,\rset_+)$  for $i\in\defEns{0,\ldots,4}$.
\end{lemma}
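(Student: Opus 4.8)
The plan is to construct $\sPoi$ as the resolvent of the diffusion applied to the centred observable and then bootstrap its regularity, following the scheme of \cite[Theorem 1]{pardoux2001}; the bulk of the work consists in checking that our assumptions imply the hypotheses used there. First note that $\invpi(\tf)$ is well defined: by \Cref{prop:existence_ergodicity}, $\invpi(\lyapa) < \plusinfty$ for every $a > 0$, so $\invpi$ integrates every polynomial and in particular $\tf$. Set $\bar\tf = \tf - \invpi(\tf)$ and define, for $x \in \rset^d$,
\[
\sPoi(x) = \int_0^{+\infty} \sgP_t \bar\tf(x)\, \rmd t = \int_0^{+\infty} \defEns{\sgP_t \tf(x) - \invpi(\tf)}\, \rmd t \eqsp.
\]

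First I would establish that this integral converges and defines a function of polynomial growth. The diffusion coefficient of \eqref{eq:langevin} is the constant $\sqrt 2\, \Id$, hence uniformly elliptic, and \Cref{assumption:U-C4} gives that $b = -\nabla U \in \Csetfunction^3(\rset^d,\rset^d)$ with $\norm{\DD^i b} \in \Cpoly$ for $i \in \{0,\ldots,3\}$ (so $U$ itself has polynomial growth). The recurrence condition required in \cite{pardoux2001}, namely $\ps{-\nabla U(x)}{x} \le -\rr \norm{x}$ for $\norm{x}$ large, follows from \Cref{assum:ergodicity-U}: by \ref{item:ergodicity-U-2} there are $\MM, \kappaU > 0$ with $\ps{\nabla U(x)}{x} \ge \kappaU \norm{x}\norm{\nabla U(x)}$ for $\norm{x}\ge \MM$, and by \ref{item:ergodicity-U-1}, $\norm{\nabla U(x)} \ge \kappaU^{-1}$ for $\norm{x}$ large enough, so $\ps{\nabla U(x)}{x} \ge \norm{x}$ eventually. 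Plugging this into $\generator(\norm[2m]{\cdot})(x) = -2m\norm[2m-2]{x}\ps{\nabla U(x)}{x} + 2m(d + 2(m-1))\norm[2m-2]{x}$ yields a drift inequality $\generator(\norm[2m]{\cdot}) \le -\ce\norm[2m-1]{\cdot} + \CC$, hence polynomial moment bounds $\sup_{t\ge 0}\expe{\norm[2m]{\XL_t}\mid \XL_0 = x} \le \CC_m(1+\norm[2m]{x})$ for all $m$. Combining these with the $\lyapa$-geometric ergodicity of \Cref{prop:existence_ergodicity} — via the usual argument controlling the return time to a fixed compact set using the recurrence estimate above — gives $|\sgP_t \bar\tf(x)| \le \CC(1+\norm[\q]{x})\rme^{-\lambda t}$ for some $\q \in \nset$ and $\lambda > 0$. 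Therefore the integral converges and $\norm{\DD^0\sPoi} = |\sPoi| \in \Cpoly(\rset^d,\rset_+)$.

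That $\sPoi$ solves \eqref{eq:eq-Poisson} is then standard: the map $(t,x)\mapsto u(t,x) = \sgP_t\bar\tf(x)$ is the classical solution of $\partial_t u = \generator u$, $u(0,\cdot) = \bar\tf$, it is smooth in $x$, and $u(t,\cdot) \to 0$ as $t\to\plusinfty$ by the previous paragraph; integrating $\generator u(t,x) = \partial_t u(t,x)$ over $t \in \rset_+$, with dominated convergence justified by the moment bounds and the exponential decay, gives $\generator\sPoi = -\bar\tf$. For the regularity, I would view \eqref{eq:eq-Poisson} as the linear elliptic equation $\Delta\sPoi - \ps{\nabla U}{\nabla\sPoi} = -\bar\tf$ and apply interior Schauder estimates on unit balls: since the ellipticity constant is $1$ and the $\Csetfunction^{0,\alpha}$-norm of $\nabla U$ on $\boule{x}{2}$ grows polynomially in $\norm{x}$, one gets $\norm{\sPoi}_{\Csetfunction^{2,\alpha}(\boule{x}{1})} \le \polyp(\norm{x})\{\norm{\sPoi}_{\Csetfunction^0(\boule{x}{2})} + \norm{\bar\tf}_{\Csetfunction^{0,\alpha}(\boule{x}{2})}\}$, whose right-hand side is polynomial in $\norm{x}$ by the previous step and \Cref{assumption:U-C4}. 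Iterating this two more times — legitimate because $\nabla U \in \Csetfunction^3 \subset \Csetfunction^{2,\alpha}_{\mathrm{loc}}$ (here $U \in \Csetfunction^4$ is used) and $\bar\tf \in \Csetfunction^3 \subset \Csetfunction^{2,\alpha}_{\mathrm{loc}}$ — gives $\sPoi \in \Csetfunction^4(\rset^d,\rset)$ with $\norm{\DD^i\sPoi} \in \Cpoly$ for $i \in \{0,\ldots,4\}$.

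The step I expect to be the main obstacle is the second paragraph: the natural Lyapunov function in this paper is the exponential $\lyapa$, whereas \Cref{prop:bias-MSE-poisson-equation} needs $\sPoi$ and its derivatives to have only polynomial growth, so one must separately derive the polynomial moment bounds for $(\XL_t)_{t\ge 0}$ and an ergodicity estimate with polynomial weight, crucially exploiting the recurrence form of \Cref{assum:ergodicity-U}. The alternative, fully probabilistic route — bounding the spatial derivatives $\DD^i(\sgP_t\tf)$ directly through the tangent flow of \eqref{eq:langevin} — is delicate here because $\nabla^2 U$ is only assumed to grow polynomially, so the variation process is not a priori bounded; this is precisely why I would route the derivative estimates through the rescaled Schauder inequalities instead, as above, and otherwise defer to the arguments of \cite{pardoux2001}.
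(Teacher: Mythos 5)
Your overall scaffold matches the paper's: define $\phi(x) = \int_0^\infty P_t \bar f(x)\,\rmd t$ using polynomial decay from a Foster--Lyapunov drift of the form in \eqref{eq:drift-Vp-0}, then obtain $C^4$ regularity and polynomial growth of derivatives from elliptic (Schauder-type) estimates. The two proofs diverge at the step where you claim $\phi$ actually solves $\generator\phi = -\bar f$, and this is where your argument has a genuine gap.

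You write that one can integrate $\partial_t u = \generator u$ over $t\in\rset_+$ and exchange $\generator$ with the time integral ``with dominated convergence justified by the moment bounds and the exponential decay.'' But the ergodicity estimate only gives decay of $\abs{P_t\bar f(x)}$; interchanging $\generator$ with $\int_0^\infty\cdot\,\rmd t$ requires decay, uniform on compacts, of the \emph{spatial derivatives} $\partial_x^\alpha P_t\bar f(x)$ for $\abs{\alpha}\le 2$, which does not follow from $V$-norm geometric ergodicity without an additional argument (e.g.\ gradient commutation estimates or parabolic interior estimates with explicit $t$-dependence). Since $\nabla^2 U$ has only polynomial growth, the tangent-flow route you yourself flag as delicate is exactly what would be needed. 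This is precisely the obstruction the paper sidesteps: instead of differentiating under the integral, it fixes $x$, solves the Dirichlet problem $\generator\hat\phi = -\bar f$ on $\boule{x}{1}$ with boundary data $\phi$ (which exists in $C^4$ by \cite[Lemma 6.10, Theorem 6.17]{gilbarg2015elliptic}), and then identifies $\phi = \hat\phi$ on $\boulefermee{x}{1/2}$ via the stochastic representation formula of Friedman applied to the exit time of $\boule{x}{1}$, combined with a careful passage to the limit in $\phi_N(\tilde x) = A_N + B_N$. This identification makes $\phi$ locally $C^4$ \emph{before} any Schauder estimate is invoked; the Schauder machinery (via \cite[Problem 6.1(a)]{gilbarg2015elliptic}) is then used only to translate local regularity into polynomial growth of $\norm{\DD^i\phi}$. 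Your iterated Schauder argument cannot substitute for this, because Schauder estimates presuppose you already know $\phi$ is a classical ($C^{2,\alpha}$) solution of the PDE --- exactly the fact in question. A secondary, smaller omission: you assert the Feller property implicitly when calling $P_t\bar f$ smooth, whereas the paper has to prove continuity of $x\mapsto P_t\bar f(x)$ explicitly (truncation plus \cite[Corollary 11.1.5]{stroock2007multidimensional}) since $\bar f$ is unbounded.
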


\begin{proof}
The proof is postponed to \Cref{sec:proof-prop-sol-Poisson}.
\end{proof}

\begin{proof}[Proof of \Cref{prop:bias-MSE-poisson-equation}]
The proof is adapted from \cite[Section 5.1]{doi:10.1137/090770527}
Let $\gastep\in\ocint{0,\gastep_0}$.
In this Section, $C$ is a positive constant which can change from line to line but does not depend on $\gastep$.
For $k\in\nset$, denote by
\[ \inc_{k+1} = X_{k+1} - X_k = -\gastep \dr(X_k) + \sqrt{2\gastep} \nZ_{k+1} \eqsp. \]
By \Cref{assum:ergodicity-U}, \Cref{assumption:U-C4} and \Cref{prop:existence-sol-Poisson}, there exists a solution to the Poisson equation \eqref{eq:eq-Poisson} $\sPoi\in\Csetfunction^4(\rset^d,\rset)$, such that for all $x\in\rset^d$ and $i\in\defEns{0,\ldots,4}$,
\begin{equation}\label{eq:prop-poisson-equation}
  \generator \sPoi(x) = -\parenthese{\tf(x) -  \invpi(\tf)} \quad \text{and} \quad
\norm{\DD^i \sPoi} \in\Cpoly(\rset^d,\rset_+) \eqsp.
\end{equation}
By Taylor's formula, we have for $k\in\nset$,
\begin{align*}
\sPoi(X_{k+1}) &= \sPoi(X_k) + \DD \sPoi(X_k) [\inc_{k+1}] + (1/2)\DD^2 \sPoi(X_k) [\inc_{k+1}, \inc_{k+1}] \\
&\phantom{----------}+ (1/6)\DD^3\sPoi(X_k)[\inc_{k+1},\inc_{k+1},\inc_{k+1}] + \rr_k \eqsp,\\
\rr_k &= (1/6) \int_0^1 (1-s)^3 \DD^4 \sPoi(X_k + s\inc_{k+1}) [\inc_{k+1}, \inc_{k+1},\inc_{k+1},\inc_{k+1}] \rmd s \eqsp.
\end{align*}
Using the expression of $\inc_{k+1}$ and \eqref{eq:generator-langevin}, we get
\begin{align*}
&\sPoi(X_{k+1}) = \sPoi(X_k) + \gaStep \generator \sPoi(X_k) + \sqrt{2\gaStep} \DD \sPoi(X_k) [\nZ_{k+1}] \\
&\phantom{--}+ \gaStep \defEns{\DD^2 \sPoi(X_k) [\nZ_{k+1},\nZ_{k+1}] - \Delta \sPoi(X_k)} + \gaStep \DD \sPoi(X_k)[\nablaU(X_k) - \dr(X_k)] \\
&\phantom{--}+ (\gaStep^2/2) \DD^2 \sPoi(X_k) [\dr(X_k),\dr(X_k)] -\sqrt{2}\gaStep^{3/2} \DD^2 \sPoi(X_k) [\dr(X_k), \nZ_{k+1}] \\
&\phantom{--}+ (1/6)\DD^3 \sPoi(X_k) [\inc_{k+1},\inc_{k+1},\inc_{k+1}] + r_k \eqsp.
\end{align*}
Summing from $k=0$ to $n-1$ for $n\in\nset^\star$, dividing by $n\gaStep$,
we get
\begin{equation*}
\frac{1}{n}\sum_{k=0}^{n-1} (\tf(X_k) - \invpi(\tf)) = \frac{\sPoi(X_0) - \sPoi(X_n)}{n\gaStep} + \frac{1}{n\gaStep}\parenthese{\sum_{i=0}^3 \MM_{i,n} + \sum_{i=0}^3 \SS_{i,n}} \eqsp,
\end{equation*}
where
\begin{align*}
&\MM_{0,n} = ((\sqrt{2}\gaStep^{3/2})/6) \sum_{k=0}^{n-1} \big\{2 \DD^3 \sPoi(X_k) [\nZ_{k+1},\nZ_{k+1},\nZ_{k+1}] \\
&\phantom{----------------}+ 3\gaStep \DD^3 \sPoi(X_k)[\dr(X_k),\dr(X_k), \nZ_{k+1}]
\big\} \eqsp, \\
&\MM_{1,n} = \gaStep \sum_{k=0}^{n-1} (\DD^2 \sPoi(X_k) [\nZ_{k+1},\nZ_{k+1}] - \Delta \sPoi(X_k) ) \eqsp, \\
&\MM_{2,n} = \sqrt{2\gaStep} \sum_{k=0}^{n-1} \DD \sPoi(X_k) [\nZ_{k+1}] \eqsp,\\
&\MM_{3,n} = -\sqrt{2} \gaStep^{3/2} \sum_{k=0}^{n-1} \DD^2 \sPoi(X_k) [\dr(X_k), \nZ_{k+1}] \eqsp,
\end{align*}
and
\begin{align*}
&\SS_{0,n} = - (\gaStep^2/6) \sum_{k=0}^{n-1} \big\{6 \DD^3 \sPoi(X_k) [\dr(X_k), \nZ_{k+1},\nZ_{k+1}] \\
&\phantom{----------------}+ \gastep \DD^3 \sPoi(X_k) [\dr(X_k),\dr(X_k),\dr(X_k)]
\big\} \eqsp, \\
&\SS_{1,n} = \gaStep \sum_{k=0}^{n-1} \DD \sPoi(X_k)[\nablaU(X_k) - \dr(X_k)] \eqsp, \\
&\SS_{2,n} = (\gaStep^2/2) \sum_{k=0}^{n-1} \DD^2 \sPoi(X_k) [\dr(X_k),\dr(X_k)] \eqsp, \\
&\SS_{3,n} = \sum_{k=0}^{n-1} \rr_k \eqsp.
\end{align*}
By \Cref{assum:dr-close-nablaU}, we calculate for $n\in\nset^*$, $\absolute{\SS_{1,n}} \leq \gastep^2 \Cal \sum_{k=0}^{n-1} \norm{\DD \sPoi(X_k)}(1+\norm[\alphag]{\XE_k})$. By \Cref{assumption:U-C4}, \eqref{eq:estimate-drift-step} and \eqref{eq:prop-poisson-equation}, there exist $p,q \geq 1$ and $C_q >0$ such that the summands of $(\MM_{i,n})_{n\in\nset}$ and $(\SS_{i,n})_{n\in\nset}$ for $i\in\defEns{0,\ldots,3}$ are dominated by $C_q\parenthese{1+\norm[q]{X_k}}(1+\norm[p]{\nZ_{k+1}})$ for $k\in\defEns{0,\ldots,n-1}$. Therefore, by \Cref{prop:existence_ergodicity}, for $i\in\defEns{0,\ldots,3}$, $(\MM_{i,n})_{n\in\nset}$ are martingales and for $n\in\nset^*$, $\expe{\SS_{i,n}^2} \leq C n^2 \gastep^4$,
\begin{equation*}
\expe{\MM_{0,n}^2} \leq C n \gastep^3 \eqsp,\eqsp
\expe{\MM_{1,n}^2} \leq C n \gastep^2 \eqsp,\eqsp
\expe{\MM_{2,n}^2} \leq C n \gastep \eqsp,\eqsp
\expe{\MM_{3,n}^2} \leq C n \gastep^3 \eqsp,
\end{equation*}
which yield the result.
\end{proof}


%

\def\sectionautorefname{Section}
\def\subsectionautorefname{Section}
\def\subsubsectionautorefname{Section}
\def\corollaryautorefname{Corollary}

\section*{Acknowledgements}

This work was supported by the \'Ecole Polytechnique Data Science Initiative and the Alan Turing Institute under the EPSRC grant EP/N510129/1.	

\appendix

\section{Proof of \Cref{lem:moment_diffusion_2_p}}
\label{appendix:proof-lem-moment}

  By \Cref{assum:convex}, \eqref{eq:langevin} has a unique strong
  solution $(\XL_t)_{t\geq 0}$ for any initial data $Y_0=x \in \rset^d$.
  Define for $p \in\nset^*$, $V_p:\rset^d \to \rset_+$ by $\VStar_p(y) =
  \norm[2p]{y}$ for $y\in\rset^d$. We have using \Cref{assum:convex},
\begin{align}
  \label{eq:drift-Vp-0}
  \generator V_p (x) &= -2p \norm[2(p-1)]{x}\ps{\nabla U(x)}{x} + 2p(d + 2(p-1))  \norm[2(p-1)]{x}\\
  \label{eq:drift-Vp}
&\leq -2p m \norm[2p]{x} + 2p\norm[2(p-1)]{x}(d + 2(p-1)) \eqsp.
\end{align}
Applying \cite[Theorem 1.1]{meyn:tweedie:1993:III} with $V(x,t)=V_p(x)\rme^{2pm t}$, $g_{-}(t) = 0$ and $g_{+}(x,t) = 2p(d+2(p-1)) V_{p-1}(x) \rme^{2pm t}$ for $x\in\rset^d$ and $t\geq 0$, we get denoting by $v_p(t,x) = P_t V_p(x)$,
\begin{equation*}
  v_p(t,x)
 \leq \rme^{-2pmt} V_p(x) +  2p(d + 2(p-1)) \int_0^t \rme^{-2pm(t-s)} v_{p-1}(s,x) \rmd s \eqsp.
\end{equation*}
A straightforward induction concludes the proof.

\section{Proof of \Cref{theo:wasser_p_diffusion}}
\label{appendix:proof-theo-wasser}

By \Cref{eq:drift-Vp} and \cite[Theorem 2.2]{roberts:tweedie:1996}, $(\XL_t)_{t \geq 0}$ the solution of
\eqref{eq:langevin} is $V_p$-geometrically ergodic \wrt~$\invpi$. Taking the limit $t\to\plusinfty$ in \Cref{lem:moment_diffusion_2_p} concludes the proof.

\section{Proof of \Cref{item:moment_diffusion_3_p}}
\label{appendix:proof-item-moment}

Define $\tildeVx:\rset^d\to\rset_+$ for all $y\in\rset^d$ by $\tilde{V}_x(y) =
\norm[2]{y-x}$.
By \Cref{lem:moment_diffusion_2_p}, the process
$(\tilde{V}_x(\XL_t^{}) -\tilde{V}_x(x) - \int_0 ^t \generator
\tilde{V}_x(Y^{}_s) \rmd s)_{ t \geq 0}$, is a $(\mathcal{F}_t)_{ t
  \geq 0}$-martingale. Denote for all $t \geq 0$ and $y \in \rset^d$ by
$\tilde{v}(t,x) = P_t \tilde{V}_x(x)$. Then we get,
\begin{equation}
\label{eq:Dynkin0_2}
\frac{\partial \tilde{v}(t,x)}{\partial t} = P_t \generator \tilde{V}_x(x) \eqsp.
\end{equation}
By \Cref{assum:convex}, we have for all $y \in \rset^d$,
\begin{equation}
\label{eq:generator_2}
\generator \tilde{V}_x(y) = 2 \parenthese{ - \ps{\nabla U(y)}{y-x} +d }
\leq 2  \parenthese{ -m \tilde{V}_x(y) +d - \ps{\nabla U(x) } {y-x}} \eqsp.
\end{equation}
Using \eqref{eq:Dynkin0_2}, this inequality and that $\tilde{V}_x$ is nonnegative, we get
\begin{equation}
\label{eq:preparation_Grown}
\frac{\partial \tilde{v}(t,x)}{\partial t} = P_t \generator \tilde{V}_x(x)   \leq 2\parenthese{ d -
  \int_{ \rset^d}  \ps{\nabla U(x) } {y-x} P_t( x , \rmd y)}  \eqsp.
\end{equation}
Using \eqref{eq:estimate_nabla_u_x} and \eqref{eq:langevin}, 
we have
\begin{align}
\nonumber
\abs{\expeMarkov{x}{ \ps{\nabla U(x) } {\XL_t^{}-x}}} &\leq \norm{\nabla U(x) } \norm{\expeMarkov{x}{\XL_t^{}-x}} \\
\nonumber
&\leq \norm{\nabla U(x)} \norm{\expeMarkov{x}{\int_0^t \left\{ \nabla U(\XL_s^{})  \right\} \rmd s}} \\
\label{lem:moment_short_time_proof_1}
&  \leq  2\LG\defEns{1+\norm[\ell+1]{x}} \int_0^t\PE_x\left[ \norm{\nabla U(\XL_s^{})  }\right]\rmd s   \eqsp.
\end{align}
Using \eqref{eq:estimate_nabla_u_x} again,
\begin{align}
\nonumber
  \int_0^t\PE_x\left[ \norm{\nabla U(\XL_s^{}) }\right]\rmd s &\leq   2\LG \int_0^t\PE\left[ 1+\norm[\ell+ 1]{\XL_s}   \right]\rmd s \\
  \label{lem:moment_short_time_proof_2}
&\leq 2\LG \defEns{2t  + \int_0^t\PE\left[ \norm[2\N]{\XL_s}
    \right] \rmd s } \eqsp.
\end{align}
Furthermore using that for all $s \geq 0 $, $1-\rme^{-s} \leq s$, $s+\rme^{-s}-1 \leq  s^2/2$, and \Cref{lem:moment_diffusion_2_p} we get
\begin{align*}
  \int_0^t\PE_x\left[ \norm[2\N ]{\XL_s} \right] \rmd s & \leq a_{0,\N}  \frac{2\N tm + \rme^{-2\N mt}-1}{2\N m}  +\sum_{k=1}^\N a_{k,\N} \norm[2k]{x} \frac{1-\rme^{-2mkt}}{2km}
\\
&\leq t^2 \N m a_{0,\N} +  t \sum_{k=1}^\N a_{k,\N} \norm[2k]{x}   \eqsp.
\end{align*}
Plugging this inequality in \eqref{lem:moment_short_time_proof_2} and \eqref{lem:moment_short_time_proof_1}, we get
\begin{equation}
\label{lem:moment_short_time_proof_4}
  \abs{\expeMarkov{x}{ \ps{\nabla U(x) } {\XL_t^{}-x}}}
\leq 4\LG^2(1+\norm[\ell+1]{x}) \defEns{2t+\N \mU a_{0,\N} t^2 + t \sum_{k=1}^\N a_{k,\N} \norm[2k]{x}}  \eqsp.
\end{equation}
Using this bound in \eqref{eq:preparation_Grown} and integrating the inequality gives
\begin{equation}
\label{eq:details-Yt-x-diffusion}
\tilde{v}(t,x) \leq   2d t + 8\LG^2(1+\norm[\ell+1]{x}) \defEns{t^2+\N \mU a_{0,\N}(t^3/3) + (t^2/2)\sum_{k=1}^\N a_{k,\N}\norm[2k]{x}} \eqsp.
  \end{equation}

\section{Proof of \Cref{item:moment_diffusion_4_p}}
\label{appendix:proof-item-diffusion}

We show the result by induction on $p$.
The case $p=0$ follows from \eqref{eq:details-Yt-x-diffusion}. Suppose $p\geq 1$.
Define for $y\in\rset^d$, $W_{x,p}:\rset^d\to\rset_+$ by $W_{x,p}(y) = \norm[2p]{y}\norm[2]{y-x}$. We have
\begin{multline*}
\generator W_{x,p}(y) = -2\norm[2p]{y} \ps{\nablaU(y)}{y-x} - (2p) \norm[2(p-1)]{y} \norm[2]{y-x} \ps{\nablaU(y)}{y} \\
+2 \norm[2(p-1)]{y} \defEns{d\norm[2]{y} + 4p\ps{y}{y-x} + p(d+2p-2)\norm[2]{y-x}} \eqsp.
\end{multline*}
By \Cref{assum:convex}, \eqref{eq:estimate_nabla_u_x} and using $\absolute{\ps{a}{b}} \leq \eta \norm[2]{a} + (4\eta)^{-1}\norm[2]{b}$ for all $\eta>0$, we have
\begin{align}
\nonumber
&\generator W_{x,p}(y) \leq  \frac{\norm[2p]{y}\norm[2]{\nablaU(x)}}{2m(p+1)}
+ 2\norm[2(p-1)]{y} \defEns{(d+4)\norm[2]{y} + p(d+3p-2)\norm[2]{y-x}} \\
\label{eq:generator-Wxp}
&\leq \norm[2p]{y} \defEns{2(d+4) + \frac{2\LG^2(1+\norm[\ell+1]{x})^2}{m(p+1)}}
+2p(d+3p-2)\norm[2]{y-x} \norm[2(p-1)]{y} \eqsp.
\end{align}
By \Cref{lem:moment_diffusion_2_p}, the process
$(W_{x,p}(\XL_t^{}) -W_{x,p}(x) - \int_0 ^t \generator
W_{x,p}(Y^{}_s) \rmd s)_{ t \geq 0}$ is a $(\mathcal{F}_t)_{ t
  \geq 0}$-martingale.
For $x\in\rset^d$ and $t\geq 0$, denote by $w_{x,p}(x,t)= P_t W_{x,p}(x)$ and $v_p(x,t) = \expeMarkov{x}{\norm[2p]{\XL_t}}$.
Taking the expectation of \eqref{eq:generator-Wxp} \wrt~$\delta_x \sgP_t$ and integrating \wrt~$t$, we get
\begin{align*}
&w_{x,p}(t,x) \leq 2 \defEns{d+4 + \frac{\LG^2(1+\norm[\ell+1]{x})^2}{m(p+1)}} \int_0^t v_p(s,x) \rmd s \\
&\phantom{------------}+ 2p(d+3p-2) \int_0^t w_{x,p-1}(s,x) \rmd s \eqsp.
\end{align*}
By \Cref{lem:moment_diffusion_2_p}, $v_p(t,x) \leq 2pm a_{0,p} t + \sum_{k=1}^p a_{k,p} \norm[2k]{x}$. A straightforward induction concludes the proof. 

%
%

\section{Proof of \Cref{prop:existence-sol-Poisson}}
\label{sec:proof-prop-sol-Poisson}

The proof is adapted from \cite[Theorem 1]{pardoux2001} and follows the same steps. Define $\tfb = \tf - \invpi(\tf)$. Note that \Cref{assumption:U-C4} implies \Cref{assum:grad-loc-lipschitz}. By \Cref{assum:ergodicity-U},
\cite[Corollary 11.1.5]{stroock2007multidimensional},
$(\sgP_t)_{t\geq 0}$ is Feller continuous, which implies that for all $t>0$, if $(x_n)_{n\in\nset}$ is a sequence in $\rset^d$ converging to $x\in\rset^d$, then $\delta_{x_n} \sgP_t$ weakly converges to $\delta_x \sgP_t$.
Therefore, for all $t>0$ and $K>0$, $x\mapsto \sgP_t (\tf \vee (-K) \wedge K)(x)$ is continuous.
By Cauchy-Schwarz and Markov's inequalities, for all $t, K>0$ and $x\in\rset^d$, we have
\begin{align*}
  \absolute{\sgP_t (\tf \vee (-K) \wedge K)(x) - \sgP_t f(x)} & \leq \sgP_t (\absolute{f} \1\defEns{\absolute{f}\geq K})(x) \\
   & \leq \sgP_t f^2 (x) / K
\end{align*}
By \Cref{prop:existence_ergodicity} and the polynomial growth of $f$, we get for all $R>0$, 
\[
\lim_{K\to\plusinfty} \sup_{\norm{x}\leq R} \absolute{\sgP_t (\tf \vee (-K) \wedge K)(x)-\sgP_t(\tf)(x)} =0 
\] 
and therefore $x \mapsto \sgP_t \tfb (x)$ is continuous for all $t>0$.

By \eqref{eq:drift-Vp-0} and \cite[Theorem 3.10, Section 4.1]{DOUC2009897},
 there exist $C,\varsigma>0$ and $p\in\nset$ such that for all $x\in\rset^d$ and $N>0$,
\[ \int_N^{\plusinfty} \absolute{\sgP_t \tfb(x)} \rmd t \leq C \parenthese{1+\norm[p]{x}} N^{-\varsigma} \eqsp. \]
Therefore, we may define $\sPoi(x) = \int_0^{\plusinfty} \sgP_t \tfb(x) \rmd t $ for all $x\in\rset^d$.
Denote by $\sPoiN = \int_0^N \sgP_t \tfb(x) \rmd t$ for all $N>0$ and $x\in\rset^d$. We have $\lim_{N\to\plusinfty} \sPoiN(x) = \sPoi(x)$ locally uniformly in $x$ and by continuity of $\sPoiN$ for all $N>0$, $\sPoi\in\Cpoly(\rset^d,\rset)$.

Let $x\in\rset^d$ and consider the Dirichlet problem,
\begin{equation*}
\generator \sPoihat(y) = -\tfb(y) \quad \text{for} \quad y\in\boule{x}{1} \quad \text{and} \quad
\sPoihat(y) = \sPoi(y) \quad\text{for}\quad y\in\partial\boule{x}{1} \eqsp,
\end{equation*}
where $\partial \boule{x}{1} = \boulefermee{x}{1} \setminus \boule{x}{1}$.
By \cite[Lemma 6.10, Theorem 6.17]{gilbarg2015elliptic}, there exists a solution $\sPoihat\in\Csetfunction^4(\boule{x}{1},\rset) \cap \Csetfunction(\boulefermee{x}{1}, \rset)$.
Let $\tilde{x}\in\boulefermee{x}{1/2}$.
By \Cref{assum:ergodicity-U}, \eqref{eq:langevin} has a unique strong solution denoted $(\XL^{\tilde{x}}_t)_{t\geq 0}$ starting at $\XL_0 = \tilde{x}$. Define the stopping time $\tau = \inf\defEns{t\geq 0 : \XL^{\tilde{x}}_t \notin \boule{x}{1}}$.
By \cite[Volume I, Chapter 6, Theorem 5.1]{friedman2012stochastic}, we have
\begin{equation*}
\sPoihat(\tilde{x}) = \expe{\sPoi(\XL^{\tilde{x}}_{\tau})} + \expe{\int_0^{\tau} \tfb(\XL^{\tilde{x}}_t) \rmd t} \eqsp.
\end{equation*}
For all $N>0$, we decompose $\sPoiN(\tilde{x}) = A_N + B_N$ where
\begin{equation*}
  A_N = \int_0^N \expe{\tfb(\XL^{\tilde{x}}_t) \1\defEns{t\leq \tau}} \rmd t \quad , \quad
  B_N = \int_0^N \expe{\tfb(\XL^{\tilde{x}}_t) \1\defEns{t>\tau}} \rmd t \eqsp.
\end{equation*}
Since $\expe{\tau}<\plusinfty$ by \cite[Volume I, Chapter 6, equation (5.11)]{friedman2012stochastic},
\begin{equation*}
  \expe{\int_0^{\plusinfty} \absolute{\tfb(\XL^{\tilde{x}}_t)} \1\defEns{t \leq \tau} \rmd t} < \plusinfty \eqsp,
\end{equation*}
and by Fubini's theorem and the dominated convergence theorem, $\lim_{N\to\plusinfty} A_N = \expe{\int_0^\tau \tfb(\XL^{\tilde{x}}_t) \rmd t}$. We also have
\begin{equation*}
  B_N = \expe{\int_0^{(N-\tau)_+} \tfb(\XL^{\tilde{x}}_{\tau+t}) \rmd t}
   = \expe{\sPoi_{(N-\tau)_+} (\XL^{\tilde{x}}_\tau)} \eqsp.
\end{equation*}
Since $\expe{\tau}<\plusinfty$, we have $\lim_{N\to\plusinfty} \sPoi_{(N-\tau)_+} (\XL^{\tilde{x}}_\tau) = \sPoi(\XL^{\tilde{x}}_\tau)$ almost surely. Besides, there exist $C,p>0$ such that $\sPoi_{\tilde{N}}(\XL^{\tilde{x}}_\tau) \leq C(1+\norm[p]{x})$ almost surely and for all $\tilde{N}\geq 0$ because $\XL^{\tilde{x}}_\tau \in \boulefermee{x}{1}$ and $\sPoi_{\tilde{N}}$ converges locally uniformly to $\sPoi$.
By the dominated convergence theorem, we get $\lim_{N\to\plusinfty} B_N = \expe{\sPoi(\XL^{\tilde{x}}_\tau)}$. Taking the limit $\N\to\plusinfty$ of $\sPoiN(\tilde{x}) = A_N + B_N$, we obtain $\sPoi(\tilde{x}) = \sPoihat(\tilde{x})$.

Finally, by \cite[Problem 6.1 (a)]{gilbarg2015elliptic}, we obtain $\norm{\DD^i \sPoi}\in\Cpoly(\rset^d,\rset_+)$ for $i\in\defEns{0,\ldots,4}$ which concludes the proof.

\section{Badly conditioned multivariate Gaussian variable}
\label{sec:suppl-badly-conditioned-gaussian}

In this example, we consider a badly conditioned multivariate Gaussian variable in dimension $d=100$, of mean $0$ and covariance matrix $ \diag(10^{-5}, 1 \ldots, 1)$. We run $100$ independent simulations of ULA and TULAc, starting at $0$, with a step size $\gamma\in\defEns{10^{-3}, 10^{-2}, 10^{-1}}$ and a number of iterations equal to $10^6$. ULA diverges for all step sizes. We plot the boxplots of the errors for TULAc, for the first and second moment of the first and last coordinate in \Cref{figure:bad-gaus-tulac}. Although the results for the first coordinate are expectedly inaccurate, the results for the last coordinate are valid. In this context, TULAc enables to obtain relevant results for the well-conditioned coordinates within a relatively small number of iterations, which is not possible using ULA.

\begin{figure}
\hspace*{-2cm}\includegraphics[scale=0.5]{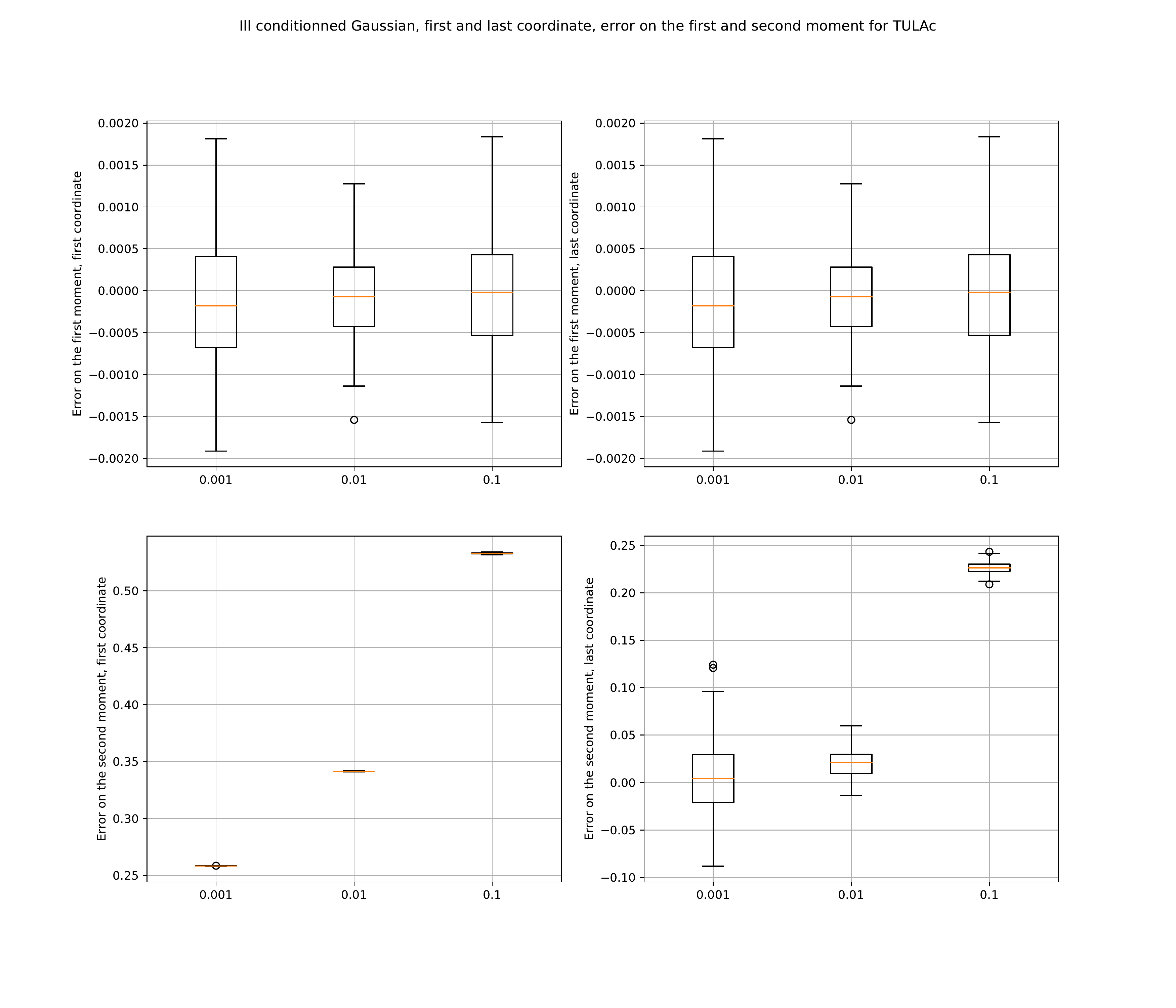}
\caption{\label{figure:bad-gaus-tulac} Boxplots of the error for TULAc on the first and second moments for the badly conditioned Gaussian variable in dimension $100$ starting at $0$ for different step sizes.}
\end{figure}

%

\printbibliography

\end{document}